\documentclass[11pt,twoside]{article} 
\usepackage{ifthen}
\usepackage{soul}
\usepackage{fullpage, prettyref}

\usepackage{fancyhdr}
\usepackage{times}
\usepackage{graphicx}
\usepackage{marvosym}
\usepackage{ifthen}
\usepackage{algorithm}
\usepackage{amsmath, amssymb, amsthm}
\usepackage{thmtools}

\usepackage[noend]{algpseudocode}
\algrenewcomment[1]{\(\triangleright\) {\small{\emph{\color{blue} #1}}}}
\algnewcommand{\LineComment}[1]{\State \(\triangleright\) \emph{\color{blue} #1}}
\algnewcommand{\Invariant}[1]{\State \(\triangleright\) \emph{\color{red} #1}}

\usepackage{url}
\usepackage{boxedminipage}
\usepackage{wrapfig}
\usepackage{ifthen}
\usepackage{color,xcolor}
\usepackage{transparent}

\usepackage{varwidth}

\usepackage{a4,geometry}
\usepackage{graphicx}
\usepackage{amsmath,amssymb,amsthm,mathtools}
\usepackage{bm}
\usepackage{xspace}
\usepackage{url}
\usepackage{boxedminipage}
\usepackage{wrapfig}
\usepackage{ifthen}
\usepackage{color}
\usepackage{xcolor}
\usepackage{framed}
\usepackage{mdframed}
\usepackage[colorlinks=true,pdfpagemode=UseNone,urlcolor=blue,linkcolor=blue,citecolor=violet,pdfstartview=FitH]{hyperref}
\usepackage[noabbrev,nameinlink]{cleveref}
\usepackage{fullpage}
\usepackage{enumitem}
\usepackage{tikz}

\newtheorem{theorem}{Theorem}
\newtheorem{result}{Result}
\newtheorem{lemma}{Lemma}
\newtheorem{claim}{Claim}

\newtheorem{definition}{Definition}

\newtheorem{remark}{Remark}

\newtheorem{observation}{Observation}

\newcommand{\ignore}[1]{}

\newcommand{\cE}{\mathcal{E}}

\newcommand{\cI}{\mathcal{I}}
\newcommand{\cJ}{\mathcal{J}}
\newcommand{\cK}{\mathcal{K}}

\newcommand{\cS}{\mathcal{S}}

\newcommand{\cX}{\mathcal{X}}

\newcommand{\calE}{\mathcal{E}}

\newcommand{\calI}{\mathcal{I}}
\newcommand{\calJ}{\mathcal{J}}
\newcommand{\calK}{\mathcal{K}}

\newcommand{\calP}{\mathcal{P}}

\newcommand{\calS}{\mathcal{S}}

\newcommand{\calX}{\mathcal{X}}

\newcommand{\RR}{\mathbb{R}}
\newcommand{\NN}{\mathbb{N}}

\newcommand{\fI}{\mathfrak{I}}

\newcommand{\frakI}{\mathfrak{I}}

\newcommand{\bx}{\mathbf{x}}
\newcommand{\by}{\mathbf{y}}

\newcommand{\ceil}[1]{{\left\lceil{#1}\right\rceil}}
\newcommand{\floor}[1]{{\left\lfloor{#1}\right\rfloor}}

\newcommand{\Ot}{\ensuremath{\widetilde{O}}}

\newcommand{\abs}[1]{\left\lvert #1 \right\rvert}

\newcommand{\etal}{{\it et al.\,}}

\def\bar{\overline}

\def\2plus{{\tt (++)}}
\def\3plus{{\tt (+++)}}
\def\4plus{{\tt (++++)}}
\def\5plus{{\tt (+++++)}}

{\hspace*{\fill}$\Box$\par}
\newlength{\algobox}
\colorlet{shadecolor}{blue!10}

\newenvironment{rslt}{
	\begin{mdframed}[backgroundcolor=gray!20,topline=false,bottomline=false,leftline=false,rightline=false]\begin{result}
		}{
		\end{result}
	\end{mdframed}
}

\def\eq{\leftarrow}

\def\SUM{\mathsf{SUM}}

\def\CUT{\mathsf{CUT}}
\def\CROSS{\mathsf{CROSS}}

\def\ADDITIVE{\mathsf{ADD}}

\usepackage{times}
\usepackage{paralist}

\def\Choi{{\sc VectorRecovery}}
\def\LiaoC{{\sc WtdSpForestGraph}}
\def\EDGE{\mathsf{EDGE^?}}

\makeatletter
\crefalias{ALC@line}{line} 
\makeatother

\crefname{line}{line}{lines}
\Crefname{line}{Line}{Lines}

\title{Query Complexity of Hypergraph Connectivity and Learnability using CUT Oracles}
\author{Deeparnab Chakrabarty\footnote{Dartmouth College, Email: {\tt deeparnab@dartmouth.edu}} \and Hang Liao\footnote{Palo Alto Networks, Email: {\tt hangliao98@gmail.com}. Work done as a graduate student at Dartmouth College.}}
\date{}

\begin{document}
	
	\maketitle
	
	\begin{abstract}%
		 We investigate the power of CUT queries to reveal the structure of unknown hypergraphs. While simple graphs allow for optimal $O(n)$-query connectivity algorithms, hypergraphs face a fundamental identifiability barrier in that distinct hypergraphs can share identical cut-profiles, making exact edge learning impossible in general, a primitive crucial in the graph connectivity algorithms. 
		 
		 We first present a zero-error randomized algorithm that identifies the connected components of any weighted hypergraph using $O(n)$ expected queries, matching the $\Omega(n)$ lower bound. This approach bypasses the reconstruction barrier by introducing the notion of  ``independent families''---vertex subpartitions that do not share hyperedges---and iteratively coarsening them using auxiliary weighted graph connectivity techniques [Liao-Chakrabarty, 2024]. 
		
		Second, we demonstrate that the impossibility of exact learning depends on hyperedge parity. For even-parity hypergraphs, we show that the structure is reconstructible using a M\"obius transform on the CUT function to implement binary-search-style vertex identification. This yields deterministic algorithms for obtaining $k$-connectivity certificates for $r$-bounded even hypergraphs in $\tilde{O}_r(kn)$ queries. Finally, we bypass parity and rank constraints for linear hypergraphs, achieving a subquadratic $\tilde{O}(kn^{1.5})$ query complexity for $k$-connectivity. This significantly improves upon the general $\tilde{O}(n^2)$ bound derived via symmetric submodular function minimization.

\end{abstract}
\thispagestyle{empty}
\setcounter{page}{0}
\newpage
\section{Introduction}\label{sec:intro}

Motivated by the complexity of symmetric submodular function minimization (SFM),  Rubinstein, Schramm, and Weinberg~\cite{RubinsteinSW18} introduced the $\CUT$-query model to study the connectivity of an undirected graph $G=(V,E)$ whose vertex set on $n$ vertices is known but the edge set is unknown. For any subset $S\subseteq V$, $\CUT(S)$ returns the number/weight of edges crossing from $S$ to $V\setminus S$. A simple binary-search style idea allows one to 
sample a random edge incident on a vertex in $O(\log n)$-many queries. This simple but crucial primitive, along with many other ideas, has been key to many recent results~\cite{MukhopadhyayN20,AssadiCK21,LeeMS21,AuzaL21,ApersEGLMN22,ChakrabartyL23,LiaoC24,AnandSW25,KennethK25,KennethK26a,KennethK26b}; for instance, this has culminated in  
the recent zero-error randomized algorithms~\cite{ApersEGLMN22,LiaoC24}
that determine the connected components of a (possibly weighted) graph making $O(n)$ queries in expectation, and this is tight~\cite{AuzaL21}.

In this paper, we initiate the systematic study of {\em hypergraphs} in the $\CUT$-query model: $\CUT(S)$ now returns the number/weight of hyperedges $e\in \cE$ which intersect both $S$ and $V\setminus S$. While this function still remains a symmetric submodular function -- implying an $\widetilde{O}(n^2)$ query algorithm for finding the global minimum cut -- the transition from graphs to hypergraphs creates a fundamental {\em identifiability barrier}.
Unlike the case of graphs, distinct hypergraphs can have identical $\CUT$ profiles. For example, the $K_4$ graph and the hypergraph with all four 3-element subsets of a 4-element set yield the same cuts; this was explicitly noted in~\cite{ChenKN21} as a bottleneck to obtain cut-sparsifiers in hypergraphs using $\CUT$ queries alone. This inability to learn edges is a hindrance in obtaining low query algorithms in hypergraphs.

\paragraph{\bf Connectivity Structure.} While the naive binary-search style $O(n\log n)$-query algorithm (see, e.g, Theorem 5.1 in~\cite{Harvey08}) to determine connected components works even for hypergraphs,
removing the ``$\log n$'' term is a challenging problem due to the inability of recovering edges due to aforementioned identifiability barrier. Indeed, for graphs random sampling of edges is a key step in the recent $O(n)$-query randomized algorithms of~\cite{ApersEGLMN22,LiaoC24}. Whether one can get an $O(n)$-query algorithm was
was explicitly posed by Chakrabarty and Liao~\cite{ChakrabartyL24} who gave an $O(n)$-query algorithm for the special case of learning partitions which are a very special class of hypergraphs (the hypergraph is a ``hypermatching''). Our first result is an affirmative answer to this question.

\begin{rslt}\label{rslt:O(n)}
	There is a polynomial-time zero-error randomized algorithm that finds the connected components of a weighted hypergraph using $O(n)$ $\CUT$-queries in expectation.
\end{rslt}

\noindent
To overcome the identifiability barrier, we introduce the key notion of {\em independent families} -- a sub-partition of the vertex set which do not share hyperedges across.
In particular, the collection of connected components is an independent family and is the object we desire. Our algorithm maintains a collection of these families, and iteratively 
{\em merges} them so that in $O(\log n)$ rounds, one ends up with the single independent family we desire.  This process is done by reducing this process to a {\em spanning forest learning} process in an auxiliary weighted bipartite (simple) graph, for which we can use the recently 
developed $O(n)$-query graph algorithm of~\cite{LiaoC24}. 
This reduction allows us to bypass the direct learning of hyperedges while correctly maintaining connectivity information. 

\paragraph{Exact Learnability and Edge Parity.} Our second set of results show that the identifiability barrier is not universal, but is dictated by the {\em parity of the hyperedges}.
In particular, if a hypergraph $H = (V,\cE)$ has all edges $e\in \cE$ satisfying $|e|$ is an even number (as in graphs, for instance), then $H$ can be learned using $O(\sum_{e\in \cE} 2^{|e|}\log n)$
many $\CUT$ queries. This also holds if the hypergraph is weighted with non-negative weights; see~\Cref{rem:19}.
Our algorithm proceeds by learning edge-by-edge, and therefore one can port the ideas of Nagamochi-Ibaraki~\cite{NagamochiI92} to obtain sparse $k$-certificates -- sub-hypergraphs which preserve cut-sizes up to $k$. 

\begin{rslt}\label{rslt:bnd-even}
	Let $H=(V,\cE)$ be a hypergraph on $n$ vertices and $m$ hyperedges, all of bounded size $|e|\leq r$ and even cardinality.
	Then, we can (i) reconstruct them in $O_r(m\log n)$ many $\CUT$ queries, (ii) construct a sparse $k$-certificate in $O_r(kn\log n)$ many $\CUT$-queries.
	As a corollary, we can determine whether the global or $(s,t)$ minimum cut of a bounded, even hypergraph is at most a constant $k$ or larger, in $O(n\log n)$ queries.
\end{rslt}

\noindent
We obtain our result by considering the M\"obius transform of the $\CUT$-function and using that to perform a binary-search style algorithm to learn new vertices of every edge. Unfortunately, the idea breaks down with odd-cardinality edges; in particular, after learning all but one vertex of an odd-edge, our algorithm is ambiguous about whether the current subset is the edge or whether it has one other vertex. However, this odd-vs-even problem is inherent to any algorithm; in~\Cref{sec:app?}, we show that for any odd number $k$, there exist distinct $k$-uniform hypergraphs which have the same $\CUT$-profile and therefore cannot be learned using $\CUT$ queries alone. Nevertheless, if we augmented our query model with an 
an $\EDGE$ oracle that it returns YES if a given vertex set $X$ is a hyperedge and NO otherwise, then we can resolve the above ambiguity and reconstruct {\em any} bounded hypergraph with an extra $O(mn)$-many $\EDGE$ queries.\smallskip

We also consider the case of {\em linear} hypergraphs where every pair of vertices are together present in at most one hyperedge. This is a natural sub-class of hypergraphs which generalize the notion of a ``simple undirected graph'', and have been studied extensively in extremal combinatorics (eg, see~\cite{GaoC22,JanzerST24,ImL25}).

\begin{rslt}\label{rslt:lin}
	Let $H = (V, \cE)$ be a linear hypergraph on $n$ vertices. If all $|e|\geq 4$, then $H$ can be learned in $O(\sum_{e\in \calE} |e|\log n) = O(n^2\log n)$
	many $\CUT$-queries. Irrespective of the size of $|e|$, a
	sparse $k$-certificate can be constructed using $O(\min(n^2, kn^{1.5})\log n)$ $\CUT$-queries. 
		As a corollary, we can determine whether the global or $(s,t)$ minimum cut of a linear hypergraph is at most a constant $k$ or larger, within $O(n^{1.5}\log n)$ queries.
\end{rslt}
Linearity allows us to resolve the aforementioned ambiguities in hyperedge reconstruction. For instance, if both $\{x,y,z\}$ and $\{x,y,z'\}$ appear in cuts, linearity ensures that they are part of the same hyperedge. This observation helps us bypass the need for even cardinality or bounded rank.
There is one caveat of $3$-hyperedges: there exist two distinct linear hypergraphs (see~\Cref{sec:app?}) 
	with $3$-hyperdges which have the same $\CUT$-query profile.
Fortunately, for the purpose of sparse $k$-certificates, we can replace 3-hyperedges with cut-equivalent sets of 2-edges, thus preserving correctness for our goal.

\subsection{Related Works}
Starting with the work of Rubinstein~\etal~\cite{RubinsteinSW18}, there has been significant interest in $\CUT$-query models for graphs. They described a randomized $\Ot(n)$-query algorithm for finding global minimum cuts and an $\Ot(n^{5/3})$-query algorithm for $s,t$-minimum cuts in {\em unweighted} graphs. Mukhopadhyay and Nanongkai~\cite{MukhopadhyayN20} later generalized Karger's near-linear time algorithm to provide a randomized $\Ot(n)$-query algorithm for global minimum cuts in {\em weighted} graphs. Apers~\etal~\cite{ApersEGLMN22} subsequently removed the $\log n$ factor for the unweighted case using random contraction ideas, though it remains open whether this factor can be removed for weighted graphs. 

For $s,t$-minimum cuts, the complexity has been iteratively improved from~\cite{RubinsteinSW18} to a deterministic $\Ot(n^{5/3})$ bound in unweighted graphs~\cite{AnandSW25}, and a randomized $\Ot(n^{8/5})$ bound~\cite{JiangNS26}. Concurrently, Kenneth-Mordoch and Krauthgamer~\cite{KennethK26a} study the all-pairs $s,t$-minimum cut complexity giving a $\Ot(n^{7/4})$-query algorithm, which they recently improved~\cite{KennethK26b} to $\Ot(n^{3/2})$. However, these methods fail in weighted graphs, where solving the $s,t$-minimum cut problem in $O(n^{2-c})$ queries for any $c > 0$ remains an outstanding challenge.

A primary motivation for these problems is understanding the complexity of symmetric submodular function minimization (SFM). Queyranne~\cite{Queyranne98} initially designed an $O(n^3)$ combinatorial algorithm for finding non-trivial minimizers. Using the ``isolating cuts" framework~\cite{LiP20}, Chekuri and Quanrud~\cite{ChekuriQ21} later reduced symmetric SFM to standard SFM. Combined with Jiang's $O(n^2 \log n)$ SFM algorithm~\cite{Jiang22}, this implies an $\Ot(n^2)$ bound for the symmetric case. Regarding lower bounds, while deterministic SFM requires $\Omega(n \log n)$ queries~\cite{ChakrabartyGJS22}, no $\omega(n)$ lower bound is currently known for symmetric SFM.

Recent literature has also focused on hypergraph sparsification~\cite{KoganK15, ChenKN20, KennethK23,Quanrud24,KhannaPS24}. While standard models assume a list of edges, Chen~\etal~\cite{ChenKN21} studied sparsification via $\CUT$ queries and established impossibility results due to the identifiability barriers we address. Our findings regarding even-parity hypergraphs suggest these questions warrant revisiting. Another aspect of study~\cite{AssadiCK21,KennethK25} has been the ``rounds-of-adaptivity" in these algorithms—analogous to parallel depth of algorithms. While SFM requires $\widetilde{\Omega}(n)$ rounds~\cite{ChakrabartyGJS22}, our $O(n)$-query connectivity algorithm can be implemented in polylogarithmic rounds, though the specific rounds-versus-query trade-off for hypergraph cuts is not well understood.

Finally, hypergraphs have been studied via $\ADDITIVE$ queries, which return the number of hyperedges within a subset. This model allows learning $r$-bounded hypergraphs in $O(r \log n)$ queries per edge~\cite{AngluinC05}, a task where $\CUT$ queries fail. While $\ADDITIVE$ queries can simulate $\CUT$ queries, the reverse is impossible; we suspect an exponential lower bound for such a simulation even allowing $\EDGE$ queries.

\section{Algorithm for Finding Connected Components}\label{sec:O(n)}

\noindent
Given a hypergraph $H = (V,\calE)$, we say two vertices $u$ and $v$ are connected if there is a sequence $(u = x_0, e_0, x_1, e_1, \cdots, e_t, x_t = v)$ such that 
each $x_i\in V$ and each $e_i \in \calE$, $x_0\in e_0$, and $\{x_i, x_{i+1}\}\subseteq e_i$ for $0\leq i< t$. We call such a sequence a (hyper-)walk.
Connectivity is easily seen to be a symmetric and transitive relationship.
The equivalence classes of $V$ are the connected components of $H$. 
In this section we describe a zero-error randomized algorithm which returns the connected components of an unknown hypergraph $H = (V,\cE)$ making $O(n)$-many $\CUT$ queries in expectation. 
Given the lower bound~\cite{AuzaL21} of $\Omega(n)$ for zero-error algorithms even for graphs, this resolves the complexity of this question up to constant factors.

We start with a few definitions.
Given a subset $A \subseteq V$ of vertices, we define $H[A]$, the induced subhypergraph, as $(A, \calE[A])$ where $\calE[A] := \{\emptyset \neq e\cap A~:~e\in \calE\}$. 
If $e$ has a weight $w(e)$, then $e\cap A$ inherits the same weight; $H[A]$ may have parallel copies of the same edge. We call $A$ a {\bf supervertex} if every 
pair of vertices in $A$ is connected in $H[A]$. 

We say a hyperedge $e$ {\bf touches} a supervertex $A$ if $e\cap A \neq \emptyset$. A family $\cI$ of supervertices is an {\bf independent family} 
if no hyperedge $e\in \calE$ touches two or more $A\in \cI$. We let $V(\calI) := \bigcup_{A\in \calI} A$ be the vertices of $V$ involved in $\cI$. 
Observe that $\calI$ is precisely the connected components of $H[V(\calI)]$. To see this, first note that any $A\in \cI$ is connected by definition. 
Secondly, if there is a walk from $a\in A$ to $b\in B$ in $H[V(\cI)]$, then there must exist an edge $e\in \calE[V(\cI)]$ of this walk 
which intersects two different $A'$ and $B'$ in $\cI$, contradicting the definition of independence.
Note, though, that $a$ and $b$ may be connected in $H$, however, such a path would have to be via vertices in $V\setminus V(\cI)$. \smallskip

\noindent
We can now describe our algorithm in a nutshell. We maintain a {\em collection} $\frakI := (\calI_1, \calI_2, \ldots, \calI_t)$ of independent families. 
Initially, $t = n$, and each $\calI_j$ contains a single, trivial supervertex $\{v_j\}$. Our algorithm proceeds in iterations. In each iteration, we pick two independent families 
$\cI_1$ and $\cI_2$ of roughly equal cardinality, and then {\bf merge} them to obtain a new independent set $\cI_3$ with $V(\cI_3) = V(\cI_1)\cup V(\cI_2)$. 
We delete $\cI_1$ and $\cI_2$ from $\fI$ and 
add $\cI_3$ to it. The merge step needs to find the connected components of $H[V(\calI_1) \cup V(\calI_2)]$ and is the non-trivial portion of the algorithm which we discuss subsequently. 
The algorithm terminates when $\fI$ has a single independent set $\cI$ with $V(\cI) = V$, and by design, this will be the connected components of $H$. 

\subsection{Subroutines}

We will need two subroutines from previous works, and we note these down precisely. The first is reconstructing an unknown non-negative real vector $\bx \in \mathbb{R}^n_{\ge 0}$ with $\SUM$ query access. In this model, given any subset $S\subseteq [n]$, we can get $\bx(S) := \sum_{i\in S} \bx_i$.

\begin{lemma}[zero-error and $d$-oblivious version of Theorem 2 of \cite{Choi13}]\label{lem:choi}
	 There is a randomized zero-error algorithm \Choi which, given $\SUM$-query access to an unknown $\bx \in \mathbb{R}^n_{\ge 0}$ with at most $d$ non-zero entries, reconstructs $\bx$ using $O\left(\frac{d \log n}{\log d}\right)$ queries in expectation. This algorithm doesn't need to know $d$, and runs in polynomial time.
\end{lemma}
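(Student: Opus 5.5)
Lemma~\ref{lem:choi} is a modification of Theorem~2 of~\cite{Choi13}, so the plan is to treat that theorem as a black box and add the two missing features: zero error and obliviousness to $d$. Choi's theorem gives a randomized algorithm that, knowing $d$, makes $O(d\log n/\log d)$ non-adaptive $\SUM$ queries on random subsets and recovers any non-negative $\bx$ with at most $d$ nonzeros with probability at least $1-1/\mathrm{poly}(n)$, in polynomial time. The approach is guess-and-verify with a doubling schedule for $d$.

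\textbf{Verification.} This is the main obstacle, since the output $\hat\bx$ must be correct with certainty and each check may only use $O(d\log n/\log d)$ queries in expectation. The plan is to exploit non-negativity. First query $\bx([n])$. If $\hat\bx$ is a candidate with support $T$, also query $\bx(T)$ and, for each $i\in T$, the singleton $\bx_i$, which costs $|T|+1$ queries. If $\bx([n])=\bx(T)$, then non-negativity forces $\bx$ to vanish outside $T$. Together with the singleton values, this certifies $\hat\bx=\bx$ exactly. We only run verification when $|T|\le d$ for the current guess $d$, so it costs $O(d)$ queries, which is dominated by $O(d\log n/\log d)$.

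\textbf{Doubling.} Run Choi's algorithm with guesses $d_j=2^j$ for $j=0,1,2,\dots$. At each guess, discard any output with more than $d_j$ nonzeros and otherwise verify it. Stop at the first verified candidate; the output is then always correct, so the algorithm is zero-error. Let $d^\ast$ be the true support size. For any guess $d_j\ge d^\ast$, the run succeeds with probability at least $1/2$ (in fact $1-o(1)$), independently across guesses. The cost of guess $j$ is $O(2^j\log n/j)$, which grows geometrically in $j$ up to lower-order factors. The rounds below $d^\ast$ therefore total $O(d^\ast\log n/\log d^\ast)$. Beyond $d^\ast$, reaching round $j$ happens with probability at most $2^{-(j-j^\ast)}$ (or smaller), while that round costs at most $2^{j-j^\ast}$ times the base cost. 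With failure probability $1/\mathrm{poly}(n)$ this sum converges. To handle $j$ growing up to $\log n$ robustly, the plan is to cap at $d=n$, where one can simply query all $n$ singletons deterministically. Then the expectation is $O(d^\ast\log n/\log d^\ast)$, using that the per-round success probability is at least $3/4$, so that $\sum_j (2/4)^{j-j^\ast}$ is constant. The edge case $d^\ast\in\{0,1\}$ is covered by the $\bx([n])$ query plus binary search, with $\log d$ read as $\max(1,\log d)$.

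Polynomial running time follows because Choi's decoder is polynomial and there are $O(\log n)$ rounds.
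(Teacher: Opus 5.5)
Your argument is correct and follows the route the paper sketches: treat Choi's Monte Carlo algorithm as a black box, and remove the need to know $d$ by guess-and-check over powers of two. The one variation is in the zero-error step. You certify each candidate directly via non-negativity ($\bx([n])=\bx(T)$ plus the $|T|$ singleton queries) and retry at the next guess, whereas the paper reverts to an $O(d\log n)$ binary-search algorithm when Choi's run fails. Also, the paper states Choi's error as $1/d^c$ rather than $1/\mathrm{poly}(n)$, but your final analysis only needs success probability at least $3/4$ for $d_j\ge 2$, which $1/d^c$ already gives.
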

\noindent
Choi's paper~\cite{Choi13} states a Monte-Carlo version with error probability $1/d^c$ for a large constant $c$, and this can be made into a zero-error algorithm by reverting to a $O(d\log n)$
binary-search style algorithm if we overshoot the budget. The $d$-oblivious case follows using a guess-and-check in powers-of-$2$ idea; an explicit reference is present in~\cite{ChakrabartyL23}.

The next subroutine involves reconstructing {\em spanning} forests of {\em weighted} graphs (that is, hypergraphs with all $|e|=2$) using $\CUT$ queries. 

\begin{lemma}[paraphrasing Theorem 1, \cite{LiaoC24}]\label{lem:liaoc}
	There is an adaptive, polynomial time, randomized zero-error algorithm \LiaoC, which given $\CUT$-query access to an $n$-vertex, weighted graph $G=(V,E)$,
	finds a spanning forest of $G$ making $O(n)$-many $\CUT$ queries in expectation.
\end{lemma}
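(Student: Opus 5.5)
Since this lemma is imported from \cite{LiaoC24}, the plan below is only a sketch of how I would rebuild the argument. The first step is the standard reduction from $\CUT$ queries to $\SUM$ queries. For disjoint $A,B\subseteq V$, the total weight between them is
\[
w(A,B)=\tfrac12\bigl(\CUT(A)+\CUT(B)-\CUT(A\cup B)\bigr),
\]
which costs $O(1)$ queries. Now fix a vertex $v$ and a partition $\mathcal{C}=\{C_1,\dots,C_k\}$ of some set $U\not\ni v$. Define the vector $\bx\in\RR^k_{\ge 0}$ by $\bx_j=w(v,C_j)$. For any index set $S$ we have $\bx(S)=w(\{v\},\bigcup_{j\in S}C_j)$, so every $\SUM$ query on $\bx$ is simulated by $O(1)$ $\CUT$ queries. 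By \Cref{lem:choi}, \Choi{} then reveals exactly which components $v$ is adjacent to, together with the weights. If $v$ touches $d$ of them, this costs $O(d\log k/\log d)$ queries in expectation, with zero error, and without knowing $d$ in advance. To turn this into forest edges, I would binary-search inside each revealed $C_j$ for an actual neighbour of $v$, or better, maintain the forest only at the level of components and defer this step.

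The skeleton of the algorithm is an incremental Borůvka/union-find process. Insert the vertices in a uniformly random order $v_1,\dots,v_n$. Keep the connected components $\mathcal{C}^{(i)}$ of $G[\{v_1,\dots,v_i\}]$ together with a spanning forest of each. When $v_{i+1}$ arrives, first spend one query computing $w(v_{i+1},\{v_1,\dots,v_i\})$. If this is zero, $v_{i+1}$ is a new singleton component at cost $O(1)$. Otherwise, run \Choi{} on the vector indexed by $\mathcal{C}^{(i)}$, merge the $d_{i+1}$ components it finds, and add one forest edge per merged component. The useful accounting fact is that each such step lowers the eventual component count by $d_{i+1}-1$, so $\sum_i (d_i-1)\le n$. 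The steps with large $d$ are therefore cheap in total: $d\log n/\log d$ is $O(d)$ once $d\ge n^{\Omega(1)}$, and intermediate values can be charged against $d-1$.

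The main obstacle is the vertices with $d_{i+1}$ small, and especially $d_{i+1}=1$. Such a vertex joins one existing component, but naively identifying that component costs $\Theta(\log k)$, which would sum to $O(n\log n)$. My first attempt would be to batch these vertices. Process the arrivals in blocks whose size doubles, and for a whole block $B$ run \Choi{} on the vector indexed by pairs (block-part, old component). A few random linear hashes of $B$ into groups would let one \Choi{} call of sparsity $d\approx|B|$ identify the attachment of all $|B|$ vertices at $O(1)$ amortised cost each, since $\log n/\log|B|=O(1)$ once $|B|\ge n^{\epsilon}$.

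Two points would remain delicate. The first is collisions, where two block vertices land in the same group, and the edges inside $B$ itself; for these I would recurse on the smaller residual set and show that the residue shrinks geometrically. The second is handling weights so that cancellations cannot occur; this is where non-negativity of the weights and the zero-error guarantee of \Choi{} are used. If the batching accounting works out, the total expected cost is $O(n)$. If it does not, I would fall back on the finer random-contraction analysis of \cite{LiaoC24}, since the lemma is in any case quoted from there.
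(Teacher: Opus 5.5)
The paper gives no proof of this lemma: it is imported from \cite{LiaoC24}. Your closing fallback (cite it) is therefore exactly the paper's justification. The reconstruction you sketch around the citation has genuine gaps, and on its own it yields only the routine $O(n\log n)$ bound.

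First, the claim that ``intermediate values can be charged against $d-1$'' is false. For $2\le d\le n^{o(1)}$ the cost $\Theta(d\log n/\log d)$ of \Choi{} is $\omega(d-1)$. Already for $d=2$, \Choi{} must locate two nonzero coordinates among $k=\Theta(n)$ using answers that take at most four distinct values. That needs $\Omega(\log n)$ queries against a drop of $1$. Disjoint paths $a$--$c$--$b$ in which $c$ arrives last force $\Theta(n\log n)$ this way. So the obstruction is every arrival with $d=n^{o(1)}$, not only $d=1$.

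Second, deferring the conversion of component-level merges into actual forest edges does not make it cheaper. Each merge still needs an endpoint $u\in C_j$. If $v$ has a unique neighbour $u$ in $C_j$, every single-row query $w(\{v\},Q)$ with $Q\subseteq C_j$ returns $0$ or $w(v,u)$, i.e.\ one bit. This costs $\Omega(\log|C_j|)$ per forest edge and $\Theta(n\log n)$ overall in bad cases.

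Third, the batching step, which is where the logarithm would have to disappear, does not work as described. \Choi{} needs $\SUM$ queries on \emph{arbitrary} subsets of coordinates. For a vector indexed by pairs $(b,C)$ with $b\in B$ and $C$ an old component, $\CUT$ queries only give sums over product sets, $\sum_{b\in P,\,C\in\mathcal{Q}}w(b,C)=w(P,\bigcup\mathcal{Q})$. On $B\times(\text{old vertices})$ the crossing pattern of a single cut has rank at most $2$. A generic set of pairs has rank up to $\min(|B|,k)$, so in general one $\SUM$ query costs $\Omega(\min(|B|,k))$ $\CUT$ queries.

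Hashing $B$ into $g$ groups only turns this into up to $g$ rectangles per $\SUM$ query. A positive (group, component) entry neither identifies the attaching vertex nor lets you merge, since a group is not connected, so you are back to binary search. Even with a rectangle-compatible reconstruction primitive, the cost would scale with all block--component adjacencies plus the edges inside $B$, possibly $\Theta(|B|^2)$. Nothing in your argument bounds these by the drop in the number of components. What is needed is to learn only a \emph{spanning forest} of the contracted graph on $B\cup\mathcal{C}$ at $O(1)$ amortised cost per merge, and that is precisely the content of \cite{LiaoC24}.

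Several further steps are asserted but never argued: the geometric shrinking of the residue, and the role of the random insertion order. You also cannot borrow the paper's thick/thin accounting for \Cref{alg:main}, since {\sc Merge} itself calls \LiaoC{}. So the proposal is sound only insofar as it cites \cite{LiaoC24}; the reconstruction around the citation does not prove the lemma.
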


\subsection{Merging Independent Families via Graph Learning}

We now describe the merge procedure which takes two independent families, $\cI = (I_1, \cdots, I_s)$ and $\cJ = (J_1, \cdots, J_t)$, and returns
$\cK = (K_1, \ldots, K_k)$, where $\cK$ are the connected components of $H[\bigcup_{i=1}^s I_i \cup \bigcup_{j=1}^t J_j]$. 

To this end, we define  the {\bf cross graph} induced by $\cI$ and $\cJ$. 
The cross graph $G := G[\cI, \cJ]$ is a {\em weighted} bipartite graph. The vertex set is $\{i_1, \ldots, i_s\} \cup \{j_1, \ldots, j_t\}$ corresponding
to the supervertices in $\calI$ and $\calJ$. Below we use $[s]$ to denote $\{1,2,\ldots, s\}$ and $[t]$ to denote $\{1,2,\ldots, t\}$.
The edge set $F$ is constructed as follows.
We go over all hyperedges $e\in E(H)$, and 
\begin{asparaitem}
	\item if $e$ touches $I_a$ and $J_b$ {\em and} $e\subseteq I_a \cup J_b$, then add $(i_a,j_b)$ to $F$ with weight $w(e)$.
	\item if $e$ touches $I_a$ and $J_b$ {\em and} $e\not\subseteq I_a \cup J_b$, then add $(i_a,j_b)$ to $F$ with weight $w(e)/2$.
\end{asparaitem}
Crucially note that an edge $e$ cannot touch $I_a$ and $I_{a'}$ with $a\neq a'$ since $\calI$ is an independent family. Similarly, $e$ cannot touch $J_b$ and $J_{b'}$
with $b\neq b'$. It could happen that $e$ touches $I_a$ but not $J_b$; in that case we do not add any new edge to $F$. Note that the constructed bipartite graph is {\em weighted} even if the hypergraph is unweighted.

For $a\in [s]$, we let $\deg_G(i_a)$ denote the {\em weight} of edges in $F$ incident on $i_a$. If $\deg_G(i_a) > 0$, we call $I_a$ a {\bf participating supervertex}.
Similarly for $b\in [t]$. We let $G_+ \subseteq G$ denote the graph induced by participating supervertices; note that $G$ and $G_+$ have the same edges $F$. \smallskip

\noindent
The main reason to use the cross graph is that it captures the connectivity information of $H[V(\cI) \cup V(\cJ)]$ as is encapsulated in the following lemma.
\begin{lemma}\label{lem:connG}
	Given $\cI$ and $\cJ$ and the corresponding cross graph $G$ defined above, let $C_1, \cdots, C_k$ be the connected components of $G$.
	Define \[	\cK := \Big\{ ~\bigcup_{i_a \in C_\ell} I_a \cup \bigcup_{j_b\in C_\ell} J_b~~~:~\ell\in [k] ~\Big\}\]
	Then $\cK$ is independent and is precisely the connected components of $H[V(\cI) \cup V(\cJ)]$.
\end{lemma}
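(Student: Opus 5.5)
Let $W := V(\cI)\cup V(\cJ)$. Since the connected components of $H[W]$ partition $W$, and the sets in $\cK$ also partition $W$ (the supervertices of $\cI$ and $\cJ$ are pairwise disjoint: two supervertices of the same family are disjoint, and I will take $V(\cI)\cap V(\cJ)=\emptyset$ as given in the merge setting), it suffices to prove two things. First, every $K_\ell\in\cK$ is connected in $H[W]$. Second, no hyperedge of $H[W]$, i.e.\ no $e\cap W$ with $e\in\cE$, touches two distinct members of $\cK$. The second property gives independence of $\cK$ directly. The two together show that $\cK$ is exactly the set of components, by the observation made earlier that an independent family of supervertices is precisely the component set of the hypergraph induced on its vertices. (Note that $H[K_\ell]$ is connected if $K_\ell$ is connected in $H[W]$ and no edge leaves $K_\ell$ inside $W$, so each $K_\ell$ is indeed a supervertex.)

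\textbf{Connectivity.} Each $K_\ell$ is the union of the supervertices whose nodes lie in the component $C_\ell$ of $G$. Each such supervertex $I_a$ is connected in $H[I_a]$, and hence also in $H[W]$, since $H[W]$ restricted to $I_a$ contains $H[I_a]$. Now take an edge $(i_a,j_b)\in F$. It comes from some hyperedge $e$ touching both $I_a$ and $J_b$. The set $e\cap W$ is then a hyperedge of $H[W]$ containing a vertex of $I_a$ and a vertex of $J_b$, so it links these two supervertices. Following a path in $C_\ell$ therefore chains the supervertices together, and all of $K_\ell$ is connected in $H[W]$. An isolated node of $G$ gives a single supervertex, which is connected by definition.

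\textbf{Independence.} Suppose some $e\in\cE$ touches $K_\ell$ and $K_{\ell'}$ with $\ell\neq\ell'$. Then $e$ touches some supervertex $X\subseteq K_\ell$ and some $Y\subseteq K_{\ell'}$, where $X,Y\in\cI\cup\cJ$ and $X\neq Y$. Since $\cI$ is independent, $X$ and $Y$ cannot both lie in $\cI$, and likewise not both in $\cJ$. So, up to renaming, $X=I_a$ and $Y=J_b$. By construction, $e$ then contributes an edge $(i_a,j_b)$ with positive weight to $F$ (weight $w(e)$ or $w(e)/2$; here I assume positive weights, or discard zero-weight edges). Hence $i_a$ and $j_b$ lie in the same component of $G$, contradicting $\ell\neq\ell'$.

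The only delicate point, and really the main content, is this case analysis in the independence step. It is exactly where the independence of $\cI$ and of $\cJ$ is used: it guarantees that any hyperedge crossing two parts of $\cK$ must cross between an $\cI$-supervertex and a $\cJ$-supervertex, and therefore appears in $F$. Everything else is bookkeeping.
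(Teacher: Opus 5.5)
Your proof is correct and follows essentially the same route as the paper. Independence comes from observing that a hyperedge crossing two parts of $\cK$ must, by independence of $\cI$ and $\cJ$, touch some $I_a$ and $J_b$ and hence create the edge $(i_a,j_b)$ in $G$; connectivity comes from chaining supervertices along a path in $G$, and your extra remarks (disjointness of $V(\cI)$ and $V(\cJ)$, each $K_\ell$ being a genuine supervertex, positive weights) only make explicit what the paper leaves implicit.
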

\begin{proof}
	For brevity let $U := V(\cI) \cup V(\cJ)$. By design, $\cK$ is a partition of $U$; let us call this $(K_1, \ldots, K_k)$.
	To see that $\cK$ is independent, fix any $e\in \calE(H)$ and suppose $e$ touches $K_p$ and $K_q$. Since $\cI$ and $\cJ$ are independent, 
	$e$ must touch some $I_a$ and $J_b$, where $i_a\in C_p$ and $j_b\in C_{q}$ with $p \neq q$. But this contradicts 
	$C_p$ and $C_{q}$ were different connected components in $G$ (since the $(i_a,j_b)$ edge is present in $G$).

	All that remains is to show a single $K\in \cK$ is connected. Fix two vertices $x$ and $y$ in $K$, and let $x\in I_a$ and $y \in J_b$ ($y$ could also be in some $I_b$; the same argument would work).
	Since $i_a$ and $j_b$ are in the same $C_\ell$ of $G$, there is a path $(i_a = u_0, u_1, \cdots, u_r = j_b)$ in $G$. Each $(u_p, u_{p+1})$ edge in $G$ corresponds to an edge $e\in E$ which intersects
	two supervertices, one in $\cI$ and another in $\cJ$. Using these hyperedges, and the fact that vertices in a supervertex are connected, we can show that 
	$x$ and $y$ are connected in $H[V(\cI) \cup V(\cJ)]$. \qedhere
\end{proof}

The above lemma reduces the problem of merging to find the connected components of a weighted bipartite {\em graph}. This problem has been studied by~\cite{LiaoC24} and we will use that. 
However, we do need to ensure that we can simulate $\CUT$-queries ``cross queries'' on $G$ using cut queries on the hypergraph $H$. 
We do this next. To simplify matters, we first note that in a bipartite graph $G$ with $L\cup R$ being the bipartition, a $\CUT(S)$-query is equivalent (up to constant factors)
to a ``cross query'' defined as follows: given $A\subseteq L$ and $B\subseteq R$, $\CROSS(A,B)$ denotes the total weight of edges with one endpoint in $A$ and the other in $B$.
Given access to $\CROSS$ queries, we can evaluate cut queries since $\CUT(S) = \CROSS(S\cap L, B) + \CROSS(A, S\cap R) - \CROSS(S\cap L, S\cap R)$.
We now show how to simulate $\CROSS$ queries in the weighted bipartite graph $G$ using $\CUT$ queries on the hypergraph $H$.

\begin{lemma}\label{lem:simul}
	Let $A\subseteq \{i_1, \ldots, i_s\}$ and $B\subseteq \{j_1, \ldots, j_t\}$. Recall,  $\CROSS(A, B)$ denote the total weight of edges $(i_a, j_b)\in F$ with $i_a\in A$ and $j_b\in B$. 
	Let $P := \bigcup_{i_a\in A} I_a$ and $Q := \bigcup_{j_b\in B} J_b$.
	Then,
		\[
			\CROSS(A,B) = \frac{1}{2}(\CUT_H(P) + \CUT_H(Q) - \CUT_H(P \cup Q))
		\]
\end{lemma}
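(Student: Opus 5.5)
The identity is additive over hyperedges: both sides are sums over $e\in\calE$ of $w(e)$ times a per-edge quantity. So I will fix one hyperedge $e$ and show that its contribution to $\frac12(\CUT_H(P)+\CUT_H(Q)-\CUT_H(P\cup Q))$ equals the weight it contributes to $\CROSS(A,B)$. Write $\chi_X(e)=1$ if $e$ crosses $X$, that is, $e\cap X\neq\emptyset$ and $e\not\subseteq X$. The per-edge quantity is then $\frac12(\chi_P(e)+\chi_Q(e)-\chi_{P\cup Q}(e))$. Since $P\cap Q=\emptyset$ (the vertex sets of $\cI$ and $\cJ$ are disjoint, as they are merged families partitioning $V(\cI)\cup V(\cJ)$), we get a clean case analysis.

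\textbf{Case 1: $e$ misses $P$ or misses $Q$.} Say $e\cap Q=\emptyset$. Then $\chi_Q(e)=0$ and $\chi_P(e)=\chi_{P\cup Q}(e)$, so the contribution is $0$. On the graph side, $e$ cannot give an edge $(i_a,j_b)$ with $i_a\in A$ and $j_b\in B$, because such an edge requires $e$ to touch $J_b\subseteq Q$. The symmetric case $e\cap P=\emptyset$ is identical.

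\textbf{Case 2: $e$ touches both $P$ and $Q$.} By independence of $\cI$, $e$ touches exactly one $I_a$, and since it touches $P$ we have $i_a\in A$. Likewise it touches exactly one $J_b$, with $j_b\in B$. Hence $e$ contributes exactly one $G$-edge $(i_a,j_b)$ counted in $\CROSS(A,B)$, of weight $w(e)$ if $e\subseteq I_a\cup J_b$ and $w(e)/2$ otherwise. On the cut side, $e$ touches both disjoint sets $P$ and $Q$, so $e\not\subseteq P$ and $e\not\subseteq Q$, giving $\chi_P(e)=\chi_Q(e)=1$. Also $\chi_{P\cup Q}(e)=1$ iff $e\not\subseteq P\cup Q$. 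Now, because $e$ touches no other supervertex of $\cI$ or $\cJ$, we have $e\cap(P\cup Q)=e\cap(I_a\cup J_b)$. Therefore $e\subseteq P\cup Q$ iff $e\subseteq I_a\cup J_b$. The contribution is $\frac12(2-0)=1$ in the contained case and $\frac12(2-1)=\frac12$ otherwise, which matches the weights $w(e)$ and $w(e)/2$.

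Summing over all $e$, weighted by $w(e)$, gives the lemma. The only delicate step is the equivalence $e\subseteq P\cup Q \iff e\subseteq I_a\cup J_b$. It is exactly where independence of both families is used, and it explains the choice of weights $w(e)$ versus $w(e)/2$ in the construction of $F$.
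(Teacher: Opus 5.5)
Your proof is correct and follows essentially the same per-hyperedge accounting as the paper. Hyperedges not touching both $P$ and $Q$ contribute $0$, while a hyperedge touching both contributes $w(e)$ or $w(e)/2$ according to whether $e\subseteq I_a\cup J_b$, with independence pinning down the unique pair $(I_a,J_b)$. Your equivalence $e\subseteq P\cup Q \iff e\subseteq I_a\cup J_b$ is a slightly more direct phrasing of the paper's observation that an edge of $E_1$ must contain a vertex outside $V(\cI)\cup V(\cJ)$, and you usefully make explicit the disjointness of $P$ and $Q$ that the paper uses tacitly.
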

\begin{proof}
	Note $\CROSS(A,B) = \sum_{(i_a,j_b)\in E(G):i_a\in A, j_b\in B} w(i_a,j_b)$ where note $E(G)$ can have multiple parallel edges. 
	By the way $G$ was defined, we can map each $(i_a,j_b) \in E(G)$ to a {\em unique} $e\in E(H)$. 
	Indeed, this edge $e$ must touch $I_a$ and $J_b$. This is unique because $e$ cannot touch any $I_{a'}$ or $J_{b'}$ since $\cI$ and $\cJ$ are independent.
	Furthermore, if $e\subseteq I_a\cup J_b$, then $w_G(i_a,j_b) = w_H(e)$, where we add the subscripts to clarify where the weights are measured.
	Let $E_1$ be the hyperedges $e$ touching some $I_a$ and $J_b$ but not a subset of $I_a\cup J_b$, while $E_2$ be the hyperedges $e'$ touching some $I_a$ and $J_b$ 
	and is a subset of $I_a\cup J_b$.
	So we get, 
	\[
		\CROSS(A,B) = \sum_{e\in E_1} \frac{w(e)}{2} + \sum_{e\in E_2} w(e)
	\]
	Now, we consider the quantity RHS $:= \frac{1}{2}(\CUT_H(P) + \CUT_H(Q) - \CUT_H(P\cup Q))$. 
	Let $e$ be a hyperedge in $E(H)\setminus (E_1 \cup E_2)$. There are a few cases: (a) $e$ doesn't touch $P$ or $Q$, 
	(b) $e$ touches $P$, or (c) $e$ touches $Q$. Note, $e$ cannot touch both $P$ and $Q$; if it does, then it touches some $I_a$ and $J_b$ and would thus belong to $E_1 \cup E_2$.
	Check that $e$'s contribution to RHS is $0$ in all three cases; in case (a) it contributes to $0$ to all three terms, and (b), (c) it contributes $w(e)$ to exactly one of $\CUT_H(P)$ or $\CUT_H(Q)$,
	and also to $\CUT_H(P\cup Q)$.
	
	Next we consider the contributions of $e\in E_1\cup E_2$.
	First consider an edge $e\in E_2$. Since $e\subseteq I_a \cup J_b \subseteq P\cup Q$, $e$ contributes $w(e)$ to $\CUT_H(P)$ and $\CUT_H(Q)$ but not to $\CUT_H(P\cup Q)$. Thus, its contribution to RHS is $w(e)$.
	Next consider an edge $e\in E_1$. $e$ intersects $I_a$ and $J_b$ but no other set in $\cI \cup \cJ$, and yet it is not a subset of $I_a\cup J_b$. This implies that $e$ contains a vertex
	$v \in V\setminus (V(\cI) \cup V(\cJ))$. This means $e$ contributes $w(e)$ to {\em all three}: $\CUT_H(P)$, $\CUT_H(Q)$ and $\CUT_H(P\cup Q)$. So its contribution to RHS is $w(e)/2$.
	This shows that the RHS, summed over all edges $e\in E(H)$, gives $\CROSS(A,B)$.
\end{proof}
\noindent
At this point, we could apply the algorithm from~\cite{LiaoC24}, however, there is one issue. That algorithm makes queries whose number is  linear (in expectation) of the number of {\em vertices} of $G$ which is $s+t$.
However, to argue about the query complexity, it is crucial that we run the process on $G_+$ rather than $G$; recall $G_+$ was the restriction of $G$ to the participating supervertices.
In short, we don't want to ``waste queries'' on $i_a$ or $j_b$ which have $\deg_G(\cdot) = 0$. Recognizing the degree-0 vertices can be done using coin-weighing algorithms as follows. 

\begin{lemma}\label{lem:part}
	Fix $\cI$ and $\cJ$ and the graph cross-graph $G$ as above. Define vector $\bx \in \RR^s_+$ where $\bx[a] = \deg_G(i_a)$; similarly define $\by \in \RR^t_+$
	with $\by[b] = \deg_G(j_b)$. One can simulate a $\SUM$ query on $\bx$ and $\by$ using $O(1)$ many $\CUT_H$ queries on the hypergraph. 
\end{lemma}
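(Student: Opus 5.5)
Fix a subset $S\subseteq [s]$. The goal is to express $\bx(S)=\sum_{a\in S}\deg_G(i_a)$ as a combination of $O(1)$ hypergraph cuts. The natural move is to reuse \Cref{lem:simul}: a $\SUM$ query on $\bx$ at $S$ is exactly $\CROSS(A,B)$ with $A=\{i_a: a\in S\}$ and $B=\{j_1,\ldots,j_t\}$ the entire right side. This holds because every edge of $F$ incident on some $i_a\in A$ has its other endpoint in $B$, as $G$ is bipartite. Hence, with $P:=\bigcup_{a\in S} I_a$ and $Q:=V(\cJ)$, \Cref{lem:simul} gives
\[
\bx(S) = \tfrac12\bigl(\CUT_H(P)+\CUT_H(V(\cJ))-\CUT_H(P\cup V(\cJ))\bigr).
\]
This costs three $\CUT_H$ queries. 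Moreover, $\CUT_H(V(\cJ))$ does not depend on $S$, so it can be computed once and cached, leaving two fresh queries per $\SUM$ query.

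The case of $\by$ is symmetric. For $T\subseteq[t]$, take $A$ to be all of $\{i_1,\ldots,i_s\}$ and $B=\{j_b: b\in T\}$, which gives $\by(T)=\frac12(\CUT_H(V(\cI))+\CUT_H(Q)-\CUT_H(V(\cI)\cup Q))$ with $Q=\bigcup_{b\in T}J_b$.

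The only step needing care is the identity $\bx(S)=\CROSS(A,[t])$ when $G$ has parallel edges with half weights. By definition, $\deg_G(i_a)$ is the total weight of $F$-edges at $i_a$, and summing over $a\in S$ counts each such edge once, because each edge has exactly one endpoint on the left side. \Cref{lem:simul} already treats parallel edges and the $w(e)/2$ weights correctly, so nothing more is needed. I expect no real obstacle; the point is to notice that taking the whole opposite side turns a cross query into a degree-sum query. Combined with \Cref{lem:choi}, this identifies the participating supervertices.
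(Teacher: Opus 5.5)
Your proposal is correct and matches the paper's proof: the paper likewise observes that $\bx(A)$ equals $\CROSS(A,\{j_1,\ldots,j_t\})$ and applies \Cref{lem:simul}, with the $\by$ case handled symmetrically. Caching $\CUT_H(V(\cJ))$ is a small extra refinement that the paper does not need, since a constant of three queries already suffices.
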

\begin{proof}
	For a subset $A\subseteq \{i_1, \ldots, i_s\}$, the $\SUM(\bx(A))$ is precisely $\CROSS(A,\{j_1, \ldots, j_t\})$. The other direction is symmetric.
\end{proof}
\noindent
Let $d_\bx$ and $d_\by$ be the {\em support} of $\bx$ and $\by$; in other words, it is the number of participating supervertices in $\cI$ and $\cJ$ respectively.
The following is another simple observation.
\begin{lemma}\label{lem:obs}
	Let $\cK$ be as defined in \Cref{lem:connG} and let $d := |\calI| + |\calJ| - |\calK|$. Then $d\leq d_\bx + d_\by \leq 2d$.
\end{lemma}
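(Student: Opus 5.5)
The plan is to relate $d$ to the structure of the cross graph $G$ via \Cref{lem:connG}. The vertex set of $G$ has size $|\cI|+|\cJ| = s+t$. The components of $G$ are in bijection with $\cK$, so $|\cK|$ equals the number of connected components of $G$. I will split the components into two kinds. The first kind are the isolated vertices, which are exactly the non-participating supervertices; there are $(s+t) - (d_\bx + d_\by)$ of them. The second kind are the components of $G_+$, each of which contains at least one edge. Let $c_+$ denote the number of components of $G_+$. Then $|\cK| = (s+t) - (d_\bx+d_\by) + c_+$, and hence
\[
d = |\cI|+|\cJ|-|\cK| = (d_\bx + d_\by) - c_+ .
\]

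The two inequalities then reduce to bounds on $c_+$.

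\textbf{Upper bound.} Since $c_+\ge 0$, we get $d \le d_\bx+d_\by$ immediately.

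\textbf{Lower bound.} Every component of $G_+$ contains an edge, and hence at least two vertices. (In fact it contains one from each side, since $G$ is bipartite, though two vertices is all we need.) The components of $G_+$ partition its $d_\bx+d_\by$ vertices, so $c_+ \le (d_\bx+d_\by)/2$. This gives $d \ge (d_\bx+d_\by)/2$, i.e.\ $d_\bx + d_\by \le 2d$.

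The only point needing care is the identification of $|\cK|$ with the number of components of $G$, including the isolated ones. By \Cref{lem:connG}, every component $C_\ell$ yields a nonempty set in $\cK$, and distinct components yield disjoint sets. Moreover, an isolated vertex $i_a$ yields exactly $I_a$. So the correspondence is a bijection, and there is no real obstacle. One subtlety worth noting is that parallel edges or fractional weights in $G$ do not matter: only the positivity of degrees and the connectivity of $G$ enter the argument.
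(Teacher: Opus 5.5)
Your proposal is correct and is essentially the paper's argument. The paper writes $d=\sum_{\ell}(|C_\ell|-1)$ and $d_\bx+d_\by=\sum_{\ell:|C_\ell|\ge 2}|C_\ell|$, then compares the two sums component by component. Your identity $d=(d_\bx+d_\by)-c_+$, with $0\le c_+\le (d_\bx+d_\by)/2$, is that same count summed over the components.
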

\begin{proof}
	If $(C_1, \ldots, C_k)$ are the connected components of $G$, then the lemma 
	follows from noting that $d = (s + t - k) = \sum_{\ell=1}^k \left(|C_\ell| - 1\right)$ and $d_\bx + d_\by = \sum_{\ell: |C_\ell| \geq 2} |C_\ell|$.
	Here $|C_\ell|$ denotes the number of supervertices in the components and the participating supervertices are the ones which are in components of size $\geq 2$ since they have an edge incident on them.
\end{proof}
\noindent
Now we have all the ingredients to specify the {\sc Merge} procedure. 

\begin{algorithm}[ht!]
	\caption{Merging Independent Families}\label{alg:merge}
	%
	\begin{varwidth}{\dimexpr\linewidth-2\fboxsep-2\fboxrule\relax}
		\begin{algorithmic}[1]
			\Procedure{Merge}{$\cI,\cJ$}:
			\LineComment{Input: Two independent families $\cI, \cJ$}
			\LineComment{Output: $\cK$ as described in~\Cref{lem:connG}}
				\State Simulate $\CROSS$ query on $G(\cI,\cJ)$ as described in \Cref{lem:simul}
				\State Learn vectors $\bx$ and $\by$ as described in~\Cref{lem:part} using \Choi (\Cref{lem:choi})  \label{alg:learn-vec}
				\State Using $\bx, \by$ and \LiaoC (\Cref{lem:liaoc})  learn spanning forest of $G_+$. \label{alg:learn-spf}
				\State Using connected components of $G_+$ return $\cK$ a la~\Cref{lem:connG}
			\EndProcedure
		\end{algorithmic}
	\end{varwidth}%
\end{algorithm}
\begin{lemma}\label{lem:qc-merge}
	Let $\cI$ and $\cJ$ be two independent families, and let $\cK$ be the output of {\sc Merge}($\cI, \cJ)$.
	Let $d:= |\cI| + |\cJ| - |\cK|$. {\sc Merge}, in expectation, makes $O\left(\frac{d\log \max(|\cI|, |\cJ|)}{\log d}\right)$
	many $\CUT$ queries.
\end{lemma}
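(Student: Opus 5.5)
We bound the queries made by the two non-trivial lines of {\sc Merge}: \Cref{alg:learn-vec} (learning $\bx,\by$) and \Cref{alg:learn-spf} (the spanning forest of $G_+$). We then add them up using \Cref{lem:obs}. Throughout, each $\SUM$ query on $\bx$ or $\by$ and each $\CROSS$ query on $G$ costs $O(1)$ $\CUT_H$ queries, by \Cref{lem:part} and \Cref{lem:simul} respectively. Write $N := \max(|\cI|,|\cJ|) = \max(s,t)$.

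\textbf{Learning $\bx$ and $\by$.} The vector $\bx\in\RR^s_{\ge0}$ has support size $d_\bx$, and $\by\in\RR^t_{\ge 0}$ has support size $d_\by$. Run \Choi (\Cref{lem:choi}) on each; it is oblivious to the sparsity, so no knowledge of $d$ is needed. The expected query cost is
\[
O\!\left(\frac{d_\bx\log s}{\log d_\bx}+\frac{d_\by\log t}{\log d_\by}\right) = O\!\left(\frac{d\log N}{\log d}\right).
\]
The equality uses two facts. First, the map $z\mapsto z/\log z$ is increasing for $z\ge 3$. Second, $d_\bx,d_\by\le d_\bx+d_\by\le 2d$ by \Cref{lem:obs}. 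Small-support cases ($d_\bx\le 2$, or $d\le 1$) need separate handling: there the cost is $O(\log N)$, and we interpret $\log d$ as $\max(1,\log d)$ as is implicit in the statement.

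If $d=0$, then every $\deg_G(\cdot)=0$, so both vectors are zero. A constant number of queries, namely $\bx([s])=0$, detects this, and we return $\cK=\cI\cup\cJ$. Consistent with the bound, we read the statement with the usual convention that it is $O(1)$ when $d=0$.

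\textbf{Spanning forest of $G_+$.} Once $\bx,\by$ are known, we know exactly which supervertices participate. Hence we can simulate $\CUT$ queries on $G_+$. A cut query on $G_+$ reduces to $O(1)$ $\CROSS$ queries via the identity stated before \Cref{lem:simul}, restricted to participating vertices. The restriction is harmless because non-participating vertices carry no edges, so adding or removing them does not change any $\CROSS$ value. Now $G_+$ has $d_\bx+d_\by\le 2d$ vertices, so by \Cref{lem:liaoc} \LiaoC uses $O(d)$ queries in expectation. This is dominated by $O(d\log N/\log d)$, since $d\le 2N$ gives $\log d\le \log N+1$.

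\textbf{Correctness and the delicate point.} The connected components of $G_+$, together with singleton components for the non-participating supervertices, are exactly the components of $G$. So \Cref{lem:connG} yields $\cK$, and summing the two expectations gives the lemma. The main thing to check carefully is the bookkeeping in the coin-weighing step. The $\log$ in the bound is over the ambient dimensions $s,t\le N$, not over $n$; this is what makes the later geometric summation work. We also need the monotonicity argument converting $d_\bx,d_\by$ to $d$ to hold uniformly, including the degenerate small-$d$ cases.
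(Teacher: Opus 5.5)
Your proposal is correct and follows the same route as the paper: you bound the cost of learning $\bx,\by$ via \Cref{lem:choi} and \Cref{lem:obs}, and the cost of the spanning-forest step on $G_+$ by $O(d_\bx+d_\by)=O(d)$. Your handling of small-$d$ conventions and of simulating cut queries on $G_+$ fills in details the paper leaves implicit.

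Your bound $d\le s+t-1\le 2\max(s,t)$ is the accurate one. The paper's $d\le\max(s,t)$ can fail, e.g.\ for a connected cross graph with $s=t=2$, where $d=3$; this only affects constants.
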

\begin{proof}
	The query complexity is dominated by \Choi~ which makes (by \Cref{lem:choi}) \\
	$O\left(\frac{d_\bx \log s}{\log d_\bx} + \frac{d_\by \log t}{\log d_\by}\right)$-many queries (using~\Cref{lem:part}).
	This is in budget using~\Cref{lem:obs}. \LiaoC~is randomized but the expected number of queries (\Cref{lem:liaoc}) is $O(d_\bx + d_\by) = O(d)$ by~\Cref{lem:obs}; since $d \leq \max(s, t)$, 
	this is also within budget.
\end{proof}

\subsection{Connected Components in Hypergraphs}
Given the {\sc Merge} subroutine, we can complete the description of the full algorithm. 
As mentioned earlier, 
we always maintain a {\em collection} $\frakI := (\calI_1, \calI_2, \ldots, \calI_t)$ of independent families. We begin with $t=n$ and
initially  $\calI_j$ contains a single, trivial supervertex $\{v_j\}$. The algorithm proceeds in iterations, and in each iteration, we pick two independent families of nearly equal size and merge them.
When we can't do so any more, we have independent families of differing sizes, and so there can be only $O(\log n)$ many families. In this case, we can just sequentially merge them again. 
The algorithm pseudocode is below.

\begin{algorithm}[ht!]
	\caption{Find Connected Components }\label{alg:main}
	%
	\begin{varwidth}{\dimexpr\linewidth-2\fboxsep-2\fboxrule\relax}
		\begin{algorithmic}[1]
			\Procedure{FindCC}{$V,\CUT$}:
			\LineComment{Input: $n$ vertices $V$ and $\CUT$ query access to hidden hypergraph $H$}
			\LineComment{Output: connected components of $H$}			
			\State Create $\fI \eq \{\cI_1, \ldots, \cI_n\}$ where $\cI_i := \{\{v_i\}\}$
			\While{$\exists \cI, \cJ \in \fI~:~ |\cI|/|\cJ| \in [1/2,2]$}:
				\State $\cK \eq ${\sc Merge}$(\cI, \cJ)$. \label{alg:q1}
				\State $\fI \eq \fI - \{\cI,\cJ\} + \cK$; 
			\EndWhile
			\LineComment{At this point there can be at most $\ceil{\log_2 n}$ elements in $\fI$}
			\LineComment{Merge all these families in any order}
			\State Let $\fI = \{\cI_1, \ldots, \cI_\ell\}$ with $\ell \leq \ceil{\log_2 n}$; $\cI \eq \cI_1$
			\For{$2\leq t\leq \ell$}:
				\State $\cI \eq $ {\sc Merge}($\cI_t,\cI$); $\fI \eq \fI - \cI_t$ \label{alg:q2}
			\EndFor
			\State \Return $\cI$
			\EndProcedure
		\end{algorithmic}
	\end{varwidth}%
\end{algorithm}
\noindent
The correctness of the algorithm follows from the correctness of {\sc Merge}. We now show that the total number of $\CUT$ queries made is $O(n)$ in expectation. 
Note that all queries are made in the {\sc Merge} calls in Line~\ref{alg:q1} and Line~\ref{alg:q2}. We treat these separately. \smallskip

\noindent
{\bf Queries made in Line~\ref{alg:q1}.} Recall that we use $d:= |\cI| + |\cJ| - |\cK|$; without loss of generality, let us assume $|\cI| \geq |\cJ|$; note $|\cI| \le 2|\cJ|$.
We call a $\cK \eq $ {\sc Merge}($\cI, \cJ)$ operation {\em thick} if $d \geq \sqrt{|\cI|}$. Otherwise, we call a {\sc Merge} operation {\em thin}.  

By~\Cref{lem:qc-merge}, the expected number of queries made by a thick {\sc Merge}$(\cI, \cJ)$ is $O\left(\frac{d\log |\cI|}{\log d}\right)$ which is $O(d)$ since the denominator cancels
the logarithmic term in the numerator. Furthermore, consider the potential $\Phi(\fI) := \sum_{\cI \in \fI} |\cI|$; a {\sc Merge} drops $\Phi(\fI)$ by exactly $d$.
So {\em all} thick merges can be charged to, up to a multiplicative constant factor, to the total drop of $\Phi(\fI)$. In the beginning $\Phi = n$, and so, the total query complexity, in expectation, 
of all thick merges is $O(n)$. 

To address thin merges, we call an independent family $\cI$ to be in class $t$ if $|\cI|  \in [2^t, 2^{t+1})$, 
and classify a thin {\sc Merge}($\cI, \cJ$) to be in class $t$ if $\min(|\cI|, |\cJ|)$ is a class $t$ family.
Note that after such a thin merge, we replace $\cI$ and $\cJ$ by $\cK$ with $|\cK| \geq 2^{t+1} - 2^{t/2}$.
Thus $\cK$ is almost not in class $t$. More precisely, let's say a vertex $v \in V$ {\em participates} in a class $t$ merge
if it is present in some supervertex of  $\cI$ or $\cJ$. What the above calculation shows is that the same vertex {\em cannot} participate in three 
class $t$-thin merges; the second class $t$ merge would put $v$ in an independent set which is in class $t+1$ or higher.
As a result, we can assert that the number of class $t$-thin merges is $\leq \frac{2n}{2^t}$.

By~\Cref{lem:qc-merge}, a class $t$-thin merge takes, in expectation, at most $O\left(\frac{d\log (2^{t+1})}{\log d}\right)$ many $\CUT$ queries.
Since $\frac{d}{\log d}$ is an increasing function of $d$, and $d = O(2^{t/2})$ since the merge is thin, the number of expected queries 
is at most $O\left(\frac{2^{t/2}\log (2^{t+1})}{\log 2^{t/2}}\right) = O(2^{t/2})$. Combining with the upper bound on the number of class $t$-thin merges, we get
that the total expected number of CUT queries made over thin merges is at most
	$\sum_{t=0}^{\floor{\log_2 n}} \frac{2n}{2^t} \cdot O(2^{t/2}) = O(n)$. \smallskip
	
\noindent
{\bf Queries made in Line~\ref{alg:q2}.} We leverage the fact that the {\em number} of {\sc Merge}s is $O(\log n)$. 
In particular, let $d_t$ denote the drop $|\cI| + |\cI_t| - |\cI_{\mathrm{new}}|$ at the $t$th for-loop.
Again by~\Cref{lem:qc-merge}, the expected number of queries is at most $O\left(\frac{d_t \log n}{\log d_t}\right)$, and so 
the total number of queries made in~\Cref{alg:q2} is $\sum_{t=1}^\ell \frac{d_t \log n}{\log d_t}$. 
Now, when $\ell = O(\log n)$, one can show that $\sum_{t=1}^\ell \frac{d_t}{\log d_t} \leq O(n/\log n)$. 
Indeed, if $d_t \leq \frac{n}{\log^2 n}$, then we use the number of $t$'s is $O(\log n)$ to bound their contribution to the sum by $O(n/\log n)$.
If $d_t > \frac{n}{\log^2 n}$, then $\frac{d_t}{\log d_t} = O(d_t/\log n)$ and so their contribution is at most $O(\frac{\sum_t d_t}{\log n})$. 
Once again $\sum_t d_t$ can be charged to $\Phi(\fI)$ initially which is $n$. 
All in all, we get the following theorem which establishes~\Cref{rslt:O(n)}.
\begin{theorem}
	\Cref{alg:main} returns the connected components of a hypergraph in an expected $O(n)$ many $\CUT$ queries.
\end{theorem}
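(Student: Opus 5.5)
The proof has two parts: correctness, which comes from an invariant, and the expected query bound, which reassembles the accounting already carried out above.

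\textbf{Correctness.} We maintain the invariant that every $\cI\in\fI$ is an independent family and that the sets $V(\cI)$, for $\cI\in\fI$, partition $V$. Initially this holds, since each singleton family $\{\{v_j\}\}$ is independent: no hyperedge can touch two distinct members of a one-element family. Each call {\sc Merge}$(\cI,\cJ)$ computes the connected components of the cross graph. It uses \LiaoC\ on $G_+$, whose $\CUT$ queries are simulated through $\CROSS$ queries via \Cref{lem:simul}. Degree-zero supervertices are exactly the non-participating ones, which \Choi\ identifies from $\bx,\by$ (\Cref{lem:part}); each of them becomes a singleton component. By \Cref{lem:connG} the output $\cK$ is independent, equals the connected components of $H[V(\cI)\cup V(\cJ)]$, and satisfies $V(\cK)=V(\cI)\cup V(\cJ)$, so the invariant is preserved. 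Each merge decreases $|\fI|$ by one, so the while loop terminates. When it stops, the family sizes are pairwise not within a factor $2$, so sorted sizes at least double from one family to the next. Hence there are at most $\lceil\log_2 n\rceil$ families, and the final for-loop leaves a single independent family $\cI$ with $V(\cI)=V$. As observed earlier, $\cI$ is then the set of connected components of $H[V]=H$. All subroutines are zero-error, so the output is always correct.

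\textbf{Query bound.} By linearity of expectation it suffices to sum the conditional expected costs from \Cref{lem:qc-merge} over all merges. The key point is that the sequence of merges, and hence each drop $d$, is determined by $H$ and the pairing rule, not by the algorithm's randomness: \Cref{lem:connG} fixes $\cK$ uniquely. So the deterministic bounds on sums of $d$ can be combined with the per-call expectations. We then split the merges in Line~\ref{alg:q1} into thick and thin, as above.

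Thick merges satisfy $d\ge\sqrt{|\cI|}$, so $\log|\cI|/\log d\le 2$ and each costs $O(d)$. The total is bounded by the total drop in $\Phi$, which is at most $n$.

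For thin merges, the step that needs care is showing there are at most $2n/2^t$ class-$t$ thin merges. I would argue this by charging each such merge to the at least $2^t$ vertices of its smaller family. The resulting family has size at least $2^{t+1}-2^{t/2}$, and one more merge of any class at least $t$ moves it to class at least $t+1$. So each vertex is charged at most twice at class $t$, giving at most $2n/2^t$ such merges. Each costs $O(2^{t/2})$, because $d/\log d$ is increasing and $d<\sqrt{2^{t+1}}$. The resulting series $\sum_t \frac{2n}{2^t}\,O(2^{t/2})$ is geometric, so it is $O(n)$.

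For the at most $\log n$ final merges, split by whether $d_t\le n/\log^2 n$. The small ones contribute $O(\log n)\cdot\frac{n}{\log^2 n}\cdot\frac{\log n}{\log d_t}$, which is $O(n)$; the degenerate cases $d_t\le 1$ cost $O(\log n)$ each, so $O(\log^2 n)$ in total. For the large ones, $\log d_t=\Theta(\log n)$, so their total cost is $O(\sum_t d_t)=O(n)$. Adding the three contributions gives $O(n)$ expected $\CUT$ queries.
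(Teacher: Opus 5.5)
Your proposal follows the paper's proof essentially line by line: the independence invariant, thick merges paid for by the drop in $\Phi$, a class-$t$ count for thin merges, and the split at $d_t\le n/\log^2 n$ for the final merges. It also inherits the one step of that proof which does not hold as stated, namely that each vertex is charged at most twice at class $t$. This presumes that the family containing a vertex never shrinks.

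Thin merges indeed never shrink it: $d<\sqrt{|\cI|}$ and $|\cJ|\ge|\cI|/2$ force $d\le|\cJ|$, hence $|\cK|\ge|\cI|$. Thick merges can shrink it, however. If the cross graph is connected, then $d=|\cI|+|\cJ|-1$ and $\cK$ is a single supervertex.

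Here is a concrete instance with $t=1$.
\begin{enumerate}
\item Merge $v$'s family $\{\{v\},\{a\}\}$ with a fresh family $\{\{p\},\{q\},\{r\}\}$, with $d=0$. This is a class-$1$ thin merge charged to $v$.
\item Merge the resulting family of size $5$ with $\{\{g_1\},\{g_2\},\{g_3\}\}$, where the only hyperedges are $\{x,g_1\}$ for $x\in\{v,a,p,q,r\}$ together with $\{v,g_2\}$ and $\{v,g_3\}$. All families involved are independent, the cross graph is connected, and the output is one supervertex.
\item Regrow this size-$1$ family with fresh singletons to size $2$, and merge it with a fresh three-singleton family. This charges $v$ again, and the round can be repeated.
\end{enumerate}
So $v$ is charged arbitrarily often. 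Your sentence ``one more merge of any class at least $t$ moves it to class at least $t+1$'' fails for thick merges, and the bound of $2n/2^t$ class-$t$ thin merges is not established. That bound may still hold in aggregate, since each round consumes fresh vertices, but proving it would require amortizing against the $\Phi$-drop of the collapsing thick merges.

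A simpler repair avoids per-class counting altogether. Let $\Psi:=\sum_{\cI\in\fI}\sqrt{|\cI|}$, which is initially $n$. Since $|\cK|\le|\cI|+|\cJ|$ and $\sqrt{\cdot}$ is subadditive, $\Psi$ never increases. A while-loop merge with $|\cJ|\le|\cI|\le 2|\cJ|$ decreases it by at least $\sqrt{|\cI|}+\sqrt{|\cJ|}-\sqrt{|\cI|+|\cJ|}\ge\bigl(1+1/\sqrt{2}-\sqrt{3/2}\bigr)\sqrt{|\cI|}>0.48\sqrt{|\cI|}$.

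By \Cref{lem:qc-merge} and the monotonicity of $d/\log d$, a thin merge costs $O(\sqrt{|\cI|})$ queries in expectation. Note that $d<\sqrt{|\cI|}<2^{t/2+1}$, not $\sqrt{2^{t+1}}$ as you wrote, because the larger family may lie in class $t+1$; this is harmless. Hence all thin merges together cost $O(n)$.

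With this replacement the rest of your argument goes through: correctness, your observation that the merge sequence does not depend on the algorithm's coins, the thick-merge bound, and the final $O(\log n)$ merges.
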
 

\section{Learning \& Sparse Connectivity Certificates for Classes of Hypergraphs}

As noted in the introduction, in general a hypergraph cannot be learned from $\CUT$ queries since there are distinct hypergraphs with the same cut-profile. 
In this section, we show that this impossibility result is rather delicate and depends on the {\em parity} of the sizes of the hyperedges!
In particular, if all the hyperedges are even, then we can learn the hypergraph using $\CUT$ queries; there is no ambiguity.

\begin{theorem}\label{thm:bnd-even-learn}
	If $H = (V, \cE)$ is a hypergraph with $|e|$ an {\em even} number for all $e\in \calE$, then we can efficiently learn $H$ in $O(\sum_{e\in \calE} 2^{|e|}\cdot \log n)$ many $\CUT$-queries.
	If $H$ was furthermore $r$-bounded, that is, $|e|\leq r$ for all $e\in \calE$, then the query complexity is $O(2^rm\log n)$ where $m$ is the number of hyperedges.
\end{theorem}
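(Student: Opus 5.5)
The plan rests on the M\"obius (inclusion--exclusion) transform of the $\CUT$ function: it turns $\CUT$ queries into ``containment'' queries, and parity enters there. For nonempty $X\subseteq V$ define
\[
M(X) := -\sum_{T\subseteq X} (-1)^{|T|}\,\CUT(T),
\]
which costs $2^{|X|}$ queries. I would compute the contribution of a single hyperedge $e$ of weight $w(e)$, writing $A=e\cap X$.

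If $e\not\subseteq X$, the indicator in $\CUT(T)$ is $[T\cap A\neq\emptyset]$. The alternating sum over all $T\subseteq X$ vanishes, so $e$ contributes $w(e)\,[X\subseteq e]$ to $M(X)$.

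If $e\subseteq X$, the indicator is $[T\cap e\neq\emptyset]-[e\subseteq T]$. This gives a contribution of $w(e)\,(1+(-1)^{|e|})\,[X=e]$.

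Altogether,
\[
M(X) = \sum_{e\supsetneq X} w(e) \;+\; 2\sum_{e = X,\ |e| \text{ even}} w(e).
\]
This is exactly where parity matters: an odd edge equal to $X$ contributes $0$, which is the ambiguity described in the introduction. Under the evenness hypothesis, with non-negative weights, $M(X)>0$ if and only if $X$ is contained in some hyperedge.

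Next I need a ``restricted'' version of this to do binary search. For $S$ disjoint from $X$, I want $W(X,S)$, the total weight of hyperedges $e\supseteq X$ with $e\cap S\neq\emptyset$, again from $O(2^{|X|})$ queries. I would derive it by the same single-edge calculation applied to
\[
D(X,S) := \sum_{T\subseteq X} (-1)^{|T|}\bigl(\CUT(T\cup S)-\CUT(T)-\CUT(S)\bigr).
\]
An edge meeting $S$ contributes $\pm w(e)$ when $e\setminus S = X$, plus a term like the one in $M$. Combining $D$ with $M$ should isolate the weight of edges strictly containing $X$ that meet $S$.

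Pinning down the exact linear combination, and checking that the parity cancellation does not reintroduce an odd-style ambiguity (with $S$ behaving like one contracted element), is the step I expect to be the main obstacle. If the direct combination is awkward, my fallback is to evaluate $M(X\cup\{\cdot\})$ in the hypergraph with $S$ contracted, which can be simulated by $\CUT(T\cup S)$. Whatever identity I get, I need $W(X,S)$ to be monotone in $S$ and to be $0$ exactly when no edge containing $X$ meets $S$; non-negativity of the weights gives this.

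With $W$ available, the learning algorithm grows each edge one vertex at a time. Suppose $X$ is known to lie inside some hyperedge ($M(X)>0$). I test whether $X$ itself is an edge via $2w(X) = M(X) - W(X, V\setminus X)$, which also yields its weight. I then find each vertex $v$ with $W(X,\{v\})>0$ by binary search on $S$, halving $V\setminus X$ while preserving $W>0$. This uses $O(\log n)$ evaluations of cost $O(2^{|X|})$ each per found extension, and I recurse on $X\cup\{v\}$. Starting from singletons $X=\{u\}$ and exploring this search tree finds every hyperedge together with its weight.

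To bound the cost, I make the search tree essentially a set of chains: after an edge is fully learned, I subtract its known contribution (computable exactly) from all later $M$ and $W$ evaluations, so each hyperedge $e$ is discovered along one chain $X_1\subsetneq X_2\subsetneq\cdots\subsetneq e$. The per-edge cost is then $\sum_{k\le |e|} O(2^k\log n) = O(2^{|e|}\log n)$, giving $O(\sum_{e}2^{|e|}\log n)$ in total, and $O(2^r m\log n)$ when $|e|\le r$. Dead branches are avoided because an extension is taken only when $W>0$ after subtraction, so each extension step is charged to some not-yet-learned edge.
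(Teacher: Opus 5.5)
\textbf{The restricted oracle $W(X,S)$ cannot be obtained from your construction, and this breaks the algorithm whenever $|X|$ is even.}

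Your computation of $M(X)$ is correct; it is the paper's M\"obius lemma specialized to singletons. Your chain accounting with deletion of learned edges also matches what the paper does. The gap is $W(X,S)$, and it is not just a matter of finding the right linear combination. Your $D(X,S)$ is \emph{exactly} the paper's $Q(X;S)$, and so is $M_{H/S}(X+s)$ in your contraction fallback. Let $g(X;S)$ count edges $e\supseteq X$ meeting both $S$ and $V\setminus(X\cup S)$, and $h(X;S)$ count edges with $X\subsetneq e\subseteq X\cup S$. Then $Q(X;S)=g(X;S)+2h(X;S)$ when $|X|$ is odd, but $Q(X;S)=g(X;S)$ when $|X|$ is even.

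So for even $|X|$, edges lying inside $X\cup S$ are invisible. Contraction turns them into odd edges $X+s$, which is precisely the ambiguity you worried about. Concretely:
\begin{itemize}
\item With the single edge $\{x,y,u,v\}$, $X=\{x,y\}$ and $S=\{u,v\}$, one gets $D(X,S)=0$.
\item $D(X,V\setminus X)\equiv 0$ for every even $|X|$.
\end{itemize}
Hence your edge test $2w(X)=M(X)-W(X,V\setminus X)$ cannot be evaluated this way. Likewise, halving from $S=V\setminus X$ stalls at the very first step whenever $|X|$ is even. Non-negativity of the weights does not help: the difficulty is computing $W$, not its monotonicity.

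\textbf{The missing idea.} The paper uses only positivity and treats the two parities differently.
\begin{itemize}
\item For odd $|X|$, $Q(X;S)>0$ if and only if some edge containing $X$ meets $S$, so plain halving works.
\item For even $|X|$, evenness of the edges means any $e\supsetneq X$ has $|e\setminus X|\ge 2$. Take $\lceil\log_2 n\rceil$ bipartitions $(A_1,A_2)$ of $V\setminus X$ that together separate every pair of vertices. Some bipartition splits $e$, giving $Q(X;A_1),Q(X;A_2)>0$. Then halve inside $A_1$. This preserves $g>0$, because an edge meeting one half still meets $V\setminus(X\cup A_1)$, which lies outside that half.
\item If no bipartition has both values positive, then no edge strictly contains $X$. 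Given $M(X)>0$, $X$ itself is then an edge, and by your own formula its weight is $M(X)/2$.
\end{itemize}
Each extension step still costs $O(2^{|X|}\log n)$ queries, so with this replacement your per-edge bound $O(2^{|e|}\log n)$ and the overall count go through.
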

\noindent
Indeed, we can say more underscoring the delicateness of the ``reconstruction impossibility'' even more. 
Suppose we also had access to a Boolean oracle $\EDGE(S)$ which said True if $S\in \cE$ and False otherwise. 
This is similar to ``pair'' queries in a normal graph where given a pair one asks if it is an edge or not. 
Clearly, one can reconstruct an $r$-bounded hypergraph $H$ using 
$O(n^r)$ many $\EDGE$-queries. We can do {\em much better} when
$\CUT$ {\em and} $\EDGE$ queries are given, and the hypergraph is sparse.

\begin{theorem}\label{thm:xtra-cut+edge-learn}
	Any hypergraph $H = (V,\cE)$ with $|V| = n$ and $|\cE| = m$ and $|e|\leq r$ for all $e\in \cE$
	can be reconstructed using $O(2^r m\log n)$-many $\CUT$ queries and $O(mn)$-many $\EDGE$-queries.
\end{theorem}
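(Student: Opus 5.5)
The plan is to run the same M\"obius-transform, binary-search machinery that gives \Cref{thm:bnd-even-learn}, and to call the $\EDGE$ oracle exactly at the points where odd parity makes the $\CUT$ information ambiguous.

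\textbf{Step 1: the transform.} Define $g(S) := \sum_{T\subseteq S} (-1)^{|T|}\CUT(T)$. Each hyperedge $e$ contributes $w(e)\cdot[\,T\cap e\neq\emptyset,\ e\not\subseteq T\,]$ to $\CUT(T)$. Summing over $T\subseteq S$ with signs, only the part of $T$ inside $S\cap e$ matters. The contribution of $e$ to $g(S)$ is therefore a simple function of $|S\cap e|$ and of whether $e\subseteq S$:
\begin{itemize}
\item it vanishes unless $S\cap e = S$, i.e.\ unless $S\subseteq e$;
\item for $\emptyset\neq S\subsetneq e$ it equals $\pm w(e)$;
\item for $S=e$ the sign depends on the parity of $|e|$.
\end{itemize}
A cleaner route is to work directly with $f(S) := \sum_{e\supseteq S} w(e)$, the up-closure of the edge weights. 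Up to an explicit parity-dependent correction, $f(S)$ is a signed sum of $2^{|S|}$ $\CUT$ values. The identifiability barrier is precisely the fact that at $S=e$ with $|e|$ odd, this correction cannot distinguish ``$S$ is an edge'' from ``$S$ plus one more vertex is an edge''.

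\textbf{Step 2: growing edges.} For a current set $S$ and a candidate pool $U\subseteq V\setminus S$, the quantity $\sum_{e\supseteq S,\ e\cap U\neq\emptyset} w(e)$ is again a signed combination of $O(2^{|S|})$ $\CUT$ queries. This gives a binary search over $U$ that finds a vertex $v$ such that some edge contains $S\cup\{v\}$, using $O(2^{|S|}\log n)$ queries.

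\textbf{Step 3: the algorithm.} Grow a chain $S_1\subset S_2\subset\cdots$ by Step 2. After each extension, ask $\EDGE(S)$ to decide termination, instead of relying on the parity signal that fails for odd edges.
\begin{itemize}
\item When $\EDGE(S)$ is YES, record $S$ as an edge and learn its weight from the transform value at $S$ after subtracting the contributions of already-learned supersets.
\item Continue extending as long as the residual up-closure at $S$ is positive, since there may still be strict supersets of $S$ that are edges.
\end{itemize}
Edges are discovered one at a time by a DFS-style exploration. Each edge $e$ is charged the $O(2^{|e|}\log n)$ $\CUT$ queries along its chain, for a total of $O(2^r m\log n)$.

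\textbf{Main obstacle: bounding the $\EDGE$ queries by $O(mn)$.} The danger is that a prefix chain is shared by many edges, or that we query $\EDGE$ on many non-edge subsets. I would charge each $\EDGE$ query to a pair (edge $e$, vertex $v$) such that the queried set is a subset of $e$ obtained at the step where $v$ was added. The DFS must explore each (prefix, added vertex) pair at most once, which needs care. Since each edge has at most $r\le n$ prefixes and each extension is a vertex, this yields $O(mn)$. If that charging fails, the fallback is: once the transform signals that $S$ is ``an edge or an edge minus one vertex,'' resolve the ambiguity by querying $\EDGE(S)$ and $\EDGE(S\cup\{v\})$ for all $v\notin S$. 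This costs $O(n)$ $\EDGE$ queries per edge, matching the stated $O(mn)$ bound exactly.
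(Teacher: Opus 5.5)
Your Step 2 fails, and the whole argument rests on it. You assert that for any prefix $S$ and pool $U\subseteq V\setminus S$, the quantity $\sum_{e\supseteq S,\, e\cap U\neq\emptyset} w(e)$ is a signed combination of $O(2^{|S|})$ $\CUT$ values. For even $|S|$ this is in general not even a function of the cut profile. Take the $k=3$ construction of \Cref{sec:app?} with $S=\{v_1,v_2\}$ and $U=\{v_3\}$. The quantity equals $1$ in $H_0$ (edge $\{v_1,v_2,v_3\}$) and $0$ in $H_1$ (whose only edge through $v_1,v_2$ is $\{v_1,v_2,\bar{v}_3\}$), yet the two cut profiles coincide. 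More generally, if your claim held for all $S,U$, taking $U=\{u\}$ would give the up-closure at every set of size at least $2$, and M\"obius inversion would reconstruct $H$ from $\CUT$ alone. For the same reason, the ``explicit correction'' in your Step 1 is really the unknown $w(S)$.

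What $\CUT$ actually gives is \Cref{lem:mobA}. For odd $|X|$, $Q(X;A)=g(X;A)+2h(X;A)$, which has the same positivity as your quantity, so searching from odd prefixes is fine. For even $|X|$, you only get $Q(X;A)=g(X;A)$, which sees just the edges through $X$ that meet $A$ \emph{and} leave $X\cup A$. In particular $Q(X;V\setminus X)\equiv 0$, so a halving search from an even prefix cannot even start, and an edge of the form $X+y$ is invisible. Your chain therefore stalls at every even prefix. Asking $\EDGE(S)$ after each extension does not help, since it does not tell you which vertex to add.

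Two ingredients are missing, and they are what the paper uses. First, from an even prefix $X$, take the $\lceil\log_2 n\rceil$ bit-partitions of $V\setminus X$ from \Cref{clm:detbin}. An edge containing $X$ with at least two vertices outside $X$ is split by some $(A^{(\ell)}_1,A^{(\ell)}_2)$, making both $Q(X;A^{(\ell)}_i)$ positive. Then halve inside $A^{(\ell)}_1$ (by \Cref{obs:sum} one half stays positive) until you reach $a$ with $X+a$ contained in an edge. Second, if no $\ell$ works, the invariant that some edge contains $X$ forces that edge to be $X$ or $X+y$. This is the only point where $\EDGE$ is needed: query $\EDGE(X+y)$ for all $y\notin X$, and return $X$ if all answers are NO. From an odd prefix, $\bot$ already certifies that $X$ is an edge.

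Your fallback is essentially this final step, but it must be the primary mechanism. The ``signal'' it waits for is exactly the failed split search, which your proposal never constructs. Once this is in place, your ``main obstacle'' disappears. Learn one edge, delete it via \Cref{obs:triv}, and restart; in the $\EDGE$ scan, skip already-learned sets, since $\EDGE$ answers for the original $H$. Each edge then costs $O(2^{|e|}\log n)$ $\CUT$ queries and at most $n$ $\EDGE$ queries, with no DFS charging.

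Note also that reading an edge's weight off the transform at $S$ fails for odd $|S|$. There your $g(S)$ equals $-\sum_{e\supsetneq S}w(e)$ and omits $w(S)$. This is harmless only because the statement concerns unweighted hypergraphs.
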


\noindent
The second class of hypergraphs which can be learned only using $\CUT$ queries are {\em linear} hypergraphs. A hypergraph $H = (V,\calE)$ is called linear if $|e\cap f|\leq 1$ for any distinct
$e,f\in \calE$. Note that this generalizes the notion of ``{\em simple}'' undirected graphs which states that between any two vertices, there can be at most one (hyper) edge.
In general linear hypergraphs cannot be learned (see~\Cref{sec:app?}), but if $|e|\geq 4$ for all $e\in \calE$, then we can learn it (even if it has edges with odd number of vertices).

\begin{theorem}\label{thm:linear-learn}
	If $H = (V, \cE)$ is a {\em linear} hypergraph with $|e| \geq 4$, then we can 
	learn $H$ in $O(\sum_{e\in \calE} |e|\log n) \leq O(n^2\log n)$ many $\CUT$-queries.
\end{theorem}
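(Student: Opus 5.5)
The plan is to imitate the even-parity approach: take a M\"obius (inclusion--exclusion) transform of the $\CUT$ function and use it as a group-testing oracle. Linearity plus $|e|\geq 4$ will remove the ambiguity that odd edges cause in general. Concretely, for disjoint nonempty ``blocks'' $B_1,\dots,B_q$ I define
\[
\mu(B_1,\dots,B_q) := \sum_{R\subseteq[q]} (-1)^{q-|R|}\,\CUT\Big(\textstyle\bigcup_{i\in R} B_i\Big),
\]
which costs $2^q = O(1)$ queries for $q\leq 4$. Write $\CUT(S)=\sum_e w(e)\big(1-[e\cap S=\emptyset]-[e\subseteq S]\big)$ and evaluate term by term. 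Each edge $e$ touching every block contributes $(-1)^{q+1}w(e)$. Each edge with $e\subseteq\bigcup_i B_i$ touching every block contributes a further $-w(e)$. All other edges contribute $0$.

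I would first record the consequences of this identity. For $q=2$ with blocks $\{x\},Y$, the value is $-\mu$, which is a positive combination of counts. So $\mu(\{x\},Y)\neq 0$ iff some edge contains $x$ and meets $Y$, and this gives binary search for a neighbour of $x$ in $O(\log n)$ queries. For $q=3$ with blocks $\{x\},\{y\},Y$: by linearity at most one edge $e_{xy}$ contains both $x$ and $y$. Hence the value is exactly the indicator that $e_{xy}$ meets $Y$ but $e_{xy}\not\subseteq\{x,y\}\cup Y$. For $q=4$ with blocks $\{x\},\{y\},\{z\},Y$, where $x,y,z\in e_{xy}$ are known, both contributions carry the same sign. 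So the value is nonzero iff $e_{xy}$ meets $Y$, with no ambiguity. Here $|e|\geq 4$ is what rules out $e=\{x,y\}$ or $e=\{x,y,z\}$ interfering: such an edge would be caught by the ``$e\subseteq\bigcup B_i$'' term alone.

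The algorithm then learns edges one at a time. Once an edge $e$ is learned, I subtract its (computable) contribution $w(e)[e\text{ is cut by }S]$ from every later answer, so I effectively query the residual hypergraph. Since the residual is still linear with all $|e|\ge 4$, the identities above keep holding. For each vertex $x$ in turn, I test $\mu(\{x\},V\setminus\{x\})$ in the residual. If it is zero, $x$ is finished, at $O(1)$ cost per vertex and $O(n)$ in total. Otherwise I binary-search a neighbour $y$ with $O(\log n)$ queries, and then find a third vertex of $e_{xy}$ using the $q=3$ test on a halving candidate set $C$. I maintain the invariant that $e_{xy}\setminus\{x,y\}\subseteq C$, a set of at least two vertices. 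Split $C=C_1\cup C_2$ and test both halves. If either test returns $1$, that half meets $e_{xy}$ while not containing all of it, so I recurse into it; that recursion keeps a smaller but still nonempty piece of the edge, and I relax the invariant to ``meets'' and continue with the same logic. If both tests return $0$, then all of $e_{xy}\setminus\{x,y\}$ lies in one half. I then recurse on both halves in parallel, or equivalently identify the half by one further test on a sub-split. Either way each level costs $O(1)$ queries and the depth is $O(\log n)$.

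With three known vertices, every remaining vertex of $e_{xy}$ is found by an unambiguous $q=4$ binary search, at $O(\log n)$ queries each. A final $q=4$ test on the leftover set certifies completeness. The weight $w(e_{xy})$ is read off any nonzero $\mu$ value. Each edge therefore costs $O(|e|\log n)$, and the vertex sweeps cost $O(n)$, absorbed since isolated vertices are free and otherwise $n\le\sum|e|$. This gives $O(\sum_e|e|\log n)$. Linearity bounds $\sum_e\binom{|e|}{2}\le\binom n2$, so $\sum_e |e| = O(n^2)$. The main obstacle I expect is the third-vertex search: the $q=3$ test is blind exactly when $Y$ swallows the whole remainder of $e_{xy}$. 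I would make the case analysis above fully rigorous, using that $|e_{xy}\setminus\{x,y\}|\geq 2$ guarantees that some split eventually separates two of its vertices, each of which is then detectable.
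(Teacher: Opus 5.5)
Your overall plan is the paper's. Your $\mu(B_1,\dots,B_q)$ equals $(-1)^{q+1}Q(\calX)$ from Lemma~\ref{lem:mobA-g}. The odd anchors $\{x\}$ and $\{x,y,z\}$ give unambiguous ``meets'' tests, linearity makes $e_{xy}$ unique in the residual hypergraph, and $|e|\ge 4$ is needed exactly at the even anchor $\{x,y\}$.

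\textbf{The gap is in the third-vertex step, in the case where both halves $C_1,C_2$ test $0$.} Put $e':=e_{xy}\setminus\{x,y\}$. For $Y\subseteq V\setminus\{x,y\}$, your $q=3$ test returns $1$ iff $Y$ splits $e'$, i.e.\ meets it without containing it. Neither remedy you offer gives $O(1)$ tests per level.
\begin{itemize}
\item A single further test cannot identify the half. Whatever $Y$ you query, if $|C_1|,|C_2|\ge 3$ then each $C_i$ contains two vertices on the same side of $Y$. So $e'$ may be such a pair in $C_1$ or such a pair in $C_2$, and $Y$ answers $0$ in both scenarios.
\item Recursing on both halves independently is no better. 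The half that misses $e'$ answers $0$ at every level, so the number of live parts doubles per level. If the halving separates the two vertices of $e'$ only at the bottom level, you spend $\Theta(n)$ tests on this one edge, far outside the $O(|e|\log n)$ budget.
\end{itemize}

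The missing idea is to merge the parallel branches; the paper does exactly this via Claim~\ref{clm:detbin} in the even case of {\sc BinSearch}. Since $e'$ lies inside exactly one live part, the $q=3$ test on the union of the left halves of all live parts gives the same answer as the test on the left half of the part containing $e'$. So one or two tests per level suffice. These merged levels are just $\lceil\log_2 n\rceil$ fixed bit-coordinate bipartitions $(A^{(\ell)}_1,A^{(\ell)}_2)$ of $V\setminus\{x,y\}$, chosen non-adaptively.

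As $|e'|\ge 2$, some $\ell$ splits $e'$, and both tests for that $\ell$ return $1$. Afterwards, because $e_{xy}$ has a vertex in $A^{(\ell)}_2$, the $q=3$ test on any $Y\subseteq A^{(\ell)}_1$ reduces to ``$e_{xy}$ meets $Y$''. Ordinary halving then returns $z$ after $O(\log n)$ further tests.

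With this replacement, the rest of your argument goes through and coincides with the paper's proof: the $q=4$ search, subtracting learned edges, and $\sum_e|e|=O(n^2)$ from $\sum_e\binom{|e|}{2}\le\binom n2$.
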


\paragraph{Spanning Subhypergraphs and Sparse Certificates.} 
Given a hypergraph $H = (V,\cE)$ with potentially parallel copies of a hyperedge, and a natural number $k$, a {\em sparse $k$-certificate} (as in~\cite{NagamochiI92},~\cite{ChekuriX17},~\cite{GuhaMT15}) is a subhypergraph 
$H' = (V, \cE')$ with $\cE' \subseteq \cE$
such that for any subset $A\subseteq V$, we have $|\partial_{H'}(A)| \geq \min(k, |\partial_{H}(A)|)$. 

In particular, $H'$ preserves all cuts up to size $k$, and in particular, if the minimum cut of $H$ is $\leq k$, then so is the minimum cut in $H'$.
A $1$-sparse certificate is a {\em spanning} subhypergraph. The following result, essentially due to Nagamochi and Ibaraki~\cite{NagamochiI92} 
shows that if we have the ability to find {\em spanning} subhypergraphs, 
then we can construct $k$-sparse certificates by repeatedly finding and peeling.

\begin{lemma}[Greedy Peeling a la Nagamochi-Ibaraki~\cite{NagamochiI92}]\label{thm:ni}
	Given a hypergraph $H = (V,\cE)$ on $n$ vertices with potentially parallel copies of hyperedges, if we can find a spanning subhypergraph $F \subseteq H$ in $q(n)$ many $\CUT$-queries, 
	then we can construct a $k$-sparse certificate in $O(kq(n))$ many $\CUT$-queries.
\end{lemma}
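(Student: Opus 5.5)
We construct $F_1, F_2, \ldots, F_k$ one at a time. $F_1$ is a spanning subhypergraph of $H_1 := H$. For $i \ge 2$, $F_i$ is a spanning subhypergraph of $H_i := H - (F_1\cup\cdots\cup F_{i-1})$. The output is $H' := F_1\cup\cdots\cup F_k$. For us, a spanning subhypergraph of $H_i$ means any $F_i\subseteq \cE(H_i)$ with the same connected components as $H_i$. This is all we use about it, so it does not matter whether the black box returns a minimal one.

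\emph{Query cost.} We must simulate $\CUT$ queries to $H_i$ using queries to $H$. Cuts are additive over edge sets, so $\CUT_{H_i}(S) = \CUT_H(S) - \sum_{j<i}\CUT_{F_j}(S)$. The hyperedges of each $F_j$ are known explicitly, since the black box returns $F_j$. Hence the subtracted term is computed with no queries at all. One query to $H_i$ therefore costs exactly one query to $H$, and round $i$ costs $q(n)$ queries. The total over all rounds is $kq(n)$. Parallel copies of a hyperedge are handled by treating each copy as a separate element of $\cE$. Removing a copy that was placed in $F_j$ just decrements its multiplicity in $H_i$, and the black box is assumed to work for such multihypergraphs.

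\emph{Correctness.} Fix $A\subseteq V$ and let $c := |\partial_H(A)|$. If some round $i\le k$ has $\partial_{F_i}(A) = \emptyset$, we claim $\partial_{H_i}(A)=\emptyset$. Suppose instead that $H_i$ has a hyperedge $e$ crossing $A$. Pick $u\in e\cap A$ and $v\in e\setminus A$; then $u$ and $v$ are connected in $H_i$. Since $F_i$ is spanning, they are also connected in $F_i$. A hyperwalk from $u$ to $v$ in $F_i$ must use a hyperedge meeting both $A$ and $V\setminus A$, which contradicts $\partial_{F_i}(A)=\emptyset$. So $\partial_{H_i}(A)=\emptyset$. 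This means every edge of $\partial_H(A)$ was already removed, i.e.\ it lies in $F_1\cup\dots\cup F_{i-1}\subseteq H'$, and thus $|\partial_{H'}(A)| = c$. Otherwise every $F_i$ with $i\le k$ contains at least one edge of $\partial_H(A)$. The $F_i$ are pairwise disjoint as edge sets (counting copies), so $|\partial_{H'}(A)|\ge k$. In both cases $|\partial_{H'}(A)|\ge \min(k,c)$.

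The step needing the most care is the claim that a spanning subhypergraph keeps every nonempty cut of $H_i$ nonempty. It is the hypergraph analogue of the graph fact, and the argument above proves it directly from the hyperwalk definition. The other point to handle carefully is that the query simulation is exact, which relies on cut additivity and on each $F_j$ being fully known. Note that we do not need the edge-disjoint structure of Nagamochi–Ibaraki's scan-first search forests, only disjointness of the peeled layers, so the construction remains valid for arbitrary black-box spanning subhypergraphs.
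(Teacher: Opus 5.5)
Your proposal is correct and uses the same greedy peeling construction as the paper: each $F_i$ is a spanning subhypergraph of $H$ minus the earlier layers, and the output is their union. The paper only cites Nagamochi--Ibaraki and related works for the certificate property, and handles the query simulation separately via Observation~\ref{obs:triv}; you prove both directly, and your two-case cut argument is sound.
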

\begin{proof}
	This is a greedy peeling argument. If $H_1, H_2, \ldots, H_k$ are subhypergraphs satisfying the property 
	that $H_i$ is a spanning hypergraph of $H - \cup_{j=1}^{i-1} H_j$, then~\cite{NagamochiI92,GuhaMT15,ChekuriX17} show that 
	$\cup_{i=1}^k H_i$ is a $k$-sparse certificate. We refer the reader to these papers for this result.
\end{proof}

All our algorithms described above to learn hypergraphs
proceed by learning ``edge-by-edge'', and in particular, given a vertex $x$ and a subset $U\subseteq V\setminus x$, it can learn an edge incident to both $x$ and $U$. Using standard Prim-style algorithms, one can then learn a spanning subhypergraph. Using the above theorem, we can get sparse $k$-certificates.
In particular, they give the following results.

\begin{theorem}\label{thm:bnd-even-sparse}
	If $H = (V, \cE)$ is a hypergraph with $|e|\leq r$ an {\em even} number for all $e\in \calE$, then we can efficiently construct 
	a sparse $k$-certificate $H$ in $O(k2^rn\log n)$ many $\CUT$-queries. 
\end{theorem}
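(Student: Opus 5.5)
By \Cref{thm:ni}, it suffices to find a spanning subhypergraph of an $r$-bounded even hypergraph using $q(n)=O(2^r n\log n)$ $\CUT$-queries. We must also be able to run the procedure again on $H-\bigcup_{j<i}H_j$, where $H_1,\dots,H_{i-1}$ are the subhypergraphs found earlier. Since those hyperedges are explicitly known, $\CUT_{H-\cup_j H_j}(S)=\CUT_H(S)-\sum_{e\in \cup_j H_j}[e \text{ crosses } S]$ can be computed with one $\CUT_H$ query and no extra cost. The peeled hypergraph is still even and $r$-bounded (possibly with parallel copies), so the same routine applies in every round.

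The core primitive comes from the machinery behind \Cref{thm:bnd-even-learn}. Given a vertex $x$ and a set $U\subseteq V\setminus x$, either certify that no hyperedge contains $x$ and meets $U$, or return one such hyperedge $e$ in full, using $O(2^{|e|}\log n)$ queries. The plan is to use the M\"obius transform of $\CUT$ restricted to $x$ and subsets of candidate vertices. For even edges, the signed inclusion–exclusion sum $\sum_{T\subseteq S}(-1)^{|S\setminus T|}\CUT(T\cup\ldots)$ isolates the weight of edges containing a prescribed set. Parity guarantees that a completed edge is distinguishable from an edge with one more unknown vertex. This lets us grow a partial edge $\{x\}\subseteq P\subseteq e$ one vertex at a time: a binary search over the remaining vertices finds the next vertex, with each test costing $O(2^{|P|})$ queries, for $O(2^{|P|}\log n)$ per new vertex. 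Over at most $r$ steps this gives $O(2^r\log n)$ per discovered edge.

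Next, the Prim-style construction. Maintain a growing set $C$ of vertices explored from a root. While some edge crosses $C$, which is a single $\CUT(C)>0$ query, use the primitive with a vertex $x\in C$ found by binary search on $C$, combined with $U=V\setminus C$, to extract one crossing hyperedge $e$. Add $e$ to $F$ and set $C\leftarrow C\cup e$. When $\CUT(C)=0$, start a new root outside $C$. Each discovered hyperedge adds at least one new vertex, so at most $n-1$ hyperedges are found. The binary search for $x$ can be done by testing whether $\CUT(C)-\CUT(C\setminus X)$-type expressions indicate crossing edges incident to $X$, which costs $O(\log n)$ per edge. The result is a spanning subhypergraph found in $O(2^r n\log n)$ queries. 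It spans each component because every crossing edge found is a genuine hyperedge, and the loop ends only when $C$ is a union of components.

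The main obstacle is the localized version of the primitive. \Cref{thm:bnd-even-learn} learns all edges, but here we need the M\"obius-based search to produce a single edge that contains $x$ and meets $U$. I would handle this by seeding $P=\{x,u\}$, with $u\in U$ found by binary search on the pair-level M\"obius coefficient, which is nonzero exactly when some edge contains both $x$ and $u$. Nonnegativity of the weights rules out cancellation, so a positive aggregate coefficient certifies a positive edge, and each step of the extension stays valid. Combining this with \Cref{thm:ni} gives $O(k\cdot 2^r n\log n)$ queries.
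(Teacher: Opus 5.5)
Your architecture matches the paper's: greedy peeling via \Cref{thm:ni}, simulating $\CUT$ of the peeled hypergraph from the known edges, Prim-style growth of $C$, and a localized edge finder seeded by a binary search on $Q(\{x\};\cdot)$. Since $|\{x\}|$ is odd, that quantity is positive exactly when some edge contains $x$ and meets the tested set.

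The genuine gap is in the extension step. This is exactly where evenness has to be used, and your sketch describes it incorrectly. The block quantity a binary search needs does not count ``edges containing $P$ and meeting $A$'' uniformly. By \Cref{lem:mobA}, when $|P|$ is even, $Q(P;A)=g(P;A)$ counts only edges $e\supseteq P$ that meet $A$ \emph{and} also meet $V\setminus(P\cup A)$. In particular $Q(P;V\setminus P)=0$ identically. After an arbitrary halving, both halves can also read $0$: take the true edge $\{x,u,v,w\}$ with $v,w$ in the same half. So the search you describe would declare $P=\{x,u\}$ an edge and then peel a non-edge.

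The missing idea is \Cref{clm:detbin}. Take $\ceil{\log_2 n}$ bipartitions of $V\setminus P$ that separate every pair of vertices. If some $e\supsetneq P$ exists, parity gives $|e\setminus P|\ge 2$. So some bipartition $(A^{(\ell)}_1,A^{(\ell)}_2)$ splits $e\setminus P$, and $Q(P;A^{(\ell)}_1)>0$. Halving inside $A^{(\ell)}_1$ then remains valid, because shrinking $A$ only enlarges $V\setminus(P\cup A)$. If no bipartition gives a positive value, no edge strictly contains $P$, and the invariant (some edge contains $P$) forces $P$ itself to be an edge. This costs $O(2^{|P|}\log n)$ per step, as you wanted. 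If you simply meant to invoke the paper's routine (\Cref{alg:find-edge-even} via \Cref{lem:binsearch-edge-growing}), note that it is already localized to a given $x$ and $U$. The ``main obstacle'' you identify is handled exactly by your seeding step; but your description of the growth step must include the bipartition search.

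Second, your test for locating $x\in C$ is wrong as written. $\CUT(C)-\CUT(C\setminus X)$ does not detect crossing edges at $X$; in graphs it equals $|E(X,V\setminus C)|-|E(X,C\setminus X)|$. For instance, take $C=\{a,b,d\}$, $V\setminus C=\{c\}$, $X=\{a\}$ and edges $\{a,c\},\{a,b\},\{a,d\}$; the expression equals $1-2=-1$. Use instead $\CUT(X)+\CUT(V\setminus C)-\CUT(X\cup(V\setminus C))$. By \Cref{lem:mobA-g} with $\cX=\{X,V\setminus C\}$, this equals $g(\cX)+2h(\cX)$, so it is positive iff some hyperedge meets both $X$ and $V\setminus C$. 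Any such edge meets one side of any split of $X$, so binary search costs $O(\log n)$ queries per found edge.

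Alternatively, do what \Cref{alg:hss} does: call the edge finder at any $x\in C$ not yet marked dead. A failure costs only the $O(1)$ queries for $Q(\{x\};U)=0$. A dead vertex stays dead since $U$ only shrinks, so failures cost $O(n)$ in total. With these two repairs, your count of $O(2^r n\log n)$ per spanning subhypergraph goes through, and hence $O(k2^r n\log n)$ overall.
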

For linear hypergraphs, we can bypass the bounded hyperedge size condition when we construct sparse $k$-sparsifiers. This gives the following subquadratic guarantee 
for constant $k$.
\begin{theorem}\label{thm:linear-learn-sparse}
	If $H=(V,\cE)$ is a {\em linear} hypergraph, then we can find a
	cut-equivalent sparse $k$-certificate, not necessarily a subhypergraph of $H$,
	in $O(kn^{1.5}\log n)$ many $\CUT$-queries.\end{theorem}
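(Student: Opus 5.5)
We will follow Lemma~\ref{thm:ni}: it suffices to find a spanning subhypergraph (more precisely, a cut-equivalent spanning structure) of a linear hypergraph in $O(n^{1.5}\log n)$ $\CUT$-queries, and then peel $k$ times. One subtlety is that after peeling the residual hypergraph must still be queryable. We handle this by learning the peeled hyperedges explicitly, or, for 3-edges, a cut-equivalent set of 2-edges. Since we know these exactly, we can subtract their cut contributions from each $\CUT_H$ answer, which simulates $\CUT$ queries on the residual hypergraph. The residual is again linear, because removing edges preserves linearity.

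To find a spanning subhypergraph we run a Prim-style growth. We maintain a connected set $S$ of vertices, a vertex $x\in S$, and the complement $U=V\setminus S$. We repeatedly need a primitive that, given $x$ and $U$, returns a hyperedge containing $x$ that meets $U$, or certifies that none exists. Existence of such an edge is tested by the usual cut identity $\CUT(x)+\CUT(U)-\CUT(U\cup x)$, which is positive iff some edge contains $x$ and meets $U$. A binary search over $U$ then finds one vertex $y\in U$ with an edge through $\{x,y\}$, in $O(\log n)$ queries. By linearity this edge $e$ is unique, so the pair $\{x,y\}$ determines it.

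The cost splits according to edge size, with threshold $\sqrt n$. For a small edge ($|e|\le\sqrt n$), we recover the whole of $e$ by repeatedly binary searching for further vertices $z$ with $\{x,y,z\}$ in a common edge. For this we use the M\"obius-type quantity on triples, in the spirit of the proof of Theorem~\ref{thm:linear-learn}: it detects whether some edge contains all of a given set, and linearity guarantees that this edge is $e$. This costs $O(|e|\log n)=O(\sqrt n\log n)$ queries per edge. The edges added in the spanning structure form a hyperforest, so at most $n-1$ of them are used and the total is $O(n^{1.5}\log n)$. For a large edge ($|e|>\sqrt n$) we do not recover all of $e$. Instead we add the vertices of $e\cap U$ into $S$ one at a time, each found in $O(\log n)$ queries, which pays $O(\log n)$ per vertex absorbed and so $O(n\log n)$ overall. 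What we record is a star of 2-edges from $x$ to the absorbed vertices, and this is where "cut-equivalent, not a subhypergraph" in the statement enters. Edges of size 3 are replaced by cut-equivalent triangles of weight $1/2$, exactly as the introduction indicates, which removes the identifiability issue for them.

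The main obstacle is the large-edge case. A star recording only part of a large hyperedge is not cut-equivalent to it, so we cannot subtract it exactly during peeling. To fix this we would bound the number of large edges: in a linear hypergraph at most $O(\sqrt n)$ edges can have size more than $\sqrt n$, since any two of them share at most one vertex and $\sum_e\binom{|e|}{2}\le\binom n2$. We can therefore afford to fully learn every large edge that enters the forest, at $O(n\log n)$ queries each, for $O(n^{1.5}\log n)$ in total. With every recorded edge known exactly, subtraction during peeling is valid, and multiplying by $k$ rounds gives the claimed $O(kn^{1.5}\log n)$ bound.
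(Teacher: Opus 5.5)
Your architecture matches the paper's: Prim-style growth driven by the linear-hypergraph edge finder, exact subtraction of learned edges so the residual stays queryable, Nagamochi--Ibaraki peeling, and triangles for 3-edges. Two steps fail as written, though.

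\textbf{The large-edge count is wrong.} You claim a linear hypergraph has only $O(\sqrt n)$ edges of size more than $\sqrt n$. A projective plane of order $q$ is a counterexample: it has $n=q^2+q+1$ lines, each of size $q+1>\sqrt n$. The inequality $\sum_e\binom{|e|}{2}\le\binom n2$ only gives $O(n)$ such edges. With your $O(n\log n)$ charge per large edge, your argument therefore bounds the cost per round only by $O(n^2\log n)$. The repair is the paper's accounting. The anchor set never exceeds three vertices, so learning an edge costs $O(|e|\log n)$. The edges recorded in one round form a linear hypergraph with at most $n-1$ edges. Hence $\sum_e|e|=O(n^{1.5})$, by $\sum_e\binom{|e|}{2}\le\binom n2$ and Cauchy--Schwarz (\Cref{clm:size-linear}). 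Once every edge is learned in full, the $\sqrt n$ threshold and the star detour are unnecessary. Alternatively, a threshold of $2\sqrt n$ does give $O(\sqrt n)$ large edges, since about $\sqrt n$ of them would already cover more than $n$ vertices; but that does not follow from the binomial-sum inequality you cite.

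\textbf{The 3-edge step cannot work as stated.} You cannot learn "a cut-equivalent set of 2-edges" for a 3-edge $\{x,y,z\}$, because $z$ is not identifiable. The $k=3$ construction in the appendix is a pair of linear hypergraphs with identical cut profiles. Concretely, $Q(\{x,y\};A)=g(\{x,y\};A)$ never counts $\{x,y,z\}$, so your search for a third vertex returns $\bot$ exactly as it would for a genuine 2-edge $\{x,y\}$. Your algorithm would then record and subtract a unit 2-edge $\{x,y\}$ that is not an edge of $H$. The simulated residual oracle is then no longer the cut function of $H$ minus the recorded edges, so both the later searches and the peeling guarantee lose their justification.

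The missing step is to run everything on the auxiliary hypergraph $\widetilde{H}$: double every edge of size $\neq 3$, and replace each 3-edge by its triangle. This hypergraph has three properties:
\begin{itemize}
\item its cut function is $2\,\CUT_H$;
\item it has no 3-edges;
\item all of its edges through a given pair are copies of one vertex set, so the edge finder still applies.
\end{itemize}
A pair on which the even-size search returns $\bot$ is then an honest unit 2-edge of $\widetilde{H}$, whichever case holds in $H$. One copy can be peeled exactly via \Cref{obs:triv} without ever learning $z$. Running $2k$ rounds on $\widetilde{H}$ gives a $2k$-certificate of $\widetilde{H}$, that is, a $k$-certificate of $H$. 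Note also that with your weight-$1/2$ triangles, $k$ rounds would only certify cuts up to $k/2$.
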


\subsection{Tools: M\"obius Transform and Binary Search}

Our algorithm to learn a hyperedge proceeds by learning it one vertex at a time. In particular, if $e$ is an edge, and suppose we know a subset $X\subseteq e$ of vertices of $e$.
To learn another vertex of $e$, we would like a tool which answers a question of the following form: given a subset $X\subseteq V$, does there 
exist a hyperedge $e$ which contains {\em all} vertices of $X$ and something outside? If we have such a tool, we can use a binary-search style technique to find the next vertex outside 
in $O(\log n)$ calls. On the other hand, $\CUT(X)$ counts the number of hyperedges which intersects {\em some} vertex of $X$ and something outside. To get the former from $\CUT$ queries, 
the idea is to look at the M\"obius transform\footnote{The M\"obius transform of a set function $f$ is $g(X) := \sum_{S\subseteq X} (-1)^{|X\setminus S|}f(X)$} of the $\CUT$-function.
Unfortunately, this tool doesn't work all the time (as the impossibility results show), but when the parity is favorable, this idea goes through. 

We begin with a few definitions.
\begin{definition}\label{def:ghA}
	Given a subset $X\subseteq V$ and a non-empty $A\subseteq V\setminus X$, define the functions \\$g(X; A) := \abs{\{e\in \calE~:~ X\subseteq e, e\cap A \neq \emptyset, ~\textrm{and}~~e\cap V\setminus (X\cup A) \neq \emptyset\}}$, and \\
	 $h(X; A) := \abs{\{e\in \calE~:~ X\subsetneq e, \textrm{and}~~e\subseteq A \cup X\}}$.
\end{definition}
\noindent
In plain English, $g(X;A)$ counts the number of hyperedges that contain all vertices of $X$, intersects both $A$ and $V\setminus (A\cup X)$, while 
$h(X;A)$ counts the hyperedges that contain all of $X$, intersects $A$,  and is contained in $A\cup X$ completely. So, $g(X;A) + h(X;A)$ counts the number of hyperedges
that contain all of $X$ and touches some vertex of $A$. Note that when $A = V\setminus X$, this is precisely the tool described in the first paragraph.
The next lemma shows that one can obtain some information depending on the parity of $|X|$, and subsequently we show its utility.

\begin{remark}\label{rem:19}
	Although we restrict our attention to {\em unweighted} hypergraphs, our reconstruction goes through for weighted hypergraphs 
	with non-negative weights. For this, we need to modify the definition of $g(X;A)$ and $h(X;A)$ to be the total weight of the edges
	in the subset of their RHS instead of cardinality. 
\end{remark}

\begin{lemma}[M\"obius Transform]\label{lem:mobA}
	For a subset $X\subseteq V$ and non-empty $A\subseteq V\setminus X$, define 
	\begin{equation}\label{eq:qXA}	
		Q(X; A) := \CUT(A) + \sum_{\emptyset \neq S\subseteq X} (-1)^{|S|+1}\left(\CUT(S) - \CUT(S\cup A)\right) 
	\end{equation}
	Then, we have
	\[
	Q(X;A) = \begin{cases}
		g(X; A) & \textrm{if $|X|$ even} \\
		g(X; A) + 2h(X;A) & \textrm{if $|X|$ odd}
	\end{cases}
	\]
		Note that $Q(X;A)$ can be computed in $2^{|X|+1}$ many $\CUT$-queries.
\end{lemma}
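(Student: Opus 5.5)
Since $Q(X;A)$ is a signed sum of $\CUT$ values and $\CUT$ is a sum over hyperedges, $Q(X;A)=\sum_{e\in\calE} q_e$, where $q_e$ is obtained by replacing every $\CUT(T)$ in \eqref{eq:qXA} by the indicator $c_e(T):=[e\cap T\neq\emptyset \wedge e\setminus T\neq\emptyset]$. So the plan is to fix a single hyperedge $e$ and compute $q_e$ exactly as a function of three pieces: $Y:=e\cap X$, whether $e$ meets $A$, and whether $e$ meets $R:=V\setminus(X\cup A)$. Then I compare $q_e$ with the contribution of $e$ to $g$ and $h$. The bookkeeping is easiest via the identity $\sum_{S\subseteq X}(-1)^{|S|}\phi(S)$. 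Because of the $\CUT(A)$ term and the sign $(-1)^{|S|+1}$, we have
\[ q_e=\sum_{S\subseteq X}(-1)^{|S|}\bigl(c_e(S\cup A)-c_e(S)\bigr), \]
where the $S=\emptyset$ term is $c_e(A)-c_e(\emptyset)=c_e(A)$.

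Next I split $c_e(T)=[e\cap T\ne\emptyset]-[e\subseteq T]$ into two pieces. I would check that this holds for every nonempty $e$, and it does. The key fact I use is that $\sum_{S\subseteq X}(-1)^{|S|}\psi(S\cap e)$ vanishes unless $X\setminus e=\emptyset$. The reason is that the sum factors through the coordinates of $X\setminus e$, and each such coordinate gives a factor $(1-1)=0$. Hence $q_e=0$ unless $X\subseteq e$, which matches the fact that both $g$ and $h$ only count edges containing $X$.

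Now assume $X\subseteq e$ and write $e=X\cup e_A\cup e_R$ with $e_A=e\cap A$ and $e_R=e\cap R$. I treat the four indicator sums separately.
\begin{itemize}
\item $\sum_S(-1)^{|S|}[e\cap(S\cup A)\ne\emptyset]$ equals $[e_A\neq\emptyset]\cdot 0^{|X|}$ plus a correction. More cleanly, $[e\cap T\ne\emptyset]=1-[e\cap T=\emptyset]$, and $[e\cap(S\cup A)=\emptyset]=[e_A=\emptyset][S=\emptyset]$ when $X\subseteq e$. This sum is therefore $[X=\emptyset]-[e_A=\emptyset]$.
\item $\sum_S(-1)^{|S|}[e\cap S\ne\emptyset]=[X=\emptyset]-1$.
\item $\sum_S(-1)^{|S|}[e\subseteq S\cup A]=(-1)^{|X|}[e_R=\emptyset]$, since only $S=X$ can work.
\item $\sum_S(-1)^{|S|}[e\subseteq S]=(-1)^{|X|}[e_A=e_R=\emptyset]$.
\end{itemize}
Combining these gives
\[ q_e = 1-[e_A=\emptyset]-(-1)^{|X|}\bigl([e_R=\emptyset]-[e=X]\bigr). \]

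Finally I run the case analysis. If $e_A=\emptyset$, then $q_e=-(-1)^{|X|}([e_R=\emptyset]-[e=X])$. This is $0$ whether or not $e_R$ is empty, since $e_R=\emptyset$ together with $e_A=\emptyset$ means $e=X$. If $e_A\neq\emptyset$, then $q_e=1-(-1)^{|X|}[e_R=\emptyset]$. This equals $1$ when $e_R\ne\emptyset$, which is exactly the $g$-edges. When $e_R=\emptyset$ it equals $1-(-1)^{|X|}$, which is $0$ for even $|X|$ and $2$ for odd $|X|$; these are exactly the $h$-edges, namely $X\subsetneq e\subseteq X\cup A$. Summing over $e$ gives the claim.

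The query count is immediate. There are $2^{|X|}$ values $\CUT(S)$ and $2^{|X|}$ values $\CUT(S\cup A)$, so at most $2^{|X|+1}$ queries, with $\CUT(\emptyset)=0$. The main obstacle is the sign bookkeeping and the degenerate cases $X=\emptyset$ and $e=X$. I would sanity-check the formula for $q_e$ on $|X|\le1$ with small explicit edges. Weights go through linearly, as noted in \Cref{rem:19}.
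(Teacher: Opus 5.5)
Your proof is correct: the per-edge formula $q_e = 1-[e_A=\emptyset]-(-1)^{|X|}\bigl([e_R=\emptyset]-[e=X]\bigr)$ checks out, including the degenerate cases $X=\emptyset$ and $e=X$. It does, however, take a different route from the paper.

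\textbf{The paper's route.} The paper does not compute $Q(X;A)$ directly. It proves a generalized M\"obius identity for an arbitrary sub-partition $\calX$: the quantity $Q(\calX)=\sum_{\emptyset\neq\cS\subseteq\calX}(-1)^{|\cS|+1}\CUT(\bigcup\cS)$ equals $g(\calX)$ or $g(\calX)+2h(\calX)$ according to the parity of $|\calX|$. It then specializes to $\calX_{X,A}=\{\{x\}:x\in X\}\cup\{A\}$, noting that the two signed sums coincide and that $|\calX_{X,A}|=|X|+1$ produces the parity flip.

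In that framework $A$ is just another block. A single pairing $\cS\leftrightarrow\cS\cup\{X_i\}$ therefore cancels every edge that misses some block. This covers both your ``$X\not\subseteq e$'' case and your ``$e_A=\emptyset$'' case at once. A surviving edge is counted by every $\CUT(\bigcup\cS)$ except possibly $\cS=\calX$, so its contribution is one binomial sum, either $1$ or $1+(-1)^{|\calX|}$.

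\textbf{Your route.} You treat $A$ asymmetrically and split $c_e(T)=[e\cap T\neq\emptyset]-[e\subseteq T]$. This costs four separate alternating sums and a by-hand check of the $e_A=\emptyset$ and $e=X$ cases.

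\textbf{Trade-off.} The paper's route gives a more general statement with less bookkeeping. Yours is a self-contained computation on the stated form of $Q(X;A)$. It makes the absorption of the $\CUT(A)$ term explicit, and it yields the unified closed form $Q(X;A)=g(X;A)+\bigl(1-(-1)^{|X|}\bigr)h(X;A)$.

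One small point: the aside ``equals $[e_A\neq\emptyset]\cdot 0^{|X|}$ plus a correction'' in your first bullet is muddled and can be dropped. The computation that follows it is the right one.
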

\noindent
We defer the proof of the above to~\Cref{app:gen-mt} where we prove a generalization to subpartitions. We next show the utility of these queries.

Before that, we make a few observations.

\begin{observation}\label{obs:XA}
Given disjoint subsets $X$ and $A$,  if $Q(X;A) > 0$, then this implies there exists an edge $e\in \calE$ such that $X\subseteq e$ and $e\cap A \neq \emptyset$ (irrespective of the parity of $X$).
In particular, if $A = \{a\}$, then $X\cup \{a\}\subseteq e$.
Furthermore, if $|X|$ is odd, then it is an if-and-only-if statement. More precisely, if $|X|$ is odd and $Q(X;A) = 0$, then both $g(X;A)$ and $h(X;A)$ are $0$
which means there is no edge $e$ of the form $X\subseteq e$ and $e\cap A \neq \emptyset$.	
\end{observation}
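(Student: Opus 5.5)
The observation follows directly from \Cref{lem:mobA} together with the non-negativity of the counting functions $g$ and $h$. By \Cref{def:ghA}, both $g(X;A)$ and $h(X;A)$ are cardinalities of sets of hyperedges (or total non-negative weights, per \Cref{rem:19}), so they are always $\geq 0$. I will treat the forward direction and the odd-parity converse separately.

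\textbf{Forward direction.} Suppose $Q(X;A) > 0$.

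If $|X|$ is even, \Cref{lem:mobA} gives $Q(X;A) = g(X;A)$, so $g(X;A) > 0$. Hence some hyperedge $e$ satisfies $X\subseteq e$ and $e\cap A\neq\emptyset$ (it also meets $V\setminus(X\cup A)$, which we do not need).

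If $|X|$ is odd, \Cref{lem:mobA} gives $Q(X;A) = g(X;A)+2h(X;A)$. Since both terms are non-negative and their sum is positive, at least one of them is positive.
\begin{itemize}
\item If $g(X;A)>0$, we conclude exactly as in the even case.
\item If $h(X;A)>0$, there is an $e$ with $X\subsetneq e\subseteq A\cup X$. Then $e\setminus X$ is non-empty and lies inside $A$, so $e\cap A\neq\emptyset$.
\end{itemize}

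For the special case $A=\{a\}$, the condition $e\cap A\neq\emptyset$ means $a\in e$. Combined with $X\subseteq e$, this gives $X\cup\{a\}\subseteq e$.

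\textbf{Converse for odd $|X|$.} Suppose $|X|$ is odd and $Q(X;A)=0$. Then $g(X;A)+2h(X;A)=0$ with both terms non-negative, so both are $0$.

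Now suppose, for contradiction, that some $e$ has $X\subseteq e$ and $e\cap A\neq\emptyset$. Since $A\cap X=\emptyset$, we get $e\supsetneq X$. Consider whether $e$ meets $V\setminus(X\cup A)$:
\begin{itemize}
\item If it does, $e$ is counted by $g(X;A)$.
\item If it does not, then $e\subseteq X\cup A$, so $e$ is counted by $h(X;A)$.
\end{itemize}
Either way one of the two quantities is positive, a contradiction. So no such edge exists.

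\textbf{Where the difficulty lies.} The only point needing care is the non-negativity of $g$ and $h$, which is what prevents cancellation in $g+2h$. This is immediate in the unweighted case and holds in the weighted case because the weights are non-negative. All of the real work is in \Cref{lem:mobA}, which we take as given.
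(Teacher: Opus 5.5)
Your proof is correct and matches the paper's intent: the paper states this as an immediate consequence of \Cref{lem:mobA}, because $g(X;A)$ and $h(X;A)$ are non-negative counts of hyperedges $e$ with $X\subseteq e$ and $e\cap A\neq\emptyset$, so $Q(X;A)>0$ forces one of them to be positive, and for odd $|X|$ the identity $Q = g+2h$ forces both to vanish when $Q=0$. Your case split on whether $e$ meets $V\setminus(X\cup A)$ shows that every such edge is counted by $g$ or by $h$, which is the one detail the paper leaves implicit.
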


\begin{observation}\label{obs:sum}
	Given disjoint subsets $X$ and $A$ and any partition $(A_1, A_2)$ of $A$, 
	we have $Q(X;A) \leq Q(X;A_1) + Q(X;A_2)$
\end{observation}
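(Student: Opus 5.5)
The plan is to avoid the alternating sum in \eqref{eq:qXA} altogether. I would use \Cref{lem:mobA} to rewrite both sides as non-negative combinations of $g$ and $h$, and then compare them hyperedge by hyperedge. Concretely, for each hyperedge $e$ let $c_A(e)\in\{0,1,2\}$ be its contribution to $Q(X;A)$. If $|X|$ is even, $c_A(e)=1$ when $X\subseteq e$, $e$ meets $A$ and $e$ meets $V\setminus(X\cup A)$, and $c_A(e)=0$ otherwise. If $|X|$ is odd, add a weight of $2$ in the case $X\subsetneq e\subseteq X\cup A$. Then $Q(X;A)=\sum_e c_A(e)$, and it suffices to show $c_A(e)\le c_{A_1}(e)+c_{A_2}(e)$ for every $e$. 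Since all contributions are non-negative, only hyperedges with $X\subseteq e$ and $e\cap A\neq\emptyset$ matter. The same pointwise argument handles weighted hypergraphs as in \Cref{rem:19}.

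\textbf{Even $|X|$.} Take $e$ counted in $g(X;A)$. It meets $A=A_1\cup A_2$, so it meets some $A_i$. Since $V\setminus(X\cup A)\subseteq V\setminus(X\cup A_i)$, the edge $e$ also meets $V\setminus(X\cup A_i)$. Hence $e$ is counted in $g(X;A_i)$, which gives $g(X;A)\le g(X;A_1)+g(X;A_2)$.

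\textbf{Odd $|X|$.} There are two sub-cases.
\begin{itemize}
\item If $e$ escapes $X\cup A$, then $c_A(e)=1$. As in the even case, $e$ meets some $A_i$ and escapes $X\cup A_i$, so $c_{A_i}(e)\ge 1$.
\item If $e\subseteq X\cup A$ (strict containment of $X$ is automatic since $e$ meets $A$), then $c_A(e)=2$.
\begin{itemize}
\item If $e$ meets only $A_1$, then $e\subseteq X\cup A_1$, so $c_{A_1}(e)=2$; symmetrically for $A_2$.
\item If $e$ meets both $A_1$ and $A_2$, then $e$ meets $A_2\subseteq V\setminus(X\cup A_1)$ and $A_1\subseteq V\setminus(X\cup A_2)$. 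So $e$ is counted in both $g(X;A_1)$ and $g(X;A_2)$, and $c_{A_1}(e)+c_{A_2}(e)=1+1=2$.
\end{itemize}
\end{itemize}

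Summing the pointwise inequality over all hyperedges gives the observation. The only step needing care is this last sub-case, where an edge that is ``internal'' for $A$ becomes ``crossing'' for both $A_1$ and $A_2$. The weight $2$ on $h$ is exactly what makes the accounting balance there.
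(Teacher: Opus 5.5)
Your proof is correct. At the top level it follows the paper's second suggested route: rewrite both sides with \Cref{lem:mobA} and count hyperedge by hyperedge. Your bookkeeping differs, though, and the difference matters.

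The paper splits the claim into two separate inequalities, $g(X;A_1)+g(X;A_2)\ge g(X;A)$ and $h(X;A_1)+h(X;A_2)\ge h(X;A)$. The second one is false. Take $X=\{x\}$, $A_1=\{a_1\}$, $A_2=\{a_2\}$ and the single hyperedge $\{x,a_1,a_2\}$; then $h(X;A)=1$ while $h(X;A_1)=h(X;A_2)=0$.

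You instead compare the combined per-edge weights, $c_A(e)\le c_{A_1}(e)+c_{A_2}(e)$, and this catches exactly that case. An edge inside $X\cup A$ that meets both parts drops out of both $h$-terms but becomes a crossing edge for $A_1$ and for $A_2$. So $g(X;A_1)+g(X;A_2)$ pays the weight $2$ that $h(X;A)$ charges. The paper's remark that such edges are ``double counted'' gestures at this, but its stated $h$-inequality does not capture it. Your argument is therefore the rigorous form of the paper's sketch, and it carries over verbatim to non-negative weights as in \Cref{rem:19}.

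The paper's other suggestion, deriving the observation from submodularity of $\CUT$, is not immediate either. Expanding \eqref{eq:qXA} writes $Q(X;A_1)+Q(X;A_2)-Q(X;A)$ as an alternating-sign sum over $S\subseteq X$ of submodularity gaps, so nonnegativity does not follow directly. Your pointwise argument avoids this issue.
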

\begin{proof}
	One could obtain this via submodularity of $\CUT$ in the definition \eqref{eq:qXA} of $Q(X;A)$, or one just uses~\Cref{lem:mobA}
	and the simpler-to-verify statements: $g(X;A_1) + g(X;A_2) \geq g(X;A)$ and $h(X;A_1) + h(X;A_2) \geq h(X;A)$.
	Indeed, any edge $e\in \cE$ counted in the RHS-es is either counted in the LHS or RHS, and any $e$ intersecting both $A_1$ and $A_2$ are
	double counted in the LHS-es.
\end{proof}
\noindent
Before we state our binary-search subroutine, we need one last standard fact. 
\begin{claim}\label{clm:detbin}
	Given any set $A$ with $|A| \geq 2$, there exists explicit partitions $(A^{(\ell)}_1, A^{(\ell)}_2)$ of $A$ for $1\leq \ell \leq \ceil{\log_2 |A|}$
	with the property that for any distinct $x,y \in A$, there is some $\ell$ such that $|A^{(\ell)}_1 \cap \{x,y\}| = |A^{(\ell)}_2 \cap \{x,y\}| = 1$.
	That is, $x$ and $y$ are split asunder in at least one of the partitions.
\end{claim}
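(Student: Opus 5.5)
The plan is to use binary encodings. Let $L := \ceil{\log_2 |A|}$ and fix an injection $\phi : A \to \{0,1\}^L$; one exists because $2^L \geq |A|$, for instance by listing $A = \{a_0, a_1, \ldots, a_{|A|-1}\}$ and sending $a_i$ to the $L$-bit binary representation of $i$. For each $1\leq \ell\leq L$ we set
\[
A^{(\ell)}_1 := \{a\in A : \phi(a)_\ell = 0\}, \qquad A^{(\ell)}_2 := \{a\in A : \phi(a)_\ell = 1\}.
\]
These two sets are disjoint and their union is $A$, so $(A^{(\ell)}_1, A^{(\ell)}_2)$ is a partition of $A$ (possibly with one side empty, which we will need to address). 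The construction is clearly explicit and computable in time $O(|A|\log|A|)$.

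For the separation property, take distinct $x,y\in A$. Since $\phi$ is injective, $\phi(x)\neq\phi(y)$, so they differ in some coordinate $\ell$. Then exactly one of $x,y$ lies in $A^{(\ell)}_1$ and the other lies in $A^{(\ell)}_2$. This gives $|A^{(\ell)}_1\cap\{x,y\}| = |A^{(\ell)}_2\cap\{x,y\}| = 1$, as required.

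The only subtlety is whether some $A^{(\ell)}_j$ can be empty, in case the application needs genuine bipartitions with both parts nonempty. With the indexing above, the top bit $\ell = L$ is $1$ only for indices $i \geq 2^{L-1}$. Since $|A| > 2^{L-1}$ by the choice of $L$, such an index exists, and $a_0$ has top bit $0$, so coordinate $L$ gives a partition with both sides nonempty. For a lower bit $\ell$, the index $2^{\ell-1} \leq 2^{L-1} < |A|$ has bit $\ell$ equal to $1$, and $a_0$ has bit $\ell$ equal to $0$. Hence every part is nonempty whenever $|A|\geq 2$. Nothing here is a real obstacle; this nonemptiness check is the only step requiring care.
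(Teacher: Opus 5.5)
Your proof is correct and is essentially the paper's argument: index the elements of $A$ by distinct $\ceil{\log_2|A|}$-bit vectors and let the $\ell$-th partition split $A$ according to bit $\ell$, so any two distinct elements are separated at a coordinate where their codes differ. Your additional check that both sides of every partition are nonempty (using $2^{L-1} < |A|$) is a detail the paper leaves implicit, and you handle it correctly.
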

\begin{proof}
	Index the elements of $|A|$ as $\ceil{\log_2|A|}$-dimensional bit-vectors, and define $A^{(\ell)}_1 := \{x\in A: x_\ell = 1\}$ and $A^{(\ell)}_2 = A\setminus A^{(\ell)}_1$.
	Any two $x$ and $y$ differ in at least one bit $\ell$ and that suffices.
\end{proof}
\noindent
Now we are ready to describe the binary search subroutine that uses the $Q(X;A)$'s.

\begin{algorithm}[ht!]
	\caption{Binary Search}\label{alg:bins}
	\begin{varwidth}{\dimexpr\linewidth-2\fboxsep-2\fboxrule\relax}
		\begin{algorithmic}[1]
			\Procedure{BinSearch}{$X,A$}:
			\LineComment{Input: non-empty subsets $X \subseteq V$, $A \subseteq V\setminus X$}
			\LineComment{Output: $a\in A$ or $\bot$ satisfying conditions~\Cref{lem:binsearch-edge-growing}}			
			
			\If{$|X|$ odd}:
				\If{$Q(X;A) > 0$}:\Comment{Binary Search in $A$}
					\While{$|A| > 1$}:\label{line:wloop-start}
						\State $(A_1, A_2)$ arbitrary equipartition of $A$
						\State ~{\bf if} $Q(X;A_1) > 0$ {\bf then}: $A\eq A_1$ {\bf else} $A\eq A_2$
					\EndWhile
					\State \Return singleton vertex of $A$ \label{line:wloop-end}
				\Else: \Comment{$Q(X;A) = 0$}
					\State \Return $\bot$
				\EndIf
			\Else: \Comment{$|X|$ even}
				\If{$|A| = 1$}:
					\State \Return $\bot$
				\Else:
					\State Construct $(A^{(\ell)}_1, A^{(\ell)}_2)$ for $1\leq \ell \leq \ceil{\log_2|A|}$ as described in~\Cref{clm:detbin}
					\State Find $\ell$ such that $Q(X;A^{(\ell)}_1)$ and $Q(X;A^{(\ell)}_2)$ are both $> 0$; if none, return $\bot$. \label{line:bot2}
					\State Let $A\eq A^{(\ell)}_1$ \Comment{Note: $Q(X;A^{(\ell)}_1) > 0$}
					\State Run Lines~\ref{line:wloop-start} to~\ref{line:wloop-end}.
				\EndIf
			\EndIf
			\EndProcedure
		\end{algorithmic}
	\end{varwidth}%
\end{algorithm}

\begin{lemma}[Spec of {\sc BinSearch}($X,A$)]\label{lem:binsearch-edge-growing}
	Let $X\subseteq V$ and non-empty $A\subseteq V\setminus X$ be two arbitrary subsets. 
	Then,
	\begin{asparaenum}
		\item  when $|X|$ is odd, {\sc BinSearch}($X,A$) either returns an $a\in A$ such that there exists an edge $e \in \cE$ with $X+a \subseteq e$,
		or it returns $\bot$ and
		we can assert there is {\em no} edge $e\in \cE$ such that $X\subseteq e$ and $|e\cap A| \geq 1$. 
		\item  when $|X|$ is even, {\sc BinSearch}($X,A$) either returns an $a\in A$ such that there exists an edge $e \in \cE$ with $X+a \subseteq e$,
		or it returns $\bot$ and
		we can assert there is {\em no} edge $e\in \cE$ such that $X\subseteq e$ and $|e\cap A| \geq 2$. 
	\end{asparaenum}
	
	The algorithm makes $O(\log n)$ many queries to $Q(X;\cdot)$ and so can be simulated using $O\left(2^{|X|+1}\log n\right)$ many $\CUT$-queries.
\end{lemma}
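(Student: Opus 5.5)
The plan is to prove the two cases separately, using \Cref{lem:mobA} together with \Cref{obs:XA} and \Cref{obs:sum}. The tool throughout is a loop invariant for the while loop in Lines~\ref{line:wloop-start}--\ref{line:wloop-end}: \emph{$Q(X;A) > 0$ for the current $A$}.

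\textbf{The while loop.} The invariant holds on entry to the loop in both branches of the algorithm. To see that it is preserved, let $(A_1,A_2)$ be the current equipartition. By \Cref{obs:sum}, $0 < Q(X;A) \le Q(X;A_1) + Q(X;A_2)$. Every $Q(X;\cdot)$ is a count of edges via \Cref{lem:mobA}, so it is non-negative. Hence if $Q(X;A_1)=0$ then $Q(X;A_2)>0$, and the invariant survives either choice. The loop halves $|A|$ each round, so it ends after $O(\log n)$ rounds with $A=\{a\}$ and $Q(X;\{a\})>0$. By \Cref{obs:XA}, this yields an edge $e$ with $X+a\subseteq e$. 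This argument covers every non-$\bot$ return, for either parity of $|X|$.

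\textbf{Odd $|X|$, $\bot$ case.} The algorithm returns $\bot$ exactly when $Q(X;A)=0$. By \Cref{lem:mobA}, $0 = g(X;A)+2h(X;A)$, and both terms are non-negative, so both vanish. Their sum counts every edge containing $X$ that meets $A$. So no edge $e$ has $X\subseteq e$ and $|e\cap A|\ge1$, which is the if-and-only-if part of \Cref{obs:XA}.

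\textbf{Even $|X|$, $\bot$ case.} This is the step needing care. Suppose, for contradiction, that some edge $e$ has $X\subseteq e$ and contains two distinct vertices $x,y\in e\cap A$. Then $|A|\ge 2$, so the early return for $|A|=1$ does not fire; that return is correct anyway, since $|e\cap A|\ge 2$ is impossible when $|A|=1$. By \Cref{clm:detbin} some $\ell$ splits $x$ and $y$; say $x\in A^{(\ell)}_1$ and $y\in A^{(\ell)}_2$. The edge $e$ meets $A^{(\ell)}_1$ at $x$ and also meets $V\setminus(X\cup A^{(\ell)}_1)$ at $y$. So $e$ is counted in $g(X;A^{(\ell)}_1)$, and by \Cref{lem:mobA} this gives $Q(X;A^{(\ell)}_1)=g(X;A^{(\ell)}_1)\ge1$. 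Symmetrically $Q(X;A^{(\ell)}_2)\ge1$. Hence Line~\ref{line:bot2} finds a valid $\ell$ and does not return $\bot$, a contradiction. When such an $\ell$ is found, $Q(X;A^{(\ell)}_1)>0$, so the loop invariant holds on entry, and the output lies in $A^{(\ell)}_1\subseteq A$ with the required edge.

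\textbf{Query count.} The odd branch evaluates $Q$ once before the loop and once per halving. The even branch evaluates $Q$ at most $2\lceil\log_2|A|\rceil$ times to find $\ell$, plus once per halving. Either way this is $O(\log n)$ evaluations of $Q(X;\cdot)$. Each evaluation costs at most $2^{|X|+1}$ $\CUT$-queries by \Cref{lem:mobA}, for $O(2^{|X|+1}\log n)$ $\CUT$-queries in total.
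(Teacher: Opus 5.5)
Your proposal is correct and follows essentially the same route as the paper. It uses the invariant $Q(X;A)>0$, maintained via \Cref{obs:sum}, for every non-$\bot$ return; \Cref{obs:XA} (that is, $g+2h=0$) for the odd $\bot$ case; and the separating partitions of \Cref{clm:detbin} together with $Q=g$ for even $|X|$ to justify the $\bot$ at Line~\ref{line:bot2}. You also spell out two points the paper leaves implicit: why the split edge is counted in $g(X;A^{(\ell)}_1)$, and why the early $\bot$ return when $|A|=1$ is sound.
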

\noindent

\begin{proof}
	Let us consider the case when $|X|$ is odd. 
	We first query $Q(X;A)$. If $Q(X;A) = 0$, then by~\Cref{obs:XA} we know there is no edge $e\in \cE$ with $X\subseteq e$ and $e\cap A \neq \emptyset$, and thus we return $\bot$.
	Otherwise,  we split $A$ arbitrarily into $(A_1, A_2)$ such that each is of size $\leq \ceil{|A|/2}$.
	By~\Cref{obs:sum}, we get $Q(X;A) \leq Q(X;A_1) + Q(X;A_2)$, and so one of $Q(X;A_1)$ or $Q(X;A_2)$ is positive, and we recurse on that half. 
	We continue so till we obtain a singleton vertex $a \in A$ such that 
	$Q(X;\{a\}) > 0$. This means there is at least one edge $e\in \cE$ such that $X\subseteq e$ and $e\cap \{a\} \neq \emptyset$. In other words, $X+a\subseteq e$. 
	\smallskip
	
	\indent
	Now let us consider the more interesting case when $|X|$ is even. In this case $Q(X;A) = 0$ {\em doesn't} mean the non-existence of an edge ``extending'' $X$.
	 However, if there exists an edge $e\in \cE$ such that $X\subseteq e$ and $|e\cap A| \geq 2$.
	Then using~\Cref{clm:detbin}, there is an $1\leq \ell \leq \ceil{\log_2|A|}$ such that $e\cap A^{(\ell)}_1 \neq \emptyset$ and $e\cap A^{(\ell)}_2\neq \emptyset$.
	That is, $g(X;A^{(\ell)}_i) > 0$ for $i\in \{1,2\}$. Since $|X|$ is even, we have $Q(X;A^{(\ell)}_i) > 0$ for $i\in \{1,2\}$.  
	Contrapositively, if this isn't true, then we can assert the non-existence of any edge $e\in \cE$ such that $X\subseteq e$ and $|e\cap A| \geq 2$.
	This is what we do in Line~\ref{line:bot2} by returning $\bot$.
	Once we get our hand on such an $\ell$, we then proceed as in the previous case to obtain an 
	$a\in A^{(\ell)}_1$ such that $Q(X;\{a\})>0$, which, as in the previous case, implies the existence of the edge $e$ with $X+a\subseteq e$.
\end{proof}
The last preliminary observation is that once we find an edge $e\in H$, we can simulate $\CUT$-queries on the hypergraph $H' := H - e$.
This is simply noting that $\CUT_{H'}(S) = \CUT_H(S) - 1$ if $e\cap S$ and $e\cap V\setminus S$ are both non-empty, and equals to $\CUT_H(S)$ otherwise.
We can repeat this for any subset of edges.
\begin{observation}\label{obs:triv}
	If $F\subseteq \calE$ of a hypergraph $H$, and if $H' := H(V, \calE\setminus F)$, then we can simulate $\CUT_{H'}(\cdot)$ using $\CUT_H(\cdot)$
	and the knowledge of $F$.
\end{observation}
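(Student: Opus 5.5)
The plan is a direct edge-by-edge accounting: since $\CUT$ is additive over hyperedges, removing the known edges $F$ shifts every cut value by an amount we can compute ourselves, without any further queries. Concretely, for any $S\subseteq V$ and any hyperedge $e$, write $\mathbf{1}[e \text{ crosses } S]$ for the indicator that $e\cap S\neq\emptyset$ and $e\cap (V\setminus S)\neq\emptyset$. By definition, $\CUT_H(S) = \sum_{e\in\cE} w(e)\,\mathbf{1}[e \text{ crosses } S]$, with $w\equiv 1$ in the unweighted case. Here $\cE$ is viewed as a multiset, so parallel copies are counted separately.

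First I would split this sum over $\cE = (\cE\setminus F)\,\dot\cup\, F$. The first part is exactly $\CUT_{H'}(S)$, so
\[
\CUT_{H'}(S) = \CUT_H(S) - \sum_{f\in F} w(f)\,\mathbf{1}[f \text{ crosses } S].
\]
Second, I would observe that the subtracted term depends only on $S$ and on the explicitly known family $F$ (its vertex sets and, in the weighted setting of \Cref{rem:19}, its weights). It can therefore be evaluated with zero queries, by checking for each $f\in F$ whether $f$ meets both $S$ and $V\setminus S$. Hence a single $\CUT_H(S)$ query yields $\CUT_{H'}(S)$.

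The only point requiring care, and the closest thing to an obstacle here, is the treatment of multiplicities. $F$ must be a sub-multiset of $\cE$, so that if several parallel copies of a hyperedge exist and only some have been learned, we subtract exactly the learned copies. With $F$ interpreted this way, the identity above holds verbatim. It also composes with itself: iterating the observation when edges are discovered one at a time is equivalent to applying it once with the union of all discovered edges. This is what the subsequent peeling and learning algorithms need.
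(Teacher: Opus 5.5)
Your proposal is correct and matches the paper's argument. The paper likewise observes that removing a known edge $e$ lowers $\CUT_H(S)$ by exactly one when $e$ crosses $S$ and leaves it unchanged otherwise, then repeats this over the edges of $F$; your weighted, multiset version is the same additive accounting.
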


\subsubsection{Proof of Lemma~\ref{lem:mobA}}\label{app:gen-mt}
We prove a generalization of~\Cref{lem:mobA}. To that end, consider the following generalization of~\Cref{def:ghA}.

\begin{definition}\label{def:ghA-em}
	Let $\calX := \{X_1, \ldots, X_t\}$ be a sub-partition of $V$ and let $Y := V \setminus \left(X_1\cup \cdots \cup X_t\right)$. 
	Define
	\[
	g(\calX) := \abs{\{e\in \calE: e\cap X_i \neq \emptyset \forall i,~~e\cap Y\neq \emptyset\}}~~\text{and}~~h(\calX) := \abs{\{e\in \calE: e\cap X_i \neq \emptyset \forall i,~~ e\cap Y =  \emptyset\}}
	\]	
\end{definition}
\noindent
The above definition generalizes $g(X;A)$  and $h(X;A)$. Consider the partition $\calX_{X,A} := \{\{x\}~:~x\in X\} \cup \{A\}$.
Then, first observe that 
\[
g(X;A) = g(\calX_{X,A})~~~\text{and}~~~h(X;A) = h(\calX_{X,A})
\]
since $e\cap \{x\} \neq \emptyset$ is the same as $x\in e$.
Secondly, observe that
the RHS in~\eqref{eq:qXA-em} in the lemma below for $\calX = \calX_{X,A}$ corresponds to the RHS in~\eqref{eq:qXA}.
This can be seen by pairing the subfamily $\calS$ consisting only of singletons with $\calS \cup \{A\}$, which changes the parity 
leading to a term such as $\CUT(S) - \CUT(S\cup A)$. 

\begin{lemma}[Generalized M\"obius Transform]\label{lem:mobA-g}
	For any subpartition $\calX = (X_1, \ldots, X_t)$, define
	\begin{equation}\label{eq:qXA-em}	
		Q(\calX) := \sum_{\emptyset \neq \cS\subseteq \cX} (-1)^{|\cS|+1}~\cdot \underbrace{\CUT\left(\bigcup_{S\in \calS} S\right)}_{\text{call this}~\CUT(\cS)~\text{for brevity}}
	\end{equation}
	Then, we have
	\[
	Q(\calX) = \begin{cases}
		g(\calX) & \textrm{if $|\cX|$ odd} \\
		g(\calX) + 2h(\calX) & \textrm{if $|\cX|$ even}
	\end{cases}
	\]
	Note that $Q(\calX)$ can be computed in $2^{|\calX|+1}$ many $\CUT$-queries.
\end{lemma}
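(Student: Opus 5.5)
Since $Q(\calX)$ is a linear combination of $\CUT$ values and $\CUT(S)=\sum_{e\in\calE}\mathbf{1}[e \text{ crosses } S]$, I will prove the identity edge by edge. Fix $e\in\calE$ and let $T := \{i : e\cap X_i\neq\emptyset\}$, with $\tau := |T|$. It suffices to show that the contribution
\[
c(e) := \sum_{\emptyset\neq \cS\subseteq \calX} (-1)^{|\cS|+1}\,\mathbf{1}\left[e \text{ crosses } \textstyle\bigcup_{S\in\cS} S\right]
\]
equals $1$ if $e$ is counted in $g(\calX)$. It should equal $2$ (for $|\calX|$ even) or $0$ (for $|\calX|$ odd) if $e$ is counted in $h(\calX)$, and $0$ otherwise. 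Here $e$ crosses $U$ iff $e\cap U\neq\emptyset$ and $e\not\subseteq U$. For a union $U_\cS$ of parts, $e\cap U_\cS\neq\emptyset$ iff $\cS$ meets $T$. Moreover $e\subseteq U_\cS$ iff $e\cap Y=\emptyset$ and $\cS\supseteq T$.

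\emph{Case $e\cap Y\neq\emptyset$.} Then $e\not\subseteq U_\cS$ always, so $c(e)=\sum_{\cS:\cS\cap T\neq\emptyset}(-1)^{|\cS|+1}$. I will write this as the full sum over nonempty $\cS$ minus the sum over nonempty $\cS\subseteq \calX\setminus T$. The full sum is $1$, since $\sum_{\cS\neq\emptyset}(-1)^{|\cS|+1}=1-(1-1)^{t}=1$. The sum over nonempty $\cS\subseteq\calX\setminus T$ is $1$ if $|\calX\setminus T|\geq1$ and $0$ otherwise. Hence $c(e)=1$ exactly when $T=\calX$, i.e.\ when $e$ touches every part. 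That is exactly the $g(\calX)$ condition; otherwise $c(e)=0$.

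\emph{Case $e\cap Y=\emptyset$.} Then $\tau\geq 1$, since $e$ is nonempty, and $e$ crosses $U_\cS$ iff $\cS$ meets $T$ but does not contain $T$. The contribution is the previous sum minus $\sum_{\cS\supseteq T}(-1)^{|\cS|+1}$. Writing $\cS=T\cup\cS'$ with $\cS'\subseteq\calX\setminus T$, this subtracted sum is $(-1)^{\tau+1}(1-1)^{t-\tau}$. It vanishes unless $T=\calX$, in which case it is $(-1)^{t+1}$. So if $T\neq\calX$, then $c(e)=0-0=0$. If $T=\calX$, so that $e$ is counted in $h(\calX)$, then $c(e)=1-(-1)^{t+1}=1+(-1)^t$. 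This equals $2$ when $t=|\calX|$ is even and $0$ when $t$ is odd.

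Summing over $e$, with weights if present, gives the lemma. The query count is immediate: there are at most $2^{|\calX|}$ subfamilies $\cS$, each needing one $\CUT$ query.

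The only real obstacle is the sign bookkeeping. Compared with \Cref{lem:mobA}, the parity cases here are swapped: $|\calX|$ here corresponds to $|X|+1$ there. I will double-check that this matches, and that the pairing $\cS\leftrightarrow\cS\cup\{A\}$ indeed reproduces \eqref{eq:qXA}. In particular, the $\cS=\{A\}$ term gives $+\CUT(A)$, and the pair $(\cS,\cS\cup\{A\})$ gives $(-1)^{|\cS|+1}(\CUT(S)-\CUT(S\cup A))$.
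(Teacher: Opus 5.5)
Your proof is correct and uses the same edge-by-edge contribution argument as the paper, reaching the same counts: $1$ for an edge that meets every part and $Y$, and $1+(-1)^{t}$ for an edge that meets every part but not $Y$. The only cosmetic difference is how you dispose of edges that miss some part $X\in\cX$: the paper cancels them with the sign-reversing pairing $\cS\leftrightarrow\cS\cup\{X\}$, while you get the same cancellation algebraically from complementary sums and $(1-1)^{t-\tau}=0$.
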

\begin{proof}
	Given a subpartition $\calP$ of $V$, we say $e \cap \calP = \emptyset$ if $e\cap P=\emptyset$ for all $P\in \calP$.
	
	\noindent	
	Fix an edge $e\in \cE$. If $e \cap \cX = \emptyset$ it doesn't participate in any of the terms in the RHS of the definition of $Q(\cX)$. 
	If $e\cap \cX \neq \emptyset$ but there is an $X \in \cX$ with $e\cap X = \emptyset$, then for any $\emptyset \neq \cS\subseteq \cX$ such that $e$ contributes to $\CUT(\cS)$, $e$ also contributes with a different parity to $\CUT(\cS\cup \{X\})$, and thus cancels out.
	So the only $e$'s that survive are ones which intersect {\em all} $X\in \cX$; these are the $e$'s either counted in $g(\cX)$ or $h(\cX)$.
	
	Now, let $Y:= V\setminus \left(\bigcup_{X\in \cX} X\right)$. If $e\cap Y \neq \emptyset$, then it is counted in the RHS exactly $\sum_{\emptyset \neq \cS \subseteq \cX} \left(-1\right)^{|\cS|+1} = 1$ times; this 
	is because of the cancellations of the binomial coefficients except the empty set. Thus, all edges contributing to $g(\cX)$ are counted exactly once. 
	If $e\cap Y = \emptyset$, then the 
	contribution of $e$ to the RHS is precisely $\sum_{\emptyset \neq \cS \subseteq \cX; \cS\neq \cX} \left(-1\right)^{|\cS|+1}$ and this is $0$ if $|\cX|$ is odd and $2$ if $|\cX|$ is even. This implies a contribution of $2h(\cX)$ when $|\cX|$ is even, completing the proof of the lemma.
\end{proof}

\subsection{Even Hypergraphs}
Armed with~\Cref{lem:binsearch-edge-growing}, we can prove the following lemma which, using~\Cref{obs:triv}, immediately proves~\Cref{thm:bnd-even-learn}.
\begin{lemma}
	If $H=(V,\cE)$ is a hypergraph with $|e|$ even for all hyperedges, then given any $x\in V$ and $A\subseteq V\setminus x$, 
	there is an algorithm which either decides there is no edge containing $x$ which intersects $A$, or returns an edge $e$
	with $x\in e$ and $e\cap A \neq \emptyset$. The number of $\CUT$ queries made is $O(2^{|e|}\log n)$.
\end{lemma}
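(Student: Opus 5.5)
The plan is to grow the edge one vertex at a time by repeated calls to {\sc BinSearch} (\Cref{lem:binsearch-edge-growing}), keeping the invariant that the current set $X$ satisfies $X\subseteq e$ for some $e\in\cE$.

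\emph{First step.} Start with $X=\{x\}$, which has odd size, and call {\sc BinSearch}$(\{x\},A)$. By part 1 of \Cref{lem:binsearch-edge-growing}, a $\bot$ answer certifies that no edge contains $x$ and meets $A$, and we stop. Otherwise we obtain $a\in A$ with $\{x,a\}\subseteq e$ for some edge $e$. Set $X=\{x,a\}$. Now every edge containing $X$ contains $x$ and meets $A$, so whatever edge we eventually output has the required form.

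\emph{Growth step.} While $|X|$ is even, first test whether $X$ itself is an edge, using $Q(X;V\setminus X)$. By \Cref{lem:mobA}, for even $|X|$ this equals $g(X;V\setminus X)$, the number of edges strictly containing $X$. Then call {\sc BinSearch}$(X,V\setminus X)$, and branch on the answer.
\begin{itemize}
\item If it returns $a$, set $X\leftarrow X+a$. The new set has odd size and is still inside some edge.
\item If it returns $\bot$, part 2 of \Cref{lem:binsearch-edge-growing} says no edge contains $X$ together with $\ge2$ further vertices. We know some edge contains $X$, and every edge has even size, so it cannot have exactly one further vertex. Hence that edge is exactly $X$, and we return $e=X$.
\end{itemize}
When $|X|$ is odd, $X$ cannot itself be an edge (it has odd size), yet it lies inside some edge. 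So some edge contains $X$ plus at least one more vertex, and part 1 of \Cref{lem:binsearch-edge-growing} guarantees that {\sc BinSearch}$(X,V\setminus X)$ returns some $a$. Set $X\leftarrow X+a$.

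This is where evenness is used: the ambiguity in part 2 (``exactly one more vertex'' versus ``none'') is resolved because an edge containing an even set $X$ differs from it by an even number of vertices. That is the key step of the argument.

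\emph{Termination and cost.} Each step adds a vertex of an edge containing $X$, so the size of $X$ stays bounded by the size of the final edge $e$, and there are at most $|e|$ steps. The step with current set $X$ costs $O(2^{|X|+1}\log n)$ queries by \Cref{lem:binsearch-edge-growing}. The cost is geometric in $|X|$, so the total is $O(2^{|e|}\log n)$, dominated by the last step.

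\emph{Main obstacle.} One subtlety is that different edges might contain the intermediate sets $X$. This is harmless: we only ever use the existence of some edge containing the current $X$, and we stop exactly when part 2 certifies that $X$ has no even-size proper extension. The bound is stated in terms of the returned edge $e$, whose size bounds every $|X|$ seen along the way.
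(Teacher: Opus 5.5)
Your proof is essentially the paper's proof of {\sc FindEdgeEven}: one odd-size call {\sc BinSearch}$(\{x\},A)$, then repeated calls {\sc BinSearch}$(X,V\setminus X)$, where evenness of $|e|$ turns a $\bot$ at even $|X|$ into ``$X$ is the edge'', and the geometric sum $\sum_i 2^{i+1}\log n = O(2^{|e|}\log n)$ gives the query bound. The only flaw is your extra ``test'' with $Q(X;V\setminus X)$, which you should delete: for $A=V\setminus X$ the set $V\setminus(X\cup A)$ is empty, so $g(X;V\setminus X)=0$ identically (it is $h(X;V\setminus X)$ that counts edges strictly containing $X$, and it enters $Q$ only for odd $|X|$), so this query neither counts proper supersets nor tests whether $X$ is an edge --- but since your branching never uses it, correctness and the query bound are unaffected.
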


\begin{algorithm}[ht!]
	\caption{Find Edge incident to a vertex in even hypergraph}\label{alg:find-edge-even}
	\begin{varwidth}{\dimexpr\linewidth-2\fboxsep-2\fboxrule\relax}
		\begin{algorithmic}[1]
			\Procedure{FindEdgeEven}{$x,A$}:
			\LineComment{Input: $x\in V$, $A \subseteq V\setminus \{x\}$}
			\LineComment{Assumption: $H$ has {\bf even} hyperedges.}
			\LineComment{Output: $\bot$ or $e$ with $x\in e$, $e\cap A \neq \emptyset$}			
			
			\State $a \eq $ {\sc BinSearch}($\{x\};A$) \Comment{If this returns $\bot$, then there is no desired edge}
			\If{$a = \bot$}:
				\State \Return $\bot$
			\Else:
				\State $X\eq \{x, a\}$ \Comment{Invariant: there is an edge $e$ with $X\subseteq e$}
				\While{True}:
					\State $a\eq ${\sc BinSearch}($X,V\setminus X$)
					\If{$a = \bot$}: \Comment{assert: we have found edge} 
						\State \Return $X$ \label{line:return}
					\Else:
						\State $X\eq X + a$
					\EndIf
				\EndWhile
			\EndIf

			\EndProcedure
		\end{algorithmic}
	\end{varwidth}%
\end{algorithm}
\begin{proof}
	The algorithm is described in~\Cref{alg:find-edge-even}. We begin with {\sc BinSearch}($\{x\};A$); since $|\{x\}|$ is odd, 
	~\Cref{lem:binsearch-edge-growing} implies that if this is $\bot$, there is no edge containing $x$ and intersecting $A$, in which case we return $\bot$.
	Otherwise, {\sc BinSearch} returns an $a\in A$ with the guarantee there is an edge $e \supseteq \{x,a\}$. We now need to find the remaining vertices of one such edge.
	We let $X = \{x,a\}$.
	Since we already found one $a\in A$, we run our binary search {\sc BinSearch}($X, V\setminus X)$. Since $e$ has even sized, we know that either $e = X$ or
	$|e\cap (V\setminus X)| \geq 2$. This will in general be true; if $|X|$ is even and there is an edge $e\supseteq X$, then since $|e|$ is even, either $e = X$ or $|e\cap (V\setminus X)| \geq 2$.
	This proves the correctness of Line~\ref{line:return}. When $|X|$ is odd, then {\sc BinSearch}$(X, V\setminus X)$ will not return $\bot$.
	The total number of $\CUT$ queries made is $O\left(\sum_{i=1}^{|e|}2^i \cdot \log n\right)$ which evaluates to $O(2^{|e|}\log n)$.
\end{proof}
If we also had the ability to make $\EDGE$ queries, then before running Line~\ref{line:return}, if $|X|$ is even, we would actually check
$\EDGE(X\cup y)$ for all $y\in V\setminus X$. If any of them return TRUE, then we return that edge. Otherwise, we return $X$. 
This is correct because {\sc BinSearch}($X,V\setminus X) = \bot$ implies there is no edge $e$ with $X\subseteq e$ and $|e\cap (V\setminus X)| \geq 2$. However, 
the invariant is that there is an edge with $X\subseteq e$. So either $e = X$ or $e = X + y$ for some $y\in V\setminus X$.
So the above algorithm works in any hypergraph but with $n$ additional $\EDGE$ queries. This immediately proves~\Cref{thm:xtra-cut+edge-learn}.

Armed with the above algorithm, we can easily find a spanning subhypergraph using a Prim-style approach described in~\Cref{alg:hss}.

\begin{algorithm}[h!]
	\caption{Finding Spanning Subhypergraph}\label{alg:hss}
	\begin{varwidth}{\dimexpr\linewidth-2\fboxsep-2\fboxrule\relax}
	\begin{algorithmic}[1]
		\Procedure{FindSpanSubHG}{$V,\CUT, x\in V$}:
			\LineComment{Input: $n$ vertices $V$ and $\CUT$ query access to hidden {\bf even} hypergraph $H$}
			\LineComment{Output: spanning subhypergraph $H'$ of component containing $x$}			
			
			\State $C\eq \{x\}$; $U \eq V\setminus C$; $\calE'\eq \emptyset$ 
			\State $D\eq \emptyset$ \Comment{$D$ is set of ``dead'' vertices; they have no edges containing them and touching $U$}
			\While{$D\neq C$}:
				\State Pick $x\in C\setminus D$ and run $e \eq ${\sc FindEdgeEven}($x, U$)
				\If{$e = \bot$}:
					\State $D \eq D + x$
					\State Continue to next while loop
				\Else: \Comment{$e$ is an edge containing $x$ and some vertex in $U$}
					\State $\cE' \eq \cE' \cup \{e\}$; $C\eq C \cup e$; $U\eq V\setminus C$
				\EndIf
			\EndWhile
			\State \Return $H' \eq (C, \cE')$.
		\EndProcedure
	\end{algorithmic}
\end{varwidth}%
\end{algorithm}
\noindent
The time taken is $O(\sum_{e\in \calE'} 2^{|e|}\log n)$; if $|e|\leq r$, then since $|\calE'| \leq n-1$, we obtain the spanning sub-hypergraph in $O(2^r n\log n)$ many $\CUT$ queries.
 Along with~\Cref{thm:ni}, this establishes~\Cref{thm:bnd-even-sparse}.

\subsection{Linear Hypergraphs}

Recall, a hypergraph $H = (V, \cE)$ is {\em linear} if for any two different $e,f \in \cE$, we have $|e\cap f|\leq 1$. Furthermore, let us call a hypergraph large if $|e| \geq 4$ for every $e\in \calE$.
We first show that large, linear hypergraphs can be reconstructed easily. The main place where linearity is used is the following: if we know there exists an edge $e$ such that $\{x,y\} \subseteq e$, 
then there is {\em exactly} one such edge $e^*$. 
So, if we find that there is an edge $e$ containing $\{x,y,z\}$, it must be the same edge $e^*$; if we find there is an edge $e$ containing $\{x,y,z'\}$, it must be the same edge $e^*$. The algorithm is now clear.
Since $|e^*|\geq 4$, we first use \Cref{lem:binsearch-edge-growing}
with $X=\{x,y\}$ and $A=V\setminus X$ to find a vertex $z$ with
$\{x,y,z\}\subseteq e^*$. We then fix the odd anchor
$X:=\{x,y,z\}$. Repeatedly applying \Cref{lem:binsearch-edge-growing}
to $(X,A)$ finds all remaining vertices of the unique edge containing
$X$; uniqueness follows from linearity. When the search returns $\bot$,
there is no further vertex of that edge.
The full description of the algorithm is in~\Cref{alg:e-llh}.
\begin{algorithm}[!h]
	\caption{Finding one edge in large linear hypergraph}\label{alg:e-llh}
	\begin{varwidth}{\dimexpr\linewidth-2\fboxsep-2\fboxrule\relax}
		\begin{algorithmic}[1]
			\Procedure{FindHypEdgeLL}{$V,\CUT, x\in V$}:
			\LineComment{Input: $n$ vertices $V$ and $\CUT$ query access to hidden hypergraph $H$}
			\LineComment{Assumption: $|e\cap f|\leq 1$ for all distinct $e,f\in \cE$ and $|e| \geq 4$ for $e\in \cE$}
			\LineComment{Assumption: $x$ has at least an edge incident on it.}
			\LineComment{Output: some edge of $H$ incident on $x$}
			\State Use {\sc BinSearch}($\{x\}, V\setminus\{x\})$ from~\Cref{lem:binsearch-edge-growing} to find $y$.
			\LineComment{Linearity asserts there exists {\bf unique} edge $e^*$ with $\{x,y\}\subseteq e^*$.}
			\State Use {\sc BinSearch}($\{x,y\},V\setminus \{x,y\}$) from~\Cref{lem:binsearch-edge-growing}  to obtain $z$. \label{alg:ll:l8}
			\LineComment{Since $|e^*|\geq 4$ we know $|e^* \cap (V\setminus \{x,y\})|\geq 2$ implying $\{x,y,z\} \subseteq e^*$.}
			\State $e \eq \{x,y,z\}$; $X\eq \{x,y,z\}$; $A\eq V\setminus X$
			\While{True}:
			\State $z \eq$ {\sc BinSearch}($X,A$)
			\LineComment{Since $|X|$ is odd, this is guaranteed to give other elements of $e^*$ until $\bot$}
			\If{$z=\bot$}
			\State abort while loop
			\Else
			\State $e\eq e\cup \{z\}$; $A\eq A\setminus\{z\}$
			\EndIf
			\EndWhile
			\State \Return $e$
			\EndProcedure
		\end{algorithmic}
	\end{varwidth}%
\end{algorithm}
\begin{remark}
	The above algorithm also works if $|e|=2$ or $|e| \geq 4$ for every $e\in \calE$. 
	The only difference is that Line 8 of~\Cref{alg:e-llh}
	may return $\bot$ in which case $\{x,y\}$ is the unique $2$-edge $e^*$ containing both $x$ and $y$.
\end{remark}

Since the size of $|X|$ for which $Q(X;A)$ is called is always $\leq 3$, the above algorithm makes $O(|e^*|\log n)$-many $\CUT$ queries.
Armed with~\Cref{obs:triv} (and noting that deleting an edge from a large, linear hypergraph maintains those properties), we obtain 
an algorithm which reconstructs all the edges in $O(\sum_{e\in \cE} |e| \cdot \log n)$ many $\CUT$ queries, proving~\Cref{thm:linear-learn}.  To obtain a spanning subhypergraph, we use the same ``Prim-style'' ideas as employed in~\Cref{alg:hss}.  The following claim relates the ``size'' $p_H$ of linear hypergraphs.

\begin{claim}[Size of Linear Hypergraphs]\label{clm:size-linear}
	If $H = (V, \cE)$ is a linear hypergraph with $n = |V|$, $m = |\cE|$ and $p = \sum_{e\in \cE} |e|$, then 
	(i) $m \leq \binom{n}{2}$, and (ii) $p = O(n\sqrt{m}) = O(n^2)$.
\end{claim}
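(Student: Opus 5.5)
For (i), I would count pairs. Each hyperedge $e$ with $|e|\geq 2$ contains at least one unordered pair of vertices. By linearity, no pair lies in two distinct hyperedges, so the map sending $e$ to any chosen pair inside it is injective. This gives $m\leq\binom{n}{2}$. A refinement of the same count will drive part (ii): $\sum_{e\in\cE}\binom{|e|}{2}\leq\binom{n}{2}$, since the pair-sets of distinct edges are disjoint. Singleton edges, if allowed, never cross any cut and can be discarded; I would assume $|e|\geq 2$ throughout.

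For (ii), I would use Cauchy--Schwarz on $p=\sum_e |e|$. Since $|e|\geq 2$, we have $\binom{|e|}{2}=|e|(|e|-1)/2\geq |e|^2/4$, and therefore
\[
\sum_{e\in\cE}|e|^2\leq 4\binom{n}{2}\leq 2n^2.
\]
Cauchy--Schwarz then gives
\[
p=\sum_{e\in\cE}|e|\leq\sqrt{m}\cdot\Big(\sum_{e\in\cE}|e|^2\Big)^{1/2}\leq\sqrt{2}\,n\sqrt{m},
\]
which is $O(n\sqrt{m})$. Plugging in $m\leq\binom{n}{2}$ from (i) yields $p=O(n^2)$.

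There is no real obstacle here. The only care needed is the disjointness of the pair-sets, which is exactly the linearity hypothesis $|e\cap f|\leq 1$, and the treatment of size-1 edges. An alternative route to $p=O(n^2)$ is vertex-degree counting: each vertex $v$ lies in at most $n-1$ edges, because the sets $e\setminus\{v\}$ for $e\ni v$ are disjoint and nonempty. Then $p=\sum_v \deg(v)\leq n(n-1)$. I would mention this as a sanity check, but rely on the Cauchy--Schwarz argument since it gives the sharper $O(n\sqrt{m})$ bound needed later for the $kn^{1.5}$ analysis.
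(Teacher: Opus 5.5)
Your proof is correct and essentially the same as the paper's. Both arguments count vertex pairs inside hyperedges and use linearity to get $\sum_{e\in\cE}\binom{|e|}{2}\le\binom{n}{2}$, then read off (i) from $\binom{|e|}{2}\ge 1$ and (ii) from Cauchy--Schwarz via $\sum_e |e|^2=O(n^2)$ and $p^2\le m\sum_e|e|^2$. Your explicit assumption $|e|\ge 2$ is also made, implicitly, by the paper's proof; singleton edges are not covered by the pair count and are irrelevant to cuts anyway.
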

\begin{proof}
	If we count the set $\{(x,y,e): x\in V, y\in V, e\in \cE, \{x,y\}\subseteq e\}$ and use linearity, we get that 
$\sum_{e\in \cE} \binom{|e|}{2} \leq \binom{n}{2}$. The LHS is at least $|\cE| = m$ proving (i). Cauchy-Schwarz proves (ii)
since the aforementioned inequality implies $\sum_{e\in \cE} |e|^2 = O(n^2)$, and is at least $p^2/m$.
\end{proof}
\noindent
A spanning subhypergraph $H'$ has at most $n$ edges, and thus $p_{H'} = O(n^{1.5})$ showing that we can find a spanning subhypergraph in 
a large, linear hypergraph in $O(n^{1.5}\log n)$-many $\CUT$ queries. Next, we discuss how to remove the largeness assumption. 
Before moving on to do so, we state the obvious generalization which may be of interest: it weakens the linearity to $t$-linearity, but then requires
all edges to be much bigger; one uses H\"older instead of Cauchy-Schwarz.

\begin{theorem}\label{thm:xtra-t-linear}
For any $t \in \NN$, if a hypergraph $H = (V,\cE)$ satisfies (i) $|e|\geq t+3$ for all $e\in \cE$, 
and (ii) $|e\cap f| \leq t$ for all distinct $e,f\in E$, then a spanning subhypergraph can be learned
in $O(n^{2 - \frac{1}{(t+1)}}\log n)$-many $\CUT$-queries.
%
\end{theorem}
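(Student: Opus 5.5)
The plan is to mirror the large-linear case: learn one edge incident to a given vertex in $O(|e|\log n)$ CUT queries, find a spanning subhypergraph Prim-style as in \Cref{alg:hss}, and bound the total work with a Hölder-type generalization of \Cref{clm:size-linear}.

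\textbf{Learning one edge.} We generalize \Cref{alg:e-llh}. Given $x$ and the current frontier $U$, we first check with $Q(\{x\};U)$ whether some edge contains $x$ and meets $U$; since $|\{x\}|=1$ is odd, \Cref{obs:XA} makes this test exact. If the test succeeds, we grow an anchor $X\ni x$ one vertex at a time using {\sc BinSearch} (\Cref{lem:binsearch-edge-growing}), keeping the invariant that some edge contains $X$. We start with $V\setminus X$ as the search set, and restrict to $U$ for the first step. We continue until $|X|=t+1$.

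Once $|X|\ge t+1$, condition (ii) forces the edge $e^*\supseteq X$ to be unique, since two distinct edges would share at least $t+1$ vertices. Whenever $|X|$ is even, {\sc BinSearch} might return $\bot$ while one vertex of $e^*$ is still missing. Condition (i), $|e|\ge t+3$, guarantees at least two vertices remain outside an anchor of size $t+1$, so one more step always succeeds. This lets us choose an anchor of odd size $t+1$ or $t+2$ with $X\subseteq e^*$.

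With the anchor fixed, we run the odd-anchor loop of \Cref{alg:e-llh}. Each call {\sc BinSearch}$(X,A)$, with $A$ shrinking as vertices are found, returns a new vertex of the unique $e^*$ until it returns $\bot$; that $\bot$ is exact by \Cref{obs:XA}. All anchors have size at most $t+2$, so each call costs $O(2^{t+3}\log n)=O_t(\log n)$ queries. Learning $e^*$ therefore costs $O_t(|e^*|\log n)$ queries.

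\textbf{Prim-style spanning.} We use the procedure of \Cref{alg:hss}. After an edge is learned we simulate the deleted hypergraph via \Cref{obs:triv}; deletion preserves both (i) and (ii). A failed test marks a vertex dead at cost $O(\log n)$, and at most $n$ vertices die. The learned edges $\cE'$ number fewer than $n$, and each enlarges $C$, so the total cost is $O_t\big((n+\sum_{e\in\cE'}|e|)\log n\big)$.

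\textbf{Bounding the size, which I expect to be the crux.} We double count $(t+1)$-subsets contained in edges. By (ii), each $(t+1)$-set of vertices lies in at most one edge, so $\sum_{e\in\cE'}\binom{|e|}{t+1}\le\binom{n}{t+1}$. Since $|e|\ge t+3$, we have $\binom{|e|}{t+1}=\Omega_t(|e|^{t+1})$, hence $\sum_{e\in\cE'}|e|^{t+1}=O_t(n^{t+1})$. Applying Hölder with $m=|\cE'|\le n$ gives
\[
\sum_{e\in\cE'}|e| \le m^{t/(t+1)}\Big(\sum_{e\in\cE'}|e|^{t+1}\Big)^{1/(t+1)} = O_t\big(n^{t/(t+1)}\cdot n\big)=O_t\big(n^{2-\frac{1}{t+1}}\big).
\]
The main subtlety lies in the edge-learning step rather than in this calculation: we need the uniqueness threshold, which needs an anchor of size $t+1$, to coincide with the room left by $|e|\ge t+3$. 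That is what lets us reach an odd anchor before the edge is exhausted. With $t$ treated as a constant, this yields the claimed $O(n^{2-1/(t+1)}\log n)$ bound.
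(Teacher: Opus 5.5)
Your proposal is correct and matches the route the paper intends. The paper only sketches this theorem as the obvious generalization of \Cref{alg:e-llh} with Prim-style spanning, replacing Cauchy--Schwarz in \Cref{clm:size-linear} by H\"older; your odd anchor of size $t+1$ or $t+2$ (with uniqueness from $|e\cap f|\le t$ and no premature $\bot$ from $|e|\ge t+3$) and your double count of $(t+1)$-subsets fill in exactly those details, with constants depending on $t$ hidden as in the statement.
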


\def\tH{\widetilde{H}}
\paragraph{Removing the largeness assumption.} From the above discussion, we can see that one could learn all large hyperedges, but one is stuck at 
edges $e$ with $|e| \leq 3$. And indeed, as noted in \Cref{sec:intro}, one cannot exploit linearity to learn the hypergraph exactly. 
However, one other simple observation still allows us to construct ``sparse certificates'' --- we put quotes because we won't be able to find 
$3$-edges of $H$ -- but we don't need to; a triangle is the same as three normal edges with ``half'' the weight.
\begin{observation}
	Let $H = (V,\cE)$ be any hypergraph. Construct $\tH = (V,\widetilde{\cE})$ which is a hypergraph
	formed as follows: for each $e\in \cE$ with $|e| = 2$ or $|e|\geq 4$, we add {\em two} copies of $e$ to $\widetilde{\cE}$, 
	and for $e = \{a,b,c\}$ in $\cE$, we add $(a,b), (b,c)$ and $(c,a)$ to $\widetilde{\cE}$.	Then, for any $S\subseteq V$, 
	$\CUT_{\tH}(S) = 2\CUT_H(S)$. That is, $\tH$ and $2H$ are cut-equivalent. Furthermore, given any $\widetilde{e} \in \widetilde{\cE}$, 
	one can simulate $\CUT_{\tH - \widetilde{e}}(S)$ as $2\CUT_H(S) - 1$ if $S\cap e \neq \emptyset$ and $(V\setminus S) \cap e \neq \emptyset$, 
	and $2\CUT_H(S)$ otherwise.
\end{observation}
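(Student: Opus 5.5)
The statement has two parts: the cut identity $\CUT_{\tH}(S) = 2\CUT_H(S)$, and the simulation of $\CUT_{\tH-\widetilde e}$. I will prove the identity edge by edge. Both sides are sums over hyperedges, so it suffices to fix $e\in\cE$ and show that the total contribution to $\CUT_{\tH}(S)$ of the $\widetilde\cE$-edges generated from $e$ equals twice the contribution of $e$ to $\CUT_H(S)$. The latter is $2$ if $e$ is cut by $S$ (meets both $S$ and $V\setminus S$) and $0$ otherwise. I split into two cases by the size of $e$.

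\emph{Case $|e|=2$ or $|e|\geq 4$.} Here $e$ contributes two identical copies of itself, so their contribution is $2\cdot\mathbf{1}[e \text{ cut by } S]$, as required.

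\emph{Case $e=\{a,b,c\}$.} If $S$ does not cut $e$, then $a,b,c$ all lie on the same side of $S$, so none of the pairs $(a,b),(b,c),(c,a)$ is cut and the contribution is $0$. If $S$ cuts $e$, then exactly one or exactly two of $a,b,c$ lie in $S$. By symmetry (replacing $S$ with $V\setminus S$ if needed) exactly one vertex, say $a$, is separated from the other two. Then $(a,b)$ and $(c,a)$ are cut and $(b,c)$ is not, giving exactly $2$.

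Summing over $e$ yields $\CUT_{\tH}(S)=2\CUT_H(S)$. This step is the main (mild) point of the argument: it is the fact that a triangle's cut value is always $0$ or $2$, matching twice the $3$-edge's cut value. Weights extend linearly, since each generated edge inherits $w(e)$.

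For the simulation claim, I will apply \Cref{obs:triv} to $\tH$: removing $\widetilde e$ lowers the cut value by $1$ exactly when $S$ cuts $\widetilde e$, and leaves it unchanged otherwise. Combining this with $\CUT_{\tH}(S)=2\CUT_H(S)$ gives the formula. One detail needs checking: the condition must refer to $\widetilde e$ itself. When $\widetilde e$ is a pair $(a,b)$ coming from a triangle, the relevant condition is whether $S$ separates $a$ and $b$; reading the statement's $e$ as $\widetilde e$ makes it correct. Iterating the same reasoning over any learned set $\widetilde F\subseteq\widetilde\cE$ handles removal of several edges. Each query to $\tH - \widetilde F$ then costs one $\CUT_H$ query.
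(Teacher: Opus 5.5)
Your proof is correct and follows the paper's argument: additivity of $\CUT$ over hyperedges reduces the identity to a single $3$-edge, and any cut separating a triangle isolates one vertex from the other two, so exactly two of the three pairs are cut. You are also right that the simulation condition must be read with $\widetilde e$ itself rather than the original triangle $e$; with that reading, the second part follows from \Cref{obs:triv} as you describe.
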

\noindent
Since $\CUT()$ is additive over the hyperedges, it suffices to argue for a single edge $e \in H$ with $|e| = 3$. Now one can check this brute-force; the main reason is that
any relevant cut looks the same: one vertex of $e$ on one side and two on the other. This is, unfortunately, particular to $|e| = 3$, and we are not able to generalize to $|e| = 5$ or other odd hyperedges. More precisely, if we could replace 
all odd-sized hyperedges with even-sized hyperedges and be cut-equivalent, then we would be able to generalize~\Cref{rslt:bnd-even} to all bounded hypergraphs. 

Armed with the above observation, we can obtain~\Cref{rslt:lin} as follows. We assume we are working with $\tH$ and since there are no $|e| = 3$, 
\Cref{alg:e-llh} (or rather its spanning subhypergraph version) will find a spanning subhypergraph $H_1$ of $H$ in $O(n^{1.5}\log n)$ many $\CUT$ queries.
Therefore, we can apply~\Cref{thm:ni} to find $H_1, \ldots, H_{2k}$ such that their union is a $2k$-sparse certificate of $\tH$. 
Since $\tH$ and $2H$ are cut-equivalent, this union is a $k$-sparse certificate (but not a subhypergraph) of $H$. 
\Cref{rslt:lin} is established fully by noting that in $O(n^2\log n)$ many $\CUT$-queries, we can learn all of $\tH$.

\appendix
\section{Odd Hypergraphs with Same Cut Profile}
\label{sec:app?}

In this section we show that odd parity of cardinality of edges is a key hindrance to learning hypergraphs. In particular, we prove the following theorem.
	\begin{theorem}
		Let $k\geq 3$ be an odd number. There are two $k$-uniform hypergraphs $H_0$ and $H_1$ on the same labeled vertex set $V$ whose
		cut-profiles are the same.
	\end{theorem}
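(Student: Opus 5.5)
The plan is to take $V$ to be a set of $k+1$ vertices, let $H_0$ consist of all $k$-subsets of $V$ of even ``type'' and $H_1$ of all $k$-subsets of odd type, and check that the two cut-profiles agree using the M\"obius-type parity computations already in \Cref{lem:mobA-g}. Concretely, each $k$-subset of a $(k+1)$-set is $V\setminus\{v\}$ for a unique missing vertex $v$. This means I must split $V$ into two classes. Since $k+1$ is even, I choose a partition $V = V_0\sqcup V_1$ with $|V_0|=|V_1|=(k+1)/2$. Then $H_i := \{V\setminus\{v\} : v\in V_i\}$. These are distinct $k$-uniform hypergraphs on the same labeled vertex set (disjoint, nonempty edge sets).

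\emph{Cut computation.} Fix $S\subseteq V$ with $0<|S|<k+1$; the cases $S=\emptyset$ and $S=V$ give zero for both hypergraphs. An edge $V\setminus\{v\}$ fails to be cut by $S$ only if it lies entirely inside $S$ or entirely inside $V\setminus S$. Since it has $k$ vertices, this happens only if $S=V\setminus\{v\}$ or $V\setminus S = V\setminus\{v\}$.

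\emph{Case analysis.}
\begin{itemize}
\item If $2\le |S|\le k-1$, every edge is cut, so $\CUT_{H_0}(S)=\CUT_{H_1}(S)=(k+1)/2$.
\item If $|S|=1$, say $S=\{u\}$, exactly one edge of the complete family is uncut, namely $V\setminus\{u\}$. It belongs to $H_0$ or $H_1$ according to whether $u\in V_0$ or $u\in V_1$. So one hypergraph has cut $(k+1)/2$ and the other $(k+1)/2-1$.
\end{itemize}

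This is the obstacle: the naive split fails at singletons. So the construction must be modified so that for every singleton, and symmetrically every $k$-set, the number of uncut edges in $H_0$ and in $H_1$ is equal. With one block of $k+1$ vertices this is impossible, so I would instead use a larger ground set and a signed combinatorial design. The right framework is \Cref{lem:mobA-g} read in reverse. The cut function of a multiset of edges is a linear function of the edge-indicator vector, so $H_0$ and $H_1$ have equal profiles iff the signed combination $\chi_{H_0}-\chi_{H_1}$ lies in the kernel of the map $e\mapsto \CUT$-vector.

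The cut-vector of a single edge $e$ is $\mathbf 1 - [S\cap e=\emptyset] - [e\subseteq S]$. So I need a signed combination $\sum_e c_e$ of $k$-sets such that, for every $S$,
\[
\sum_e c_e\bigl([e\cap S=\emptyset]+[e\subseteq S]\bigr)=0 \quad\text{and}\quad \sum_e c_e=0 .
\]
My plan is to take $V=[2k]$ and $c_e=(-1)^{|e\cap [k]|}$ over all $k$-subsets $e$, or more simply to pair vertices into $k$ pairs $\{a_i,b_i\}$ and take the transversals $e_\sigma=\{\sigma_i\in\{a_i,b_i\}\}$ with sign $(-1)^{\#b\text{'s in }\sigma}$.

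For transversals, $\sum_\sigma (-1)^{|\sigma|_b}[e_\sigma\subseteq S]$ factors as $\prod_i([a_i\in S]-[b_i\in S])$. Likewise, $\sum_\sigma (-1)^{|\sigma|_b}[e_\sigma\cap S=\emptyset]$ factors as $\prod_i([a_i\notin S]-[b_i\notin S]) = \prod_i(-1)([a_i\in S]-[b_i\in S])$, which equals $(-1)^k$ times the first product. Since $k$ is odd, the two terms cancel for every $S$. The total $\sum_\sigma(-1)^{|\sigma|_b}=0$ as well.

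So $H_0$, the transversals with an even number of $b$'s, and $H_1$, those with an odd number, are distinct $k$-uniform hypergraphs on $2k$ vertices with identical cut-profiles. This uses exactly the odd parity, mirroring \Cref{lem:mobA-g}. I expect this factorization step to be the crux; everything else is bookkeeping.
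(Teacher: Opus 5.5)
Your final construction is exactly the paper's: the $2k$ vertices are paired as $\{a_i,b_i\}$, and $H_0$, $H_1$ are the transversals with an even, respectively odd, number of $b$'s. Your verification via $\prod_i([a_i\in S]-[b_i\in S])$ is correct, and it is an algebraic repackaging of the paper's bijections. The product vanishes exactly when some pair is not split by $S$, which is the paper's single-coordinate flip for inconsistent $S$. Otherwise it picks up the sign $(-1)^k$ under complementation, which is the paper's global bit-flip for consistent $S$.

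Drop the other option you float, namely all $k$-subsets of $[2k]$ signed by $(-1)^{|e\cap[k]|}$, because it does not work. Take $k=3$ and $S=\{1,2,3,4\}$. The even family has $3$ uncut edges ($\{1,2,4\},\{1,3,4\},\{2,3,4\}$), while the odd family has only $\{1,2,3\}$ uncut, so the cut values are $7$ and $9$.
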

	\begin{proof}
	The vertex set $V$ has $2k$ vertices named $v_1, \ldots, v_k$ and $\bar{v}_1, \ldots, \bar{v}_k$.
	The set of edges $e\in H_0$ (and in $\in H_1$, respectively) are all subsets of $V$ such that $e$ contains {\em exactly} one
	vertex from $\{v_i, \bar{v}_i\}$ for $1\leq i\leq k$ {\em and} the number of $\bar{v}_i$'s is {\em even} (respectively, {\em odd}).
	Thus, every edge has exactly $k$ vertices. 
	The number of edges in $H_0$ is $\sum_{0\leq i\leq k: i~\text{even}} \binom{k}{i}$ and the number of edges in $H_1$ is
	$\sum_{0\leq i\leq k: i~\text{odd}} \binom{k}{i}$. Since $k$ is odd, both expressions are the same and equal to $2^{k-1}$.

	Let $\CUT_0$ and $\CUT_1$ be cut-functions in $H_0$ and $H_1$, respectively. We want to argue that $\CUT_0(S) = \CUT_1(S)$ for any $S\subseteq V$. 
	It will be convenient to count the number of edges {\em not counted} in $\CUT_i(S)$ for $i\in \{0,1\}$; in particular, define $g_i(S)$ to be the number 
	of edges $e\in E_i$ such that $e\subseteq S$ or $e\subseteq V\setminus S$. So, $\CUT_i(S) = 2^{k-1} - g_i(S)$, and so it suffices to show $g_0(S) = g_1(S)$.
	
	Call $S$ {\em consistent} if $|S\cap \{v_i, \bar{v}_i\}| = 1$ for all $1\leq i\leq k$.
	Define a map  $\phi_S : E(H_0) \to E(H_1)$ by {\em flipping the bits} as follows: for $e\in E(H_0)$, 
	we define $\phi(e)$ to contain $v_i$ if $\bar{v}_i \in e$ or vice-versa. 
	Note that if the number of ``barred'' vertices in $e$ is $j$, then the number of ``barred vertices'' in $\phi_S(e)$ is $k-j$. Since $k$ is odd, we get that $\phi_S(e) \in E(H_1)$.
	It is easy to see this is a bijection. Furthermore, note that
	$e \subseteq S \Leftrightarrow \phi_S(e) \subseteq V\setminus S$ and $e\subseteq V\setminus S \Leftrightarrow \phi_S(e) \subseteq S$.
	And so there is a bijection between the set of uncut edges proving $g_0(S) = g_1(S)$ for consistent subsets.
	
	Let $S$ be an {\em inconsistent} subset, and let $i$ be the smallest index such that $\{v_i, \bar{v}_i\} \subseteq S$ (the case of $\subseteq V\setminus S$ would be analogous.).
	Now note that no edge $e$, either in $H_0$ or $H_1$, can be a subset of $V\setminus S$ (in the analogous case, subset of $S$). Now, for $e\in E(H_0)$, define
	$\phi_S(e)$ to be the edge where the parity of $v_i$ is flipped; that is, if $v_i\in e$ then $\bar{v}_i \in e$ and everything else remains the same.
	It is easy to see this is a bijection. Furthermore, if $e\subseteq S$ then $\phi_S(e) \subseteq S$. And this shows that $g_0(S) = g_1(S)$ for such 
	an inconsistent subset as well. This completes the proof.
	\end{proof}
\noindent
When $k=3$, the above two hypergraphs are indeed {\em linear}; in fact, every pair of hyperedges intersect in exactly one vertex.

\bibliographystyle{plain}
\bibliography{main}

@String { ACM             = {The OX Association for Computing Machinery} }

@String { alt             = {Proc., International Conference on Algorithmic Learning Theory (ALT)} }

@String { colt            = {Proc., Conf. on Learning Theory (COLT)} }

@String { esa             = {Proc., European Symposium on Algorithms (ESA)} }

@String { focs            = {Proc., IEEE Conference on the Foundations of Computer Science (FOCS)} }

@String { fsttcs          = {Proc., FSTTCS} }

@String { icalp           = {Proc., International Conference on Algorithms, Logic, and Programming (ICALP)} }

@String { itcs           = {Proc., Innovations in Theoretical Computer Science (ITCS)} }

@String { pods            = {Proc., ACM Symposium on Principles of Database Systems (PODS)} }

@String { sidma           = {SIAM Journal on Discrete Mathematics (SIDMA)} }

@String { soda            = {Proc., ACM-SIAM Symposium on Discrete Algorithms (SODA)} }

@String { stoc            = {Proc., ACM Symposium on the Theory of Computing (STOC)} }

@article{GaoC22,
	title={A Linear Hypergraph Extension of Tur{\'a}n's Theorem},
	author={Gao, Guorong and Chang, An},
	journal={The Electronic Journal of Combinatorics},
	pages={P4--41},
	year={2022}
}

@article{ImL25,
	title={Dirac’s theorem for linear hypergraphs},
	author={Im, Seonghyuk and Lee, Hyunwoo},
	journal=sidma,
	volume={39},
	number={2},
	pages={834--847},
	year={2025}
}

@article{JanzerST24,
	title={Regular subgraphs of linear hypergraphs},
	author={Janzer, Oliver and Sudakov, Benny and Tomon, Istv{\'a}n},
	journal={International Mathematics Research Notices},
	volume={2024},
	number={17},
	pages={12366--12381},
	year={2024}
}

@inproceedings{JiangNS26,
	title={Minimum Cuts with Fewer Cut Queries},
	author={Jiang, Yonggang and Nanongkai, Danupon and Sawettamalya, Pachara},
	booktitle=soda,
	pages={258--296},
	year={2026}
}

@inproceedings{KennethK26a,
	title={All-Pairs Minimum Cut using Cut Queries},
	author={Kenneth-Mordoch, Yotam and Krauthgamer, Robert},
	booktitle=soda,
	pages={4077--4095},
	year={2026}
}

@inproceedings{KennethK26b,
	title={Faster All-Pairs Minimum Cut: Bypassing Exact Max-Flow},
	author={Kenneth-Mordoch, Yotam and Krauthgamer, Robert},
	booktitle=stoc,
	pages={to appear},
	year={2026}
}

@inproceedings{RubinsteinSW18,
  author={Aviad Rubinstein and Tselil Schramm and S. Matthew Weinberg},
  title={Computing Exact Minimum Cuts Without Knowing the Graph},
  booktitle=itcs,
  pages={39:1--39:16},
  year={2018}
}

@inproceedings{MukhopadhyayN20,
  title={Weighted min-cut: sequential, cut-query, and streaming algorithms},
  author={Mukhopadhyay, Sagnik and Nanongkai, Danupon},
  booktitle=stoc,
  pages={496--509},
  year={2020}
}

@article{ApersEGLMN22,
  author={Simon Apers and Yuval Efron and Pawel Gawrychowski and Troy Lee and Sagnik Mukhopadhyay and Danupon Nanongkai},
  title={Cut query algorithms with star contraction},
  journal=focs,
  pages = {507--518},
  year={2022}
}

@inproceedings{Choi13,
  author={Sung{-}Soon Choi},
  title={Polynomial Time Optimal Query Algorithms for Finding Graphs with Arbitrary Real Weights},
  booktitle = colt,
  volume={30},
  pages={797--818},
  year={2013}
}

@inproceedings{AssadiCK21,
  title={Graph Connectivity and Single Element Recovery via Linear and OR Queries},
  author={Assadi, Sepehr and Chakrabarty, Deeparnab and Khanna, Sanjeev},
  booktitle=esa,
  year={2021}
}

@inproceedings{LeeMS21,
  title={Quantum algorithms for graph problems with cut queries},
  author={Lee, Troy and Santha, Miklos and Zhang, Shengyu},
  booktitle=soda,
  pages={939--958},
  year={2021}
}

@article{AuzaL21,
  title={On the query complexity of connectivity with global queries},
  author={Auza, Arinta and Lee, Troy},
  journal={arXiv preprint arXiv:2109.02115},
  year={2021}
}

@inproceedings{ChekuriQ21,
  title={Isolating cuts, (bi-)submodularity, and faster algorithms for connectivity},
  author={Chekuri, Chandra and Quanrud, Kent},
  booktitle=icalp,
  pages = {50:1--50:20},
  year={2021}
}

@article{Queyranne98,
  title={Minimizing symmetric submodular functions},
  author={Queyranne, Maurice},
  journal={Mathematical Programming},
  volume={82},
  pages={3--12},
  year={1998}
}

@phdthesis{Harvey08,
  title={Matchings, matroids and submodular functions},
  author={Harvey, Nicholas James Alexander},
  year={2008},
  school={Massachusetts Institute of Technology}
}

@InProceedings{ChakrabartyL23,
  title={A Query Algorithm for Learning a Spanning Forest in Weighted Undirected Graphs},
  author={Chakrabarty, Deeparnab and Liao, Hang},
  booktitle=alt,
  pages={259--274},
  year={2023}
}

@inproceedings{LiaoC24,
  author={Liao, Hang and Chakrabarty, Deeparnab},
  title={Learning Spanning Forests Optimally in Weighted Undirected Graphs with CUT queries},
  pages={785--807},
  booktitle=alt,
  year={2024}
}

@inproceedings{ChakrabartyL24,
  author={Chakrabarty, Deeparnab and Liao, Hang},
  title={Learning Partitions Using Rank Queries},
  pages={16:1--16:14},
  booktitle=fsttcs,
  year={2024}
}

@inproceedings{AngluinC05,
  title={Learning a hidden hypergraph},
  author={Angluin, Dana and Chen, Jiang},
  booktitle=colt,
  pages={561--575},
  year={2005},
  organization={Springer}
}

@inproceedings{ChenKN20,
  title={Near-linear size hypergraph cut sparsifiers},
  author={Chen, Yu and Khanna, Sanjeev and Nagda, Ansh},
  booktitle=focs,
  pages={61--72},
  year={2020}
}

@inproceedings{ChenKN21,
  title={Sublinear Time Hypergraph Sparsification via Cut and Edge Sampling Queries},
  author={Chen, Yu and Khanna, Sanjeev and Nagda, Ansh},
  booktitle=icalp,
  pages = {53:1--53:21},
  year={2021}
}

@article{ChekuriX17,
	title={Minimum cuts and sparsification in hypergraphs},
	author={Chekuri, Chandra and Xu, Chao},
	journal={SIAM Journal on Computing},
	volume={47},
	number={6},
	pages={2118--2156},
	year={2018}
}

@InProceedings{KennethK23,
	author =	{Kenneth, Yotam and Krauthgamer, Robert},
	title =	{{Cut Sparsification and Succinct Representation of Submodular Hypergraphs}},
	booktitle =	icalp,
	pages =	{97:1--97:17},
	year =	{2024}
}

@inproceedings{Quanrud24,
  title={Quotient sparsification for submodular functions},
  author={Quanrud, Kent},
  booktitle=soda,
  pages = {5209--5248},
  year={2024}
}

@inproceedings{LiP20,
  title={Deterministic min-cut in poly-logarithmic max-flows},
  author={Li, Jason and Panigrahi, Debmalya},
  booktitle=focs,
  pages={85--92},
  year={2020}
}

@article{NagamochiI92,
  title={A linear-time algorithm for finding a sparse k-connected spanning subgraph of a $k$-connected graph},
  author={Nagamochi, Hiroshi and Ibaraki, Toshihide},
  journal={Algorithmica},
  volume={7},
  number={1},
  pages={583--596},
  year={1992},
  publisher={Springer}
}

@inproceedings{GuhaMT15,
  title={Vertex and hyperedge connectivity in dynamic graph streams},
  author={Guha, Sudipto and McGregor, Andrew and Tench, David},
  booktitle=pods,
  pages={241--247},
  year={2015}
}

@inproceedings{AnandSW25,
  title={Deterministic Edge Connectivity and Max Flow using Subquadratic Cut Queries},
  author={Anand, Aditya and Saranurak, Thatchaphol and Wang, Yunfan},
  booktitle={Proceedings of the 2025 Annual ACM-SIAM Symposium on Discrete Algorithms (SODA)},
  pages={124--142},
  year={2025}
}

@InProceedings{KennethK25,
	author =	{Kenneth-Mordoch, Yotam and Krauthgamer, Robert},
	title =	{{Cut-Query Algorithms with Few Rounds}},
	booktitle =	esa,
	pages =	{100:1--100:14},
	year =	{2025}
}

@article{Jiang22,
  title={Minimizing convex functions with rational minimizers},
  author={Jiang, Haotian},
  journal={Journal of the ACM},
  volume={70},
  number={1},
  pages={1--27},
  year={2022},
  publisher={ACM New York, NY}
}

@inproceedings{ChakrabartyGJS22,
  title={Improved lower bounds for submodular function minimization},
  author={Chakrabarty, Deeparnab and Graur, Andrei and Jiang, Haotian and Sidford, Aaron},
  booktitle=focs,
  pages={245--254},
  year={2022}
}

@inproceedings{KoganK15,
  title={Sketching cuts in graphs and hypergraphs},
  author={Kogan, Dmitry and Krauthgamer, Robert},
  booktitle={Proceedings of the 2015 Conference on Innovations in Theoretical Computer Science},
  pages={367--376},
  year={2015}
}

@inproceedings{KhannaPS24,
  title={Near-optimal size linear sketches for hypergraph cut sparsifiers},
  author={Khanna, Sanjeev and Putterman, Aaron and Sudan, Madhu},
  booktitle=focs,
  pages={1669--1706},
  year={2024}
}
\end{document}